\documentclass{article}

\usepackage{graphicx} 
\usepackage[utf8]{inputenc}
\usepackage[affil-it]{authblk}
\usepackage{graphicx}
\usepackage{subcaption}
\usepackage{comment,amsmath,amsbsy,amssymb,graphicx,dsfont,upgreek,textcomp,braket,setspace,blindtext,hyperref,verbatim,amsthm,mathrsfs,mathtools,graphicx,float,enumerate,cleveref,cases,bigints}
\usepackage[font={scriptsize,it}]{caption}
\usepackage[margin=1in]{geometry}
\hypersetup{
    colorlinks=true, 
    linktoc=all,     
    linkcolor=blue,  
}
\usepackage{tocloft} \setlength\cftparskip{-5pt}
\usepackage[table,svgnames]{xcolor}
\usepackage{tikz}
\usetikzlibrary{arrows}
\usepackage[toc,page]{appendix}
\usepackage{xcolor}

\Crefname{appsec}{appendix}{appendices}
\numberwithin{equation}{section}

\newtheorem{lemma}{Lemma}[section]
\newtheorem{corollary}{Corollary}[section]
\newtheorem{remark}{Remark}[section]
\newtheorem{definition}{Definition}[section]
\newtheorem{proposition}{Proposition}[section]

\title{\Large Nonlinear excitations in multi-dimensional nonlocal lattices\vspace{-1ex}}
\author{Brian Choi \vspace{-2ex}}
 
\author{Brian Choi\thanks{Corresponding author. University of Tennessee at Chattanooga, \texttt{choigh@bu.edu}\\
\textbf{Keywords:} nonlocal dynamics, excitation threshold, localization, breather, dispersive decay}
\vspace{-2ex}  }

\date{}
\begin{document}
\maketitle\vspace{-4ex}
\maketitle
\begin{abstract}
We study the formation of breathers in multi-dimensional lattices with long-range interactions. By variational methods, the exact relationship between various parameters (dimension, nonlinearity, nonlocal parameter $\alpha$) that defines positive excitation thresholds is characterized. We establish a sharp mass-threshold dichotomy: no positive threshold in the mass-subcritical regime, and a strictly positive threshold at and above the critical regime. In the anti-continuum regime, a family of unique ground states characterizes the excitation thresholds, enabling explicit computations. Analytic formulas of the excitation thresholds, determined by the ground states, are derived and corroborated with numerical simulations. We not only characterize the sharp spatial decay of ground states, which varies continuously in $\alpha$, but also identify the time decay of dispersive waves, which undergoes a discontinuous transition in $\alpha$.
\end{abstract}
\noindent \textbf{MSC 2020} 34A08, 34A12, 37K58, 37K60, 37K40


\section{Introduction}
Discrete breathers on nonlinear lattices, solitary waves that are localized in space and periodic in time, have been an active area of research in both experiments and mathematical studies for decades \cite{kartashov2011solitons,kevrekidis2011non}. Our focus is on discrete Hamiltonian systems with long-range interactions (LRI) whose equations of motion provide a variety of dynamics ranging from breathers with slowly decaying tails to nonlinear dispersive decay due to an interplay between linear dispersion (or lattice diffraction) characterized by the nonlocal lattice coupling and nonlinearity. LRI arises naturally in the numerical discretization of fractional differential equations, among many others. Relevant to physical applications and numerical studies are the following non-exhaustive references: \cite{mingaleev1999models} (energy transport in biomolecules), \cite{efremidis2002discrete,szameit2008long} (optical waveguide arrays), \cite{liu2023experimental} (nonlinear optical realization of the fractional Schr\"odinger equation), \cite{gaididei1997effects,choudhury1996long,korabel2007transition} (quantum systems), \cite{kevrekidis2009non,kevrekidis2013drastic,christiansen1998solitary} (existence and stability of breathers), and \cite{flach1998breathers,moleron2019nonlinear} (breathers in the Klein-Gordon-type coupled oscillators with global coupling). Among many sub-topics related to breathers and LRI, our interest is in extending the theory of excitation thresholds beyond the nearest-neighbor interaction (NNI) \cite{weinstein1999excitation,flach1997energy} to LRI with a nonlocal generalization of the discrete nonlinear Schr\"odinger equation (DNLS) as the main model; see also \cite{kastner2004energy,jcuevas2008thresholds,cuevas2010energy,karachalios2013breathers,cheng2015solitary,cheng2016gap} for excitation thresholds of DNLS featuring nonlinear hopping, saturable nonlinearities, and periodic/gap solitons.

To illustrate the existence of excitation thresholds (see \eqref{def:excite}, \eqref{excitation_threshold} for a precise definition), consider the focusing DNLS
\begin{equation}\label{dnls}
    i \dot{u}_n = -\Delta_d u_n - |u_n|^{p-1} u_n
\end{equation}
where $u:\mathbb{Z}^d \times \mathbb{R} \rightarrow \mathbb{C},\ p > 1$ and $\Delta_{d} u_n := -2d u_n + \sum \limits_{|m-n|=1} u_m$ with $|\cdot|$ being the $l^1$ norm on $\mathbb{Z}^d$. Comparing \eqref{dnls} with the NLS $i \partial_t u = -\Delta u - |u|^{p-1} u$, both \eqref{dnls} and NLS are invariant under the time-translation and gauge symmetry, which yield energy and mass conservation, respectively. Although the solutions of NLS are invariant under $u(x,t) \mapsto \lambda^{-\frac{2}{p-1}} u(\frac{x}{\lambda},\frac{t}{\lambda^2})$ for $\lambda > 0$, there is no such scaling symmetry for \eqref{dnls} due to the underlying lattice structure.

For $\omega > 0,\ Q(x) \in \mathbb{R}$, substituting $u(x,t) = e^{i \omega t}Q(x)$ into NLS, we obtain a nonlinear eigenvalue equation $\omega Q = \Delta Q + |Q|^{p-1} Q$. Since $\| \lambda^{-\frac{2}{p-1}} Q(\frac{x}{\lambda}) \|_{L^2} = \lambda^{\frac{d}{2} - \frac{2}{p-1}} \| Q \|_{L^2}$, if $Q$ is a non-trivial solution for some $\omega > 0$ and $p \neq 1 + \frac{4}{d}$, another non-trivial solution to the eigenvalue equation with $\omega^\prime>0$ with an arbitrarily small $L^2$-norm can be constructed by the scaling symmetry. In other words, the NLS allows a family of breather solutions at any $L^2$-norm (zero excitation threshold, or in short, $\nu_0 = 0$) unless $p = 1 + \frac{4}{d}$; if $p = 1 + \frac{4}{d}$, then any initial data with $L^2$-norm less than $\| Q \|_{L^2}$, where $Q \in H^1(\mathbb{R}^d)$ is the ground state of NLS, disperse and decay to zero in $L^\infty(\mathbb{R}^d)$ as $t \rightarrow \infty$ \cite{cmp/1103922134} and therefore some positive $L^2$-norm ($\nu_0 > 0$) is required for localization to persist globally in time.

With the lack of scaling symmetry, the dynamics of lattice systems exhibits distinct behavior relative to their continuum analogue. By variational methods, the existence of excitation thresholds for DNLS was shown and classified in \cite{weinstein1999excitation}, providing a rigorous proof of the conjecture originally formulated in \cite{flach1997energy}. More precisely, it was shown that $\nu_0 > 0$ if and only if $p \geq 1 + \frac{4}{d}$. Motivated by the work above, our goal is to investigate the role of LRI in excitation thresholds (\Cref{p2,p1}). For the coupling strength decaying algebraically as $|n|^{-(d+\alpha)}$, we show that any power-type nonlinearity is mass-supercritical for $0 < \alpha \ll 1$, and hence $\nu_0 > 0$. Our definition of LRI requires the coupling coefficients to be summable, and therefore $|n|^{-d}$ is not considered. However, \Cref{p4} considers this singular case under a logarithmic weight where it is shown that $\nu_0 > 0$ for any $p>1$. On the other hand, when $\alpha \geq 2$, the effect of LRI is replaced by that of NNI, which recovers the result of \cite{weinstein1999excitation}.

Once the criteria of existence/non-existence of ground states have been identified, it is of interest to study the spatial profile of ground states, their stability properties, and a generic dynamical behavior of initial data near ground states as $t \rightarrow \infty$. We revisit the topic of the slowly-decaying tail behavior of ground states \cite{gaididei1997effects,flach1998breathers,moleron2019nonlinear} induced by LRI and show the fully algebraic nature of the decay of \eqref{main_model} via rigorous error analysis including the sharp order of decay. At the anti-continuum regime (\Cref{onsite}), the ground states are unique, modulo symmetries, and are minimizers of the Gagliardo-Nirenberg type inequalities \eqref{gns5} defined similarly to the Weinstein functional \cite{cmp/1103922134}. This determines an explicit formula for the excitation thresholds. In \Cref{applications}, we numerically confirm the excitation of breather-like solutions for localized initial data with sufficiently large mass. The solutions decompose into a hump of peak intensity and radiation where the radiation component, for any time, decays algebraically in space, consistent with \Cref{algebraic_decay}. For sufficiently small mass, the solutions exhibit global scattering and nonlinear dispersive decay. It is shown, by a detailed study on the polylogarithms (see also \cite[Appendix]{gori2013modulational}), that $\alpha = 1$ is the threshold nonlocal parameter at which the order of dispersive decay exhibits a discontinuous transition (see \Cref{p5}); interestingly, this dichotomy persists in the continuum limit as suggested in \cite[Remark 2]{hong2019strong}. In general, the non-analyticity of polylogarithms at the origin affects the dispersive properties of \eqref{main_model}, and thus has consequences on the global dynamical behavior.

An outline of this paper is as follows: basic definitions are given in \Cref{formulation}; \Cref{optimization} revisits the method of concentration compactness in obtaining breather solutions and confirms its applicability to LRI; the sharp decay of ground states is given in \Cref{decay3}; the precise relationship between $d,p,\alpha$ and the excitation threshold is given in \Cref{ET}; that the ground states near the anti-continuum regime are unique minimizers of the excitation threshold functional is given in \Cref{ET2}; in \Cref{applications}, the initial value problem in 1D lattice is given as an application of previous ideas.
\subsection{Formulation}\label{formulation}

The behavior of breathers depends crucially on the nonlocal coupling $J = (J_n)$ and its associated tail index $\alpha \in (0,\infty]$. For $k \in \mathbb{N}$, let $\mu_k$ be the probability measure on $S_k = \{n \in \mathbb{Z}^d: |n|=k\}$ given by $\mu_k \{n\} = \frac{J_n}{\sum\limits_{|m|=k} J_m}$. Let $\mathbb{S}^{d-1} \subseteq \mathbb{R}^d$ be the standard unit sphere under the Euclidean norm $|\cdot|_{2}$.

\begin{definition}\label{def:kernel}

Let $J:\mathbb{Z}^d \rightarrow [0,\infty)$ satisfy $J(0,\dots,0) = 0$ (no self-interaction), $J_{n} = J_{-n}$ for all $n \in \mathbb{Z}^d$ (symmetry), and $\overline{J} := \sum\limits_{m \in \mathbb{Z}^d} J_m < \infty$ (summability). Let $\eta:\mathbb{N} \rightarrow [0,\infty)$ be an envelope such that (i) if $\alpha \in (0,\infty)$, assume there exist $0< C_1 < C_2$ and $k_0 \in \mathbb{N}$ such that $C_1 \leq k^{d+\alpha} \eta(k) \leq C_2$ for all $k \geq k_{0}$; (ii) if $\alpha = \infty$, assume $\lim\limits_{k \rightarrow \infty} k^{d+\beta} \eta(k) = 0$ for any $\beta < \infty$. Further assume 
\begin{enumerate}
    \item[(I)] \textbf{Stable-like in the far-field:} there exist $R, c_{\pm} > 0$ such that for any $k \geq R$ and $n \in \mathbb{Z}^d$ with $|n| \geq R$, 
    \begin{equation*}
    J_n \leq c_{+} \eta(|n|),\qquad \overline{J}_k := \sum_{m \in S_k} J_m \geq c_{-} k^{d-1} \eta(k).    
    \end{equation*} 
    
    \item[(II)] \textbf{Nearest-neighbor nondegeneracy:}
    \begin{equation*}
        \min_{m \in S_1} J_m > 0.
    \end{equation*}    
\end{enumerate}
\end{definition}

Define
\begin{equation}\label{operator}        Lu_n = \sum_{m \in \mathbb{Z}^d \setminus{\{n\}}} J_{n-m} (u_n - u_m),
\end{equation}
from which our main model is given by
\begin{equation}\label{main_model}
    i \dot{u}_n = \kappa L u_n - |u_n|^{p-1} u_n;\ -\omega q_n = \kappa L q_n - |q_n|^{p-1} q_n,
\end{equation}
where the time-independent model is obtained by taking $u_n = e^{i \omega t} q_n$ for $\omega > 0$; unless the anti-continuum limit is discussed, assume $\kappa = 1$. The two conserved quantities
\begin{equation*}
M[u] := \| u \|_{l^2}^2 = \sum_{n} |u_n|^2;\ H[u] := \frac{1}{2} \langle Lu, u\rangle -\frac{\| u \|_{l^{p+1}}^{p+1}}{p+1}
\end{equation*}
are used throughout. By direct estimation, $L$ is bounded on $l^2(\mathbb{Z}^d)$ with an upper bound in operator norm $2\overline{J}$. Define the discrete Fourier transform $\widehat{u}(\xi) = \mathcal{F}[u](\xi) = \sum\limits_{m \in \mathbb{Z}^d} u_m e^{-i m \cdot \xi}$ for $\xi \in \mathbb{T}^d = (-\pi,\pi]^d$, and let $|\xi|$ denote the Euclidean norm on the Fourier domain. Then, the symbol of $L$ 
\begin{equation*}
\sigma(\xi) = 2\sum\limits_{m \neq 0} J_{m} \sin^2 \left(\frac{m \cdot \xi}{2}\right)    
\end{equation*}
is given as a real-valued function, not complex-valued, due to the even symmetry of $J$. The nearest-neighbor nondegeneracy implies that $\sigma(\xi) = 0$ if and only if $\xi = 0$ on $\mathbb{T}^d$, and therefore \eqref{operator} defines a positive-definite operator. The behavior of $\sigma(\xi)$ at the origin (see \Cref{l3}) plays a key role in determining excitation thresholds. We show that the symbol is bounded above and below by that of the (fractional) continuum Laplacian, thereby extending \cite[Appendix A.1]{kirkpatrick2013continuum} into a higher-dimensional setting. The limit condition $k^{1+\alpha} J_{k} \xrightarrow[k\rightarrow \infty]{} C_{J} \in (0,\infty)$ was imposed in \cite{kirkpatrick2013continuum} to study continuum limit. We relax the tail behavior from the limit to bounds, since our model is inherently discrete.

Examples of $J$ with $\alpha = \infty$ include the graph Laplacian, when $J_n = 1$ for $|n|=1$ and $J_n = 0$ otherwise, and any coupling that decays exponentially independent of lattice directions. For $\alpha < \infty$, our assumption on $\eta$ generalizes the radial kernel $J_{n} = |n|^{-(d+\alpha)}$. Although the summability fails for $|n|^{-d}$ for $\alpha = 0$, excitation thresholds are obtained under logarithmic refinements in \Cref{p4}. In \Cref{non-radial}, a specific example of non-radial coupling is considered. A similar phenomenon could potentially arise from direction-dependent LRI, which could have promising applications in photonics and optoelectronics \cite{li2021review}. For mathematical perspectives, see \cite{gou2023solitary,CHOI2022100406} for the analyses of NLS-type models with anisotropic diffraction whose numerical discretization yields nonlocal non-radial coupling.

\section{Constrained optimization and ground states}\label{optimization}
Define
\begin{equation}\label{constrained}
    I_\nu = \inf_{q \in l^2(\mathbb{Z}^d)} \{H[q]: M[q] = \nu\}.
\end{equation}
If $\nu = 0$, then $q = 0$, and therefore assume $\nu > 0$. A ground state at mass $\nu$ is defined as a minimizer of $I_\nu$. This section has two goals: (1) establishing sufficient conditions for the existence of ground states (\Cref{l2}), and (2) deriving the precise spatial decay of these ground states (\Cref{algebraic_decay}).

\subsection{Existence of ground states}
\begin{lemma}\label{l1}
$-\frac{\nu^{\frac{p+1}{2}}}{p+1} \leq I_\nu \leq 0$.
\end{lemma}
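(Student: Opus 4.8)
The plan is to prove the two inequalities separately: the lower bound is an immediate consequence of the non-negativity of $L$, while the upper bound requires an explicit delocalizing test sequence.

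For the lower bound $I_\nu \ge -\nu^{(p+1)/2}/(p+1)$, I would first record that radiality of $J$ makes the quadratic form of $L$ manifestly non-negative: symmetrizing in $n\leftrightarrow m$ gives $\langle Lq,q\rangle = \tfrac12\sum_{n,m}J_{|n-m|}|q_n-q_m|^2\ge 0$ (equivalently, $\widehat{Lq}=\sigma\hat q$ with $\sigma\ge 0$). Hence $H[q]\ge -\frac{1}{p+1}\|q\|_{l^{p+1}}^{p+1}$ for every admissible $q$. It then remains to dominate the $l^{p+1}$ norm by the $l^2$ norm: since $p+1>2$, the elementary monotonicity $\|q\|_{l^{p+1}}\le \|q\|_{l^2}=\nu^{1/2}$ yields $\|q\|_{l^{p+1}}^{p+1}\le \nu^{(p+1)/2}$, and taking the infimum over the constraint set $M[q]=\nu$ gives the claim.

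For the upper bound $I_\nu \le 0$, I would exhibit a constraint-satisfying sequence along which $H\to 0$. Take $q^{(N)}$ equal to the constant $\sqrt{\nu/N}$ on a cube $S_N\subset\mathbb{Z}^d$ of $N=\ell^d$ sites and $0$ elsewhere, so that $M[q^{(N)}]=\nu$. The nonlinear term is computed exactly: $\|q^{(N)}\|_{l^{p+1}}^{p+1}=\nu^{(p+1)/2}N^{(1-p)/2}\to 0$ since $p>1$. For the quadratic form only pairs straddling $\partial S_N$ contribute, so $\langle Lq^{(N)},q^{(N)}\rangle = \frac{\nu}{N}\sum_{n\in S_N,\,m\notin S_N}J_{|n-m|}$. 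Bounding the inner sum by the tail $T(r):=\sum_{|k|\ge r}J_{|k|}$, which tends to $0$ as $r\to\infty$ because $\overline J<\infty$, and splitting $S_N$ into a boundary layer of width $r_0$ and its interior, one reaches a surface-to-volume estimate of the form $\frac{C r_0\overline J}{\ell}+T(r_0)$; letting $\ell\to\infty$ and then $r_0\to\infty$ drives this to $0$. Consequently $H[q^{(N)}]\to 0$, whence $I_\nu\le 0$.

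The only genuinely delicate point is the vanishing of $\langle Lq^{(N)},q^{(N)}\rangle$: because the coupling is long-range (algebraically decaying but not finitely supported), the naive finite-range boundary count fails, and a crude Lipschitz or second-moment bound would require $\alpha>2$. The fix is to use only summability, through the tail $T(r)$, so that the estimate is insensitive to the decay rate; alternatively the same conclusion follows from $\widehat{q^{(N)}}$ concentrating at $\xi=0$ together with $\sigma(0)=0$ and dominated convergence. Everything else is bookkeeping.
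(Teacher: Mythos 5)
Your proof is correct and follows essentially the same route as the paper: the lower bound from $\langle Lq,q\rangle\ge 0$ together with $\|q\|_{l^{p+1}}\le\|q\|_{l^2}$, and the upper bound from a flat plateau test function whose quadratic form vanishes by a boundary-layer/tail argument using only the summability $\overline J<\infty$. Your bookkeeping is in fact slightly cleaner than the paper's (exact mass normalization on a cube, and a single split of the \emph{inside} sites into a width-$r_0$ surface layer plus interior controlled by the tail $T(r_0)$, versus the paper's multi-region split of the outside sites), but the underlying idea — send the plateau size to infinity first, then the intermediate scale — is identical.
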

\begin{proof}
By $l^2 \hookrightarrow l^{p+1}$,
\begin{equation*}
H[q] \geq -\frac{\nu^{\frac{p+1}{2}}}{p+1} > - \infty.   \end{equation*}
Let $R>0$. Define $q_n = A := (\frac{\nu}{R^d})^{\frac{1}{2}}$ if $|n| \leq R$ and $q_n = 0$, otherwise; note that $M[q] \simeq \nu$. The second term in the Hamiltonian vanishes for large $R$ since
\begin{equation}\label{poten_est}
    \sum_n |q_n|^{p+1} \simeq \nu^{\frac{p+1}{2}} R^{-\frac{d(p-1)}{2}} \xrightarrow[R \rightarrow \infty]{} 0.
\end{equation}
In estimating $\langle Lq,q \rangle$, assume $|n| \leq R,\ |m| > R$ since otherwise, $|q_n - q_m|^2 = 0$.  Take $R>0,\ k_0 \in \mathbb{N}$ sufficiently large such that $k_0 \ll R$ and $(R + k_0)^d - R^d \simeq_d k_0 R^{d-1}$.

Consider $\mathcal{N} = \{|m^\prime| > R: \min\limits_{|n^\prime| \leq R} |m^\prime - n^\prime| < k_0 \}$. If $m \in \mathcal{N}$, then $J_{n-m}|q_n - q_m|^2 \leq \overline{J}\nu R^{-d}$. Since $|B(m;k_0) \cap B(0;R)| \leq |B(m;k_0)| \lesssim k_0^d$ and $|\mathcal{N}| \lesssim k_0 R^{d-1}$, we have
\begin{equation}\label{kinetic_est}
\sum_{m \in \mathcal{N},\ n \in B(m;k_0)} J_{n-m}|q_n - q_m|^2 \lesssim \nu \overline{J} k_0^{d+1}R^{-1}. 
\end{equation}
For $n \in B(0;R) \setminus B(m;k_0)$, set $j = n-m$. Then,
\begin{equation}\label{kinetic_est2}
    \sum_{m \in \mathcal{N},\ n \in B(0;R) \setminus B(m;k_0)} J_{n-m}|q_n - q_m|^2 \lesssim |\mathcal{N}| A^2 \sum_{|j|\geq k_0} J_{j} \lesssim k_0 \nu R^{-1} \sum_{|j|\geq k_0} J_{j}, 
\end{equation}
and therefore the RHS of \eqref{kinetic_est2} is bounded above by any $\epsilon > 0$ by taking $k_0, R$ sufficiently large, provided that $k_0 \ll R$.

Consider $m \in S := B(0;2R) \setminus (\mathcal{N} \cup B(0;R))$. A similar analysis yields
\begin{equation}\label{kinetic_est3}
    \sum_{m \in S,\ n \in B(0;R)} J_{n-m} |q_n - q_m|^2 \lesssim A^2 \sum_{|m| \leq 2R} \sum_{|j|\geq k_0} J_{j} \lesssim \nu \sum_{|j|\geq k_0} J_{j} = o(1),
\end{equation}
as $k_0 \rightarrow \infty$. Lastly let $|m| > 2R,\ |n| \leq R$. Then, the lower bound $|j| = |n-m| > R$ holds. Then the contribution to $\langle Lu,u\rangle$ is at most
\begin{equation}\label{kinetic_est4}
    \nu R^{-d} \sum_{|n| \leq R} \sum_{|m| > 2R} J_{n-m} \leq \nu R^{-d} \sum_{|n| \leq R} \sum_{|j| > R} J_{j} \lesssim \nu   \sum_{|j| > R} J_{j} = o(1),
\end{equation}
and hence the claim by \eqref{poten_est}, \eqref{kinetic_est}, \eqref{kinetic_est2}, \eqref{kinetic_est3}, \eqref{kinetic_est4}.
\end{proof}
The case when $I_\nu = 0$ can be characterized by the Gagliardo-Nirenberg-type inequality \eqref{gns}.
\begin{lemma}\label{l4}
    $I_\nu = 0$ if and only if
    \begin{equation}\label{gns}
        \| q \|_{l^{p+1}}^{p+1} \leq \frac{p+1}{2} \nu^{-\frac{p-1}{2}} \| q \|_{l^2}^{p-1} \langle Lq,q \rangle,\ \forall q \in l^2(\mathbb{Z}^d).    
    \end{equation}
\end{lemma}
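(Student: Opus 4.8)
The plan is to lean on two ingredients: the one-sided bound $I_\nu \le 0$ already furnished by \Cref{l1}, and the scale-invariance of the quotient underlying \eqref{gns}. First I would note that, since $I_\nu \le 0$ for every $\nu > 0$, the equality $I_\nu = 0$ holds if and only if $H[q] \ge 0$ for every admissible $q$ with $M[q] = \nu$. Unwinding $H[q] = \tfrac12 \langle Lq, q\rangle - \tfrac{1}{p+1}\|q\|_{l^{p+1}}^{p+1}$, this nonnegativity is precisely the sphere-restricted inequality
\[
\|q\|_{l^{p+1}}^{p+1} \le \tfrac{p+1}{2}\langle Lq, q\rangle, \qquad \|q\|_{l^2}^2 = \nu.
\]
Thus the lemma amounts to promoting this constrained inequality to the scale-invariant statement \eqref{gns} on all of $l^2(\mathbb{Z}^d)$.

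The bridge is homogeneity. Both sides of \eqref{gns} are homogeneous of degree $p+1$ under the dilation $q \mapsto \lambda q$: one has $\|\lambda q\|_{l^{p+1}}^{p+1} = \lambda^{p+1}\|q\|_{l^{p+1}}^{p+1}$, while $\|\lambda q\|_{l^2}^{p-1}\langle L(\lambda q), \lambda q\rangle = \lambda^{(p-1)+2}\|q\|_{l^2}^{p-1}\langle Lq, q\rangle$. Hence the quotient
\[
q \longmapsto \frac{\|q\|_{l^{p+1}}^{p+1}}{\|q\|_{l^2}^{p-1}\,\langle Lq, q\rangle}
\]
is constant along rays; it is well defined and positive for $q \ne 0$ because $\langle Lq, q\rangle = \tfrac12\sum_{n,m} J_{|n-m|}|q_n - q_m|^2$ vanishes only when $q$ is constant, i.e. $q \equiv 0$ in $l^2$. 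Therefore verifying \eqref{gns} for all $q$ is equivalent to verifying it on the single sphere $\|q\|_{l^2}^2 = \nu$, on which $\nu^{-\frac{p-1}{2}}\|q\|_{l^2}^{p-1} = 1$ and \eqref{gns} collapses exactly to the constrained inequality above.

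Concretely, for the forward implication I would take $q \ne 0$, set $\lambda = \sqrt{\nu}\,/\,\|q\|_{l^2}$ so that $\lambda q$ lies on the mass sphere, apply the constrained inequality to $\lambda q$, and cancel $\lambda^{p+1}$; the surviving factor $\lambda^{-(p-1)} = \nu^{-\frac{p-1}{2}}\|q\|_{l^2}^{p-1}$ is exactly the constant appearing in \eqref{gns}, and the case $q=0$ is trivial. Running the same rescaling backwards recovers $H[q] \ge 0$ on the sphere from \eqref{gns}, whence $I_\nu \ge 0$, and combined with \Cref{l1} this yields $I_\nu = 0$. I do not expect a genuine obstacle: the proof is pure homogeneity bookkeeping, and the only points demanding care are tracking the exponents in the rescaling and recognizing that the whole equivalence rests on the a priori bound $I_\nu \le 0$, without which the sign of $H$ on the constraint sphere could not pin down $I_\nu$.
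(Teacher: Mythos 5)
Your proof is correct and takes essentially the same route as the paper: the paper likewise treats \eqref{gns} as equivalent, by homogeneity, to the statement that $H[q]\ge 0$ whenever $M[q]=\nu$, deduces this from $I_\nu=0$ in one direction, and in the converse direction combines it with the bound $I_\nu\le 0$ from \Cref{l1}. The only difference is cosmetic — you spell out the rescaling $\lambda=\sqrt{\nu}/\|q\|_{l^2}$ and the exponent bookkeeping that the paper leaves implicit.
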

\begin{proof}
    \eqref{gns} is equivalent to
    \begin{equation}\label{gns3}
        M[q] = \nu \Rightarrow H[q] \geq 0.
    \end{equation}
    If $I_\nu = 0$, then $H[q] \geq I_\nu = 0$ whenever $M[q] = \nu$. Conversely, \eqref{gns3} implies $I_\nu \geq 0$, which implies $I_\nu = 0$ by \Cref{l1}.
\end{proof}
If there exists $\nu > 0$ such that \eqref{gns} holds, then the inequality holds with $\nu$ replaced by $0 < \nu^\prime < \nu$. Motivated by this observation, define
    \begin{equation}\label{def:excite}
        \nu_0 := \sup \{\nu: \eqref{gns} \text{\ holds}\} \geq 0.
    \end{equation}
By continuity, $I_{\nu_0} = 0$. Equivalently, define 
\begin{equation}\label{gns5}
    K = \inf_{u \in l^2(\mathbb{Z}^d) \setminus \{0\} } \frac{\| u \|_{l^2}^{p-1} \langle Lu,u \rangle}{\| u \|_{l^{p+1}}^{p+1}} \geq 0.
\end{equation}
Comparing $K$ with the coefficients of \eqref{gns}, we have
\begin{equation}\label{excitation_threshold}
\nu_0 = \left(\frac{p+1}{2}\kappa K\right)^{\frac{2}{p-1}},    
\end{equation}
where $\kappa >0$ is the coupling coefficient if $L$ were replaced by $\kappa L$. In particular, \eqref{excitation_threshold} correctly predicts that there is no excitation threshold in the anti-continuum limit $\kappa \rightarrow 0$; a unique ground state of mass $\nu > 0$ is given by $q_* = \nu^{\frac{1}{2}}\delta$ where $\delta$ is the Kronecker delta supported at the origin (modulo translation). For $\nu$ sufficiently large, ground states are found from the strict negativity of $I_\nu$ as \cite[Theorem 2.1]{weinstein1999excitation}.
\begin{proposition}\label{l2}
    Let $I_\nu < 0$ and $\{q^{(k)}\} \subseteq l^2(\mathbb{Z}^d)$ be a minimizing sequence. Then, there exists a subsequence $\{q^{(n_k)}\},\ m_k \in \mathbb{Z}^d,\ \phi_k \in \mathbb{R}$, and $q_* \in l^2(\mathbb{Z}^d)$ with non-negative entries such that $e^{i \phi_k} q^{(n_k)}(\cdot - m_k) \xrightarrow[k \rightarrow \infty]{} q_*$ in $l^2(\mathbb{Z}^d)$, and hence $I_\nu = H[q_*]$. There exists $\omega(\nu) > 0$ such that
    \begin{equation}\label{main_model3}
        -w q_* = L q_* - q_*^p.
    \end{equation}
\end{proposition}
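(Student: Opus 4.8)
The plan is to run Lions' concentration-compactness method, as in \cite{weinstein1999excitation}, adapted to the nonlocal quadratic form $\langle L\cdot,\cdot\rangle$. Since $M[q^{(k)}]=\nu$, the minimizing sequence is bounded in $l^2(\mathbb{Z}^d)$. I would first reduce to non-negative sequences: using the identity $\langle Lu,u\rangle=\tfrac12\sum_{n,m}J_{|n-m|}|u_n-u_m|^2$ together with the pointwise bound $\big||u_n|-|u_m|\big|\le|u_n-u_m|$, one gets the diamagnetic-type inequality $\langle L|u|,|u|\rangle\le\langle Lu,u\rangle$, while $M$ and $\|\cdot\|_{l^{p+1}}$ are invariant under $u\mapsto|u|$. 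Hence $H[|q^{(k)}|]\le H[q^{(k)}]$ and $\{|q^{(k)}|\}$ is again minimizing; this is where the phase $\phi_k$, and ultimately the non-negativity of $q_*$, enters. Applying the trichotomy to the mass densities $|q^{(k)}_n|^2$ then leaves one of three scenarios: \emph{vanishing}, \emph{dichotomy}, or \emph{compactness} up to translation.

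I would rule out \emph{vanishing} using that on the lattice the relevant concentration function at radius zero is simply $\sup_n|q^{(k)}_n|^2$, which vanishing forces to $0$. Since $\|q\|_{l^{p+1}}^{p+1}=\sum_n|q_n|^{p-1}|q_n|^2\le(\sup_n|q_n|^2)^{\frac{p-1}{2}}\|q\|_{l^2}^2$, vanishing would give $\|q^{(k)}\|_{l^{p+1}}\to0$ and hence $\liminf_k H[q^{(k)}]\ge0$, contradicting $I_\nu<0$. Thus vanishing is excluded by the hypothesis $I_\nu<0$ alone.

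\emph{Dichotomy} is excluded by strict subadditivity of $\nu\mapsto I_\nu$. The key is the amplitude scaling $q\mapsto\sqrt{a}\,q$, which sends mass $\nu$ to $a\nu$ and yields $H[\sqrt{a}\,q]=a\big(\tfrac12\langle Lq,q\rangle\big)-\tfrac{a^{(p+1)/2}}{p+1}\|q\|_{l^{p+1}}^{p+1}$. For $a>1$, since $a^{(p+1)/2}>a$, this is at most $aH[q]-\tfrac{a^{(p+1)/2}-a}{p+1}\|q\|_{l^{p+1}}^{p+1}$. Because $I_\nu<0$ forces any near-minimizer to satisfy $\|q\|_{l^{p+1}}^{p+1}\ge c>0$, passing to the infimum gives the strict bound $I_{a\nu}<aI_\nu$ for all $a>1$. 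Hence $\nu\mapsto I_\nu/\nu$ is strictly decreasing on $\{I_\nu<0\}$, from which $I_\nu<I_{\nu_1}+I_{\nu_2}$ for every splitting $\nu=\nu_1+\nu_2$ with $\nu_1,\nu_2>0$; this contradicts the superadditive lower bound $I_\nu\ge I_{\nu_1}+I_{\nu_2}$ produced by dichotomy. I expect this dichotomy step to be the main obstacle, not because of the scaling algebra but because of the nonlocal coupling: when the sequence splits into two lumps separated by distance tending to infinity, the cross terms in $\langle Lq,q\rangle$ arising from the long-range tail of $J$ must be shown negligible. This is controlled precisely by the summability $\overline{J}<\infty$, via estimates of the same flavor as \eqref{kinetic_est3}--\eqref{kinetic_est4} in the proof of \Cref{l1}.

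With vanishing and dichotomy ruled out, compactness holds: there exist $m_k\in\mathbb{Z}^d$ such that, along a subsequence, $|q^{(n_k)}|(\cdot-m_k)\rightharpoonup q_*$ with $\|q_*\|_{l^2}^2=\nu$, and the no-dichotomy conservation of mass upgrades this to strong convergence in $l^2(\mathbb{Z}^d)$. Since $\|\cdot\|_{l^{p+1}}\le\|\cdot\|_{l^2}$ and $L$ is bounded on $l^2$, both terms of $H$ pass to the limit, giving $H[q_*]=I_\nu$ with $q_*\ge0$ and $q_*\ne0$. Finally, $q_*$ minimizes $H$ under the constraint $M=\nu$, so a Lagrange multiplier $\lambda$ gives $Lq_*-q_*^p=\lambda q_*$; setting $\omega=-\lambda$ recovers \eqref{main_model3}. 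Pairing with $q_*$ yields $\omega\nu=\|q_*\|_{l^{p+1}}^{p+1}-\langle Lq_*,q_*\rangle$, and since $H[q_*]=I_\nu<0$ forces $\langle Lq_*,q_*\rangle<\tfrac{2}{p+1}\|q_*\|_{l^{p+1}}^{p+1}<\|q_*\|_{l^{p+1}}^{p+1}$, I conclude $\omega>0$.
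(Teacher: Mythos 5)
Your proposal is correct and follows essentially the same route as the paper's proof: the Lions trichotomy with vanishing excluded via $\| q^{(k)}\|_{l^\infty}$ controlling $\| q^{(k)}\|_{l^{p+1}}$, dichotomy excluded by strict subadditivity obtained from the amplitude scaling $q \mapsto \sqrt{\theta}\,q$ together with the observation that the long-range cross terms $\langle La^{(k)},b^{(k)}\rangle$ vanish thanks to $\overline{J}<\infty$, and non-negativity of $q_*$ from $\langle L|q|,|q|\rangle \leq \langle Lq,q\rangle$. You even supply two details the paper leaves implicit, namely the lower bound $\| q\|_{l^{p+1}}^{p+1} \geq c > 0$ for near-minimizers that makes the scaling inequality strict, and the pairing argument giving $\omega > 0$ from $H[q_*]=I_\nu<0$.
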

\begin{proof}
    Although a straightforward application of the method of concentration compactness is applied, we give a sketch of the proof to address a technical issue arising from LRI; for a detailed proof applied to DNLS, see \cite{Lions1984II} or the appendix of \cite{weinstein1999excitation}.
    
    There exists $\{q^{(n_k)}\}$ whose qualitative behavior is either of vanishing, dichotomy, or compactness (tight). We rule out the first two from which the existence of strong limit $q_* \in l^2(\mathbb{Z}^d)$ follows. Then the non-negativity follows by observing $\langle Lq,q \rangle \geq \langle L|q|,|q| \rangle$ since $|q_n - q_m| \geq ||q_n| - |q_m||$. Then, the Euler-Lagrange equation with $\omega(\nu)$ as the Lagrange multiplier yields \eqref{main_model3}.    

    By definition of minimizing sequence and $I_\nu$, there exists $\epsilon_k > 0$ such that $\epsilon_k \xrightarrow[k \rightarrow \infty]{} 0$ and, without re-labeling the subsequence indices, $I_\nu = H[q^{(k)}] - \epsilon_k$. By \Cref{l1},
    \begin{equation*}
        0 \geq I_\nu \geq -\frac{\| q^{(k)} \|_{l^{p+1}}^{p+1}}{p+1} - \epsilon_k. 
    \end{equation*}
    Vanishing subsequence implies $\| q^{(k)} \|_{l^{p+1}} \xrightarrow[k \rightarrow \infty]{} 0$ by the discrete analogue of \cite[Lemma I.1]{Lions1984II}, which implies $I_\nu = 0$.

    The dichotomy of concentration of mass cannot occur due to the strict subadditivity, $I_\nu < I_{\nu_1} + I_{\nu - \nu_1}$ for $\nu_1 \in (0,\nu)$, which follows straightforwardly from $I_{\theta \nu_1} < \theta I_{\nu_1}$ for $\theta > 1,\ \nu_1 > 0$. Indeed,    \begin{equation*}
        I_{\theta \nu_1} \leq \inf_{M[q] = \nu_1} \{H[\sqrt{\theta}q]\} = \theta \cdot \inf_{M[q] = \nu_1} \{\frac{1}{2}\langle Lq,q \rangle - \frac{\theta^{\frac{p-1}{2}}}{p+1}\| q \|_{l^{p+1}}^{p+1}\} < \theta I_{\nu_1},
    \end{equation*}
where the last inequality holds since no minimizing sequence has vanishing $l^{p+1}$ norm; otherwise, $I_{\nu_1} = 0$.

The claim is by contradiction. Let $\nu_1 \in (0,\nu)$. For any $\epsilon > 0$, suppose there exist $\{a^{(k)}\},\{b^{(k)}\} \subseteq l^2(\mathbb{Z}^d)$ with compact support and $R_k := \text{dist}(\text{supp}(a^{(k)}),\text{supp}(b^{(k)}))\xrightarrow[k\rightarrow \infty]{}\infty$ satisfying 
\begin{equation}\label{dichom}
    |M[a^{(k)}] - \nu_1| < \epsilon,\ |M[b^{(k)}] - (\nu - \nu_1)| < \epsilon,\ \| q^{(k)} - (a^{(k)} + b^{(k)})\|_{l^2} < \epsilon,
\end{equation}
for sufficiently large $k \in \mathbb{N}$. We claim $H[q^{(k)}] = H[a^{(k)}] + H[b^{(k)}] + o(1)$ as $k \rightarrow \infty,\ \epsilon \rightarrow 0$. It suffices to show $\langle La^{(k)},b^{(k)}\rangle \xrightarrow[k \rightarrow \infty]{}0$, which follows from
\begin{equation*}
\begin{split}
    \sum_{n} \sum_{m \neq n} J_{n-m} (a^{(k)}_n - a^{(k)}_m) b^{(k)}_n &= \sum_{n} \sum_{m \neq 0} J_{m}(a^{(k)}_n - a^{(k)}_{n+m}) b^{(k)}_n\\
    &= \sum_{m \neq 0} J_{m} \sum_{n} (a^{(k)}_n - a^{(k)}_{n+m})b^{(k)}_n\\
    &\lesssim \sum_{|m| \geq R_k} J_{m} (\nu_1 (\nu - \nu_1))^{\frac{1}{2}} \xrightarrow[k\rightarrow \infty]{}0.
\end{split}
\end{equation*}
Then, $I_\nu = H[a^{(k)}] + H[b^{(k)}] + o(1)$, which implies $I_\nu \geq I_{\nu_1} + I_{\nu - \nu_1}$ as $k \rightarrow \infty,\ \epsilon \rightarrow 0$.
\end{proof}
\begin{remark}
    An immediate consequence of any minimizing sequence being precompact modulo symmetries (translations and gauge) is that the set of ground states at any mass is orbitally stable. Spectral stability in the case of LRI is expected (see \cite{stefanov2023ground} for the spectral stability of DNLS), which we leave for further studies.
\end{remark}
\begin{remark}\label{r1}
    Taking the contrapositive of \eqref{gns}, if the inequality fails for some $q \in l^2(\mathbb{Z}^d)$ for some $\nu > 0$, then $I_\nu < 0$ by \Cref{l1}, and there exists a minimizer of $I_\nu$ by \Cref{l2}. Therefore for all $\nu > \nu_0$, we have $I_\nu < 0$, and conversely, for all $\nu \leq \nu_0$, $I_\nu = 0$ where $I_{\nu_0} = 0$ is by continuity.
\end{remark}

We give an equivalence to the existence of ground states at the threshold by the (inverse) discrete Gagliardo-Nirenberg attainment \eqref{gns5} and another application of the concentration compactness method. By \Cref{p1}, the threshold is positive precisely in the mass critical/supercritical regime.

\begin{lemma}\label{gn_equiv}
    Suppose $\nu_0 > 0$. There exists a minimizer $Q \in l^2(\mathbb{Z}^d)$ such that $M[Q] = \nu_0$ and $I_{\nu_0} = H[Q] = 0$ if and only if \eqref{gns5} is attained.
\end{lemma}

\begin{proof}
If \eqref{gns5} is attained at $\tilde{Q} \in l^2(\mathbb{Z}^d)$, we may assume $M[\tilde{Q}] = \nu_0$ by scaling. By \eqref{excitation_threshold}, it follows that $H[\tilde{Q}] = 0$. The converse statement follows similarly from $H[Q] = \frac{1}{2} \langle LQ,Q \rangle - \frac{\| Q \|_{l^{p+1}}^{p+1}}{p+1} = 0$.    
\end{proof}

\begin{proposition}
    Suppose $\nu_0 > 0$. Let $\{Q^{(n)}\} \subseteq l^2(\mathbb{Z}^d)$ be minimizers of $I_{\nu_n}$ where $\nu_n \rightarrow \nu_0 +$ as $n \rightarrow \infty$. Similarly, let $\{q^{(n)}\} \subseteq l^2(\mathbb{Z}^d)$ be a minimizing sequence for $I_{\nu_0} = 0$, i.e., $M[q^{(n)}] = \nu_0$ and $H[q^{(n)}] \rightarrow 0+$ as $n \rightarrow \infty$. If \eqref{gns5} is not attained, then both $\{Q^{(n)}\}$ and $\{q^{(n)}\}$ exhibit vanishing as characterized in \eqref{vanishing}. If \eqref{gns5} is attained, then $\{Q^{(n)}\}$ admits a strongly convergent subsequence, modulo gauge and translation symmetries, whose limit $Q_{*} \in l^2(\mathbb{Z}^d)$ is a ground state satisfying \eqref{main_model3} with $\omega > 0$. 
\end{proposition}

\begin{proof}
In general, dichotomy cannot happen for $q^{(n)}$ or $Q^{(n)}$; our argument below for $q^{(n)}$ generalizes to $Q^{(n)}$.  Consequently if \eqref{gns5} is not attained, then those sequences cannot be tight; otherwise, a strongly convergent subsequence can be extracted whose strong limit is a minimizer, which contradicts \Cref{gn_equiv}. For $q^{(n)}$, that the strong limit is a minimizer at $\nu_0$ satisfying the Euler-Lagrange equation follows from the Lagrange multiplier method as \Cref{l2}. On the other hand, $Q^{(n)}$ is a ground state at $\nu_n$, and the standard limit argument yields a ground state at $\nu_0$, if $\{Q^{(n)}\}$ were tight. Therefore for any $R > 0$ and without relabeling the subsequence, we obtain
\begin{equation}\label{vanishing}
    \sup_{m^\prime \in \mathbb{Z}^d} \sum_{m \in B(m^\prime,R)} |q^{(n)}_m|^2 \xrightarrow[n \rightarrow \infty]{} 0,
\end{equation}
and similarly for $Q^{(n)}$.

If the total mass splits into $\nu_1 \in (0,\nu_0)$ and $\nu_0 - \nu_1$, then we reason as \eqref{dichom} and decompose $q^{(n)} = a^{(n)} + b^{(n)} + o(1)$ as $n \rightarrow \infty$ where $o(1)$ is in $l^2(\mathbb{Z}^d)$. Then, $H[q^{(n)}] = H[a^{(n)}] + H[b^{(n)}] + o(1)$. We claim that $H[a^{(n)}] + H[b^{(n)}]$ is bounded away from the origin for all sufficiently large $n$, which forces a contradiction since $H[q^{(n)}] \rightarrow 0$.

For any $q \in l^2(\mathbb{Z}^d)$, \eqref{gns5} implies
\begin{equation}\label{coercive}
    H[q] \geq \left(1-\left(\frac{M[q]}{\nu_0}\right)^{\frac{p-1}{2}}\right)\frac{\langle Lq,q \rangle}{2}.
\end{equation}
By symmetry, let $q = a^{(n)}$. Since $M[a^{(n)}] \rightarrow \nu_1$ by hypothesis, the first factor of the RHS of \eqref{coercive} is bounded below from zero uniformly for all $n \gg 1$. To bound the second factor, there exist a sequence of centers $\{m_n\} \subseteq \mathbb{Z}^d$ and some $R,\delta>0$ such that $\sum\limits_{m \in B(m_n,R)}|a^{(n)}_{m}|^2 \geq \delta > 0$ for all $n \gg 1$. By considering the Dirichlet Rayleigh quotient, we obtain
\begin{equation*}
\langle L a^{(n)},a^{(n)} \rangle \gtrsim_R \sum\limits_{m \in B(m_n,R)}|a^{(n)}_{m}|^2 \geq \delta,    
\end{equation*}
for a fixed $R>0$, from which $H[a^{(n)}] + H[b^{(n)}] \gtrsim 1$ follows uniformly in $n \gg 1$. The same argument shows that dichotomy is impossible for $\{Q^{(n)}\}$.

Suppose \eqref{gns5} is attained at $\tilde{Q} \in l^2(\mathbb{Z}^d)$ where $M[\tilde{Q}] = \nu_0$ by scaling. We claim vanishing cannot happen for $Q^{(n)}$ and hence tightness. Then, a usual subsequence extraction yields a strong limit $Q_{*} \in l^2(\mathbb{Z}^d)$, modulo gauge and translation symmetries, that is a minimizer of the action at $\nu_0$.

Let $r_n = \frac{\nu_n}{\nu_0} > 1$ where the strict inequality may be assumed by taking a subsequence. Then, $M[r_n^{\frac{1}{2}}\tilde{Q}] = \nu_n$ and
\begin{equation}\label{ub}
    I_{\nu_n} \leq H[r_n^{\frac{1}{2}}\tilde{Q}] = -r_n (r_n^{\frac{p-1}{2}}-1) \frac{\langle L\tilde{Q},\tilde{Q} \rangle}{2}.
\end{equation}
Arguing as \eqref{coercive}, we have
\begin{equation}\label{lb}
    I_{\nu_n} = H[Q^{(n)}] \geq -(r_n^{\frac{p-1}{2}} - 1) \frac{\langle L Q^{(n)},Q^{(n)}\rangle}{2} ,
\end{equation}
Combining \eqref{ub}, \eqref{lb},
\begin{equation}\label{coercive2}
    \langle LQ^{(n)},LQ^{(n)}\rangle \geq r_n\langle L\tilde{Q},\tilde{Q} \rangle > \langle L\tilde{Q},\tilde{Q} \rangle > 0,
\end{equation}
for all $n \geq 1$. Vanishing implies $\| Q^{(n)}\|_{l^{p+1}} \xrightarrow[n \rightarrow \infty]{} 0$ by the discrete analogue of \cite[Lemma I.1]{Lions1984II}, which is incompatible with $I_{\nu_n} = H[Q^{(n)}] < 0$, or equivalently $0 \leq \langle L Q^{(n)}, Q^{(n)}\rangle < \frac{2 \| Q^{(n)}\|_{l^{p+1}}^{p+1}}{p+1}$, and the uniform lower bound \eqref{coercive2}.

\end{proof}

\begin{remark}
    If \eqref{gns5} is attained at $\tilde{Q} \in l^2(\mathbb{Z}^d)$ and $\{q^{(n)}\} \subseteq l^2(\mathbb{Z}^d)$ is a minimizing sequence for $I_{\nu_0}$, then either vanishing or tightness is possible. For the latter, let $q^{(n)} \equiv \left(\frac{\nu_0}{M[\tilde{Q}]}\right)^{\frac{1}{2}}\tilde{Q}$. For the former, let $q \in l^2(\mathbb{Z}^d)$ such that $M[q] = \nu_0$. For $x \in \mathbb{R}^d$, let $\phi(x) = (2\pi)^{-d} \int_{\mathbb{T}^d} \widehat{q}(\xi) e^{i x \cdot \xi} d\xi$ be the Shannon interpolation of $q$, and let $q^{(N)}_{m} = N^{-\frac{d}{2}} \phi\left(\frac{m}{N}\right)$. By the Plancherel Theorem for band-limited functions, $M[q^{(N)}] = M[q]$ for all $N \geq 1$. That $\{ q^{(N)} \}$ is a minimizing sequence follows from
    \begin{equation*}
        0 \leq H[q^{(N)}] \leq \langle Lq^{(N)},q^{(N)}\rangle = (2\pi)^{-d}  \int_{\mathbb{T}^d} \sigma(\xi) |\widehat{q^{(N)}}(\xi)|^2 d\xi = (2\pi)^{-d} \int_{[-\pi,\pi]^d} \sigma\left(\frac{\xi}{N}\right)|\widehat{\phi}(\xi)|^2 d\xi \xrightarrow[N \rightarrow \infty]{} 0,
    \end{equation*}
    where the first inequality is by \eqref{coercive} and the last limit is by the Dominated Convergence Theorem. Moreover, the sequence is vanishing since $\sup\limits_{n \in \mathbb{Z}^d}\| q^{(N)}\|_{l^2 (B(n,R))} \lesssim_d R^{\frac{d}{2}} \| q^{(N)} \|_{l^\infty } \leq R^{\frac{d}{2}} N^{-\frac{d}{2}} \nu_0^{\frac{1}{2}} \xrightarrow[N \rightarrow \infty]{}0$ for any $R>0$.   
\end{remark}

\subsection{Decay of ground states}\label{decay3}
It was claimed in \cite[Section 3]{gaididei1997effects} that when $d=1$ and $J_k = k^{-(1+\alpha)}$, the tail behavior of the localized states is exponential if $\alpha > 2$ and algebraic if $1 < \alpha < 2$. Soon after in \cite{flach1998breathers}, an analysis on a similar model (Klein-Gordan type with cubic potential and LRI) was reported with numerical evidence that the decay near the origin undergoes an exponential-to-algebraic crossover for $|n| > n_c$ for some transitional lattice site $n_c \in \mathbb{N}$. Here we provide a rigorous error analysis with the sharp order of decay. If $q \in l^2(\mathbb{Z}^d)$ is a ground state satisfying \eqref{main_model3}, then there exists $\omega > 0$ such that
\begin{equation}\label{ground_state}
    q = (\omega+L)^{-1}q^p = K \ast q^p
\end{equation}
where the resolvent kernel is given by
\begin{equation}\label{tail_decay}
    K_n = [\mathcal{F}^{-1} (\omega + \sigma(\xi))^{-1}]_n = (2\pi)^{-d} \int_{\mathbb{T}^d} \frac{e^{i n \cdot \xi}}{\omega + \sigma(\xi)}d\xi. 
\end{equation}
For $d=1, J_k = k^{-(1+\alpha)}$, \eqref{tail_decay} was approximated in \cite{gaididei1997effects} for $|n| \gg 1$ by
\begin{equation*}
    \frac{1}{2\pi}\int_{-\infty}^{\infty} \frac{\cos (n \xi)}{\omega + a\xi^2} d\xi = \frac{e^{-\sqrt{\frac{\omega}{a}}|n|}}{2\sqrt{a \omega}}
\end{equation*}
when $\alpha > 2$, for some $a>0$, using the properties of the polylogarithm, which yields an exponential decay of $q$. Note that $\sigma(\xi) \simeq |\xi|^2$ is valid for $|\xi| \ll 1$ by \Cref{l3}, but not for $|\xi| \gtrsim 1$, and therefore an extension of the integral from $\mathbb{T}^d$ to $\mathbb{R}^d$ should be dealt with care. The non-analytic branch term at $\xi = 0$ governs the asymptotic decay.

\begin{proposition}\label{algebraic_decay}
Let $q \in l^2(\mathbb{Z}^d)$ satisfy \eqref{main_model3} with $M[q] = \nu$. For any $\alpha \in (0,\infty)$, the decay estimate 
\begin{equation}\label{decay2}
    |q_n| \leq C(d,\alpha,p,\omega) \nu^{\frac{p}{2}} \langle n \rangle^{-(d+\alpha)},\ \forall n \in \mathbb{Z}^d 
\end{equation}
holds, and for $\alpha = \infty$, $q_n = O_{\beta}(|n|^{-\beta})$ for any $\beta > 0$. Furthermore, if $\epsilon > 0,\ \alpha \in (0,\infty)$, then
\begin{equation}
q_n \notin O(|n|^{-(d+\alpha+\epsilon)}).    
\end{equation}
\end{proposition}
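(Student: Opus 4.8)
The plan is to work entirely from the fixed-point representation \eqref{ground_state}, $q = K\ast q^p$, and to read off the decay of $q$ from that of the kernel $K$ in \eqref{tail_decay} together with a convolution bootstrap. The whole argument rests on one analytic input — the sharp spatial asymptotics of $K$ — and then reduces to elementary near/far splittings.

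\textbf{Step 1 (kernel asymptotics; the main obstacle).} I would first show $|K_n|\lesssim\langle n\rangle^{-(d+\alpha)}$ and, more precisely, $K_n = c_\alpha\langle n\rangle^{-(d+\alpha)}(1+o(1))$ with $c_\alpha\neq 0$ for $\alpha<\infty$. Since $\omega>0$ the integrand $(\omega+\sigma(\xi))^{-1}$ in \eqref{tail_decay} is bounded and smooth away from $\xi=0$, so the tail of $K_n$ is governed solely by the regularity of $\sigma$ at the origin: the derivative $\partial_\xi^\beta\sigma$ involves $\sum_m J_{|m|}|m|^{|\beta|}$, which converges precisely when $|\beta|<\alpha$. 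Using the small-$\xi$ expansion of $\sigma$ from \Cref{l3} (together with the polylogarithm/Poisson-summation analysis the paper develops in the 1D case), the leading non-smooth part of $(\omega+\sigma)^{-1}$ is $-\omega^{-2}$ times a singular term $\sim c|\xi|^{\alpha}$ (with logarithmic corrections at even integers), whose inverse Fourier transform on $\mathbb{T}^d$ produces exactly the tail $\langle n\rangle^{-(d+\alpha)}$, while the smooth remainder contributes $o(\langle n\rangle^{-(d+\alpha)})$. The delicate bookkeeping is separating the regimes $\alpha\in(0,2)$, $\alpha\ge 2$, and integer $\alpha$, and checking $c_\alpha\neq0$; this is, however, the same machinery needed later for the dispersive estimate \Cref{p5}. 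For $\alpha=\infty$ (NNI or faster-than-polynomial $J$), $\sigma$ is real-analytic, so repeated integration by parts in \eqref{tail_decay} — with no boundary terms on the torus — gives $|K_n|\lesssim_\beta\langle n\rangle^{-\beta}$ for every $\beta>0$.

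\textbf{Step 2 (upper bound by bootstrap).} Granting $|K_n|\lesssim\langle n\rangle^{-(d+\alpha)}$, note first that $l^2\hookrightarrow l^{2p}$ yields $\|q^p\|_{l^2}\le\|q\|_{l^2}^p=\nu^{p/2}$ and that $K\in l^2$ (since $2(d+\alpha)>d$), so Young's inequality gives $\|q\|_{l^\infty}\lesssim\nu^{p/2}$; a weighted absorption argument on $q=K\ast q^p$ (isolating the self-referential part of the convolution at large radius, where $q$ is already small, to produce a contraction factor $<1$) then yields an initial algebraic rate $|q_n|\lesssim\langle n\rangle^{-s_0}$ for some $s_0>0$. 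I would then bootstrap by splitting $q_n=\sum_m K_{n-m}q_m^p$ at $|n-m|=|n|/2$: the contribution from $|n-m|>|n|/2$ is $\lesssim\langle n\rangle^{-(d+\alpha)}$ times a (possibly partial) sum of $q^p$, while the contribution from $|n-m|\le|n|/2$ forces $|m|\gtrsim|n|$ and hence is $\lesssim\langle n\rangle^{-ps}$. Inserting $|q_n|\lesssim\langle n\rangle^{-s}$ thus returns $|q_n|\lesssim\langle n\rangle^{-\min(d+\alpha,\,ps)}$, and since $p>1$ iterating the map $s\mapsto\min(d+\alpha,ps)$ from any $s_0>0$ reaches $s=d+\alpha$ in finitely many steps. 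Carrying the $\nu^{p/2}$ factor through these estimates gives \eqref{decay2}; the $\alpha=\infty$ case follows identically from the super-polynomial bound on $K$.

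\textbf{Step 3 (sharpness).} For the lower bound I would extract the exact leading coefficient. Writing $q_n=K_n\sum_m q_m^p+\sum_m(K_{n-m}-K_n)q_m^p$ and using Step 2 (so $q^p\sim\langle n\rangle^{-p(d+\alpha)}$ is summable, with $S:=\sum_m q_m^p>0$ by the positivity $q\ge 0,\ q\not\equiv0$ from \Cref{l2}), a two-scale near/far decomposition of the remainder — combining the mean-value bound $|K_{n-m}-K_n|\lesssim|m|\,\langle n\rangle^{-(d+\alpha+1)}$ for $|m|\le|n|^{1/2}$, the convergent-tail smallness $\sum_{|m|>|n|^{1/2}}q_m^p=o(1)$, and the gain $p(d+\alpha)>d+\alpha$ in the region $|m|\gtrsim|n|$ — shows every remainder piece is $o(\langle n\rangle^{-(d+\alpha)})$. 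With Step 1's asymptotic $K_n\sim c_\alpha\langle n\rangle^{-(d+\alpha)}$ this gives $q_n\sim c_\alpha S\,\langle n\rangle^{-(d+\alpha)}$, so $q_n\gtrsim\langle n\rangle^{-(d+\alpha)}$ along $n\to\infty$ and therefore $q\notin O(|n|^{-(d+\alpha+\epsilon)})$ for every $\epsilon>0$.

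The principal obstacle is Step~1: transferring the exact order of non-smoothness of the lattice symbol $\sigma$ at $\xi=0$ into the sharp tail of $K$, with the correct non-vanishing constant $c_\alpha$ and the right treatment of the integer-$\alpha$ logarithmic cases. Once the kernel asymptotics are secured, Steps~2 and~3 are routine convolution estimates that reuse a single near/far splitting scheme.
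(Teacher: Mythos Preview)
Your outline is correct, but both the upper and lower bounds can be obtained with considerably less machinery than you propose, and in particular your ``principal obstacle'' (the sharp two-sided asymptotic $K_n\sim c_\alpha\langle n\rangle^{-(d+\alpha)}$ with $c_\alpha\neq 0$) can be entirely bypassed.

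For the upper bound, your iterated bootstrap $s\mapsto\min(d+\alpha,ps)$ is unnecessary: the weighted absorption you describe already delivers the full exponent $d+\alpha$ in a single step. Concretely, once $|K_n|\lesssim\langle n\rangle^{-(d+\alpha)}$ is known, fix $\epsilon>0$ small and $R$ with $|q_n|<\epsilon$ for $|n|>R$; let $C_0(\delta)$ be the best constant in $q_n\le C_0(\delta+\langle n\rangle^{-(d+\alpha)})$. Splitting the convolution at $|m|=R$ and using the smallness factor $\epsilon^{p-1}$ on the far piece gives $C_0\le C+\tfrac12 C_0$, hence $C_0\le 2C$, and sending $\delta\to 0$ yields \eqref{decay2} directly. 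No initial rate $s_0>0$ or iteration is needed.

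For the lower bound, the paper takes a genuinely different and simpler route that avoids your Step~1 entirely. The key observation is that while the asymptotics of $K=\mathcal{F}^{-1}(\omega+\sigma)^{-1}$ are delicate, those of $s:=\mathcal{F}^{-1}\sigma$ are \emph{explicit}: a one-line computation gives $s_n=-\tfrac12 J_{|n|}$ for $n\neq 0$. Now rewrite the equation as $Lq=q^p-\omega q$, i.e.\ $s\ast q=q^p-\omega q$. If $q=O(|n|^{-(d+\alpha+\epsilon)})$ for some $\epsilon>0$, then so is $s\ast q$ (since $q^p$ decays even faster). But for suitably chosen large $|n|$, the near/far split of $(s\ast q)_n=\sum_m s_{n-m}q_m$ together with $q\ge 0$ and $s_{n-m}=-\tfrac12 J_{|n-m|}\sim -|n|^{-(d+\alpha)}$ on the near piece forces $|(s\ast q)_n|\gtrsim|n|^{-(d+\alpha)}$, a contradiction. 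Thus only the \emph{upper} bound on $K$ is ever used, and that follows from the soft remark that $(\omega+\sigma)^{-1}$ and $\sigma$ lie in the same H\"older class (since $\omega>0$, $\sigma\ge 0$), the latter's regularity being read off from the explicit decay of $s$.

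Your approach has the merit of producing the stronger conclusion $q_n\sim c_\alpha S\,\langle n\rangle^{-(d+\alpha)}$ with an explicit constant, at the cost of the harder kernel analysis (including the mean-value bound $|K_{n-m}-K_n|\lesssim|m|\langle n\rangle^{-(d+\alpha+1)}$ you invoke). The paper's route trades that sharper asymptotic for a much shorter proof of the stated $O/\notin O$ dichotomy.
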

\begin{proof}
By an explicit computation of $\mathcal{F}^{-1} [\sin^2 \left(\frac{m \cdot \xi}{2}\right)]_n$, we have
\begin{equation}\label{fourier}
    s_n := [\mathcal{F}^{-1} \sigma]_n = \sum_{m \neq 0} J_{m} \mathcal{F}^{-1} [\sin^2 \left(\frac{m \cdot \xi}{2}\right)]_n =
    \begin{cases}
            \frac{1}{2} \sum\limits_{m \neq 0} J_{m}
            & ,\text{ if } n = 0\\
            -\frac{1}{2} J_{n} & ,\text{ if } n \neq 0.
    \end{cases}
\end{equation}
Assume $\alpha \in (0,\infty)$. To obtain a decay estimate of $q$ from \eqref{ground_state}, an argument similar to the Picard iteration is implemented. Note that the decay of $K$ corresponds to the regularity of $(\omega + \sigma(\xi))^{-1}$, and therefore of $\sigma(\xi)$, which in turn corresponds to the decay of $s_n$; the symbols $(\omega + \sigma(\xi))^{-1}$ and $\sigma(\xi)$ are in the same H\"older space since $\sigma \geq 0$ and $\omega > 0$. In short, $K = O(|n|^{-(d+\alpha)})$.

By $l^2(\mathbb{Z}^d) \hookrightarrow l^\infty(\mathbb{Z}^d)$, for any $\epsilon > 0$, there exists $R>0$ such that $|q_n| \leq \epsilon$ whenever $|n| > R$. For $|n| > 2R$, the convolution in \eqref{ground_state} is split into $|m| \leq R$ and $|m| > R$ to obtain
\begin{equation}\label{ground_state2}
\begin{split}
    q_n &\leq \| q \|_{l^2}^p \sum_{|m| \leq R} K_{n-m} + \epsilon^{p-1}\sum_{|m| > R} K_{n-m} q_m\\
    &\lesssim \| q \|_{l^2}^p\sum_{|m| \leq R} \langle n-m \rangle^{-(d+\alpha)}+ \epsilon^{p-1}\sum_{|m| > R} K_{n-m} q_m =: S_1 + S_2.
\end{split}
\end{equation}
Since $|n-m| \geq \frac{|n|}{2}$, we have
\begin{equation*}
S_1 \leq C  \langle n \rangle^{-(d+\alpha)} \simeq \| q \|_{l^2}^p R^d \langle n \rangle^{-(d+\alpha)}.   
\end{equation*}
For $\delta > 0$, let $C_0(\delta) > 0$ be the best constant that satisfies
\begin{equation}\label{continuity}
    q_n \leq C_0(\delta + \langle n \rangle^{-(d+\alpha)}),\ \forall n \in \mathbb{Z}^d.
\end{equation}
Indeed $C_0$ exists since $\frac{q_n}{\delta + \langle n \rangle^{-(d+\alpha)}} \leq \frac{\nu^{1/2}}{\delta}$. Define $C_1 > 0$ by
\begin{equation*}
\sum\limits_{m} K_{n-m} \langle m \rangle^{-(d+\alpha)} \leq C_1 \langle n \rangle^{-(d+\alpha)},    
\end{equation*}
which can be shown by comparing the series with the corresponding integral. By \eqref{ground_state2}, \eqref{continuity},
\begin{equation*}
\begin{split}
    q_n &\leq C \langle n \rangle^{-(d+\alpha)} + C_0 \epsilon^{p-1} \sum_{|m| > R} K_{n-m} (\delta + \langle m \rangle^{-(d+\alpha)})\\
    &\leq (\epsilon^{p-1}\sum_{m} |K_{m}|) C_0\delta +  (C + C_0 (\epsilon^{p-1}C_1))\langle n \rangle^{-(d+\alpha)}\\
    &\leq (C + \frac{C_0}{2})(\delta + \langle n \rangle^{-(d+\alpha)}),
\end{split}
\end{equation*}
where $0 < \epsilon \ll 1$ was chosen at the last inequality, implying $C_0 \leq C + \frac{C_0}{2}$, and therefore $C_0 \leq 2C$. By \eqref{continuity} and $C>0$, the estimate \eqref{decay2} follows by taking $\delta \rightarrow 0$ in \eqref{continuity}. If $\alpha = \infty$, a similar analysis as above shows $O_{\beta}(|n|^{-\beta})$ decay for any $\beta > 0$.

Suppose there exists $\epsilon>0$ satisfying $d+\alpha + \epsilon \notin \mathbb{Z}$ and $\epsilon < (p-1)(d+\alpha)$ such that $q_n = O(|n|^{-(d+\alpha+\epsilon)})$. By \eqref{ground_state}, this implies $\mathcal{F}^{-1}[\sigma \widehat{q}]_{n} = (Lq)_{n} = (s \ast q)_{n} = O(|n|^{-(d+\alpha+\epsilon)})$. Let $R \gg 1$ such that $R^{-\frac{\epsilon}{2}}\sum\limits_{m} |s_m| \ll 1$ and $|q_{m_0}| > \frac{\| q \|_{l^\infty}}{2}$ for some $|m_0| < R$. Let $2R \leq |n| \leq 2^{-\frac{1}{d+\alpha}} R^{\frac{d+\alpha+\epsilon/2}{d+\alpha}}$ such that
\begin{equation*}
J_{n-m_0} \gtrsim \eta(|n-m_0|) \gtrsim |n|^{-(d+\alpha)},    
\end{equation*}
where the first inequality holds by the averaged lower bound in \Cref{def:kernel} and $|n-m_0| \geq R \gg 1$, and the second inequality holds by the limit property of $\eta$ and $|n-m_0| \leq \frac{3 |n|}{2}$. Then,
\begin{equation}
\begin{split}
    |n|^{-(d+\alpha+\epsilon)}\gtrsim|(s \ast q)_n| &\geq \left|\sum_{|m| \leq R} s_{n-m} q_m\right| - \left|\sum_{|m| > R} s_{n-m} q_m\right|\\
    &\gtrsim \sum_{|m| \leq R} J_{n-m} q_m - \sum_{|m| > R} \frac{|s_{n-m}|}{|m|^{d+\alpha+\epsilon}}\\
    &\gtrsim \| q \|_{l^\infty} |n|^{-(d+\alpha)} - (R^{-\frac{\epsilon}{2}} \sum_{m} |s_m|) R^{-(d+\alpha+\epsilon/2)} \gtrsim |n|^{-(d+\alpha)},
\end{split}
\end{equation}
where an explicit computation \eqref{fourier} and $q_n \geq 0$ were used in the third inequality.    
\end{proof}
\section{Excitation thresholds for LRI}\label{ET}
\subsection{Radial algebraic decay}\label{radial_algebraic}

The existence of excitation thresholds hinges upon the following lemma. Let $\langle z \rangle = (|z|^2 + 1)^{\frac{1}{2}}$. We introduce an effective interaction exponent $\alpha^\prime$ to describe the asymptotic behavior of the lattice Laplacian.

\begin{proposition}\label{l3}
Define  
\begin{equation}\label{prime}
    \alpha^\prime = 
    \begin{cases}
        \frac{\alpha}{2}
        & ,\text{ if } \alpha \in (0,2)\\
        1 & ,\text{ if } \alpha \in [2,\infty].
\end{cases}    
\end{equation}
Then,
\begin{equation}\label{equivalence}
    \sigma(\xi) \simeq_{d,\alpha} \delta(|\xi|) :=
        \begin{cases}
            |\xi|^{2\alpha^{\prime}}
            & ,\text{ if } \alpha \in (0,\infty] \setminus \{2\}\\
            |\xi|^2 \langle\log |\xi|\rangle & ,\text{ if } \alpha =2.
        \end{cases}
\end{equation}
\end{proposition}
For the following counting lemma, let $\#$ denote the cardinality of lattice points of interest.

\begin{lemma}\label{count}
Let $a \in (0,d^{-\frac{1}{2}})$. There exists $k_1(a,d) \in \mathbb{N}$ and $C(d) > 0$ such that
\begin{equation*}
    \sup_{\tilde{\omega} \in \mathbb{S}^{d-1}}\#\left\{m \in S_k: \frac{|m \cdot \tilde{\omega}|}{k} \leq a\right\} \leq C ak^{d-1},\ \forall k \geq k_1,
\end{equation*}
\end{lemma}

\begin{proof}

Since the statement is trivial for $d=1$, let $d \geq 2$. Fix $\tilde{\omega} \in \mathbb{S}^{d-1}$ and let $E = \{m \in S_k: \frac{|m \cdot \tilde{\omega}|}{k} \leq a\}$. Consider $\tilde{Q} = \bigcup\limits_{m \in E} (m + [-\frac{1}{2},\frac{1}{2}]^d)$. Define $T:\mathbb{R}^d \setminus \{0\} \rightarrow \mathbb{R}^d \setminus \{0\}$ given by $x \mapsto \frac{|x|_{1}}{|x|_{2}}x$. Then, $T$ maps $S_k$ into $\{x \in \mathbb{R}^d: |x|_2 = k\}$. In the polar coordinate $(r,\omega^\prime) \in \mathbb{R}_{>0} \times \mathbb{S}^{d-1}$, we have $T(x) = |\omega^\prime|_{1} r \omega^{\prime} = |\omega^\prime|_{1} x$ since $|\omega^{\prime}|_2 = 1$. Then, $J_T(x) = | \omega^\prime |_{1}^d \geq | \omega^\prime |_{2}^d = 1$, and
\begin{equation*}
\# E = \text{Vol}(\tilde{Q}) \leq \int_{\tilde{Q}} J_T(x) dx = \text{Vol}(T(\tilde{Q})).   
\end{equation*}

Let $x \in \tilde{Q}$. Then, there exists $m \in E$ such that $|x-m|_2 \leq \frac{\sqrt{d}}{2}$. Since $||x|_1 - |m|_1| \leq \sqrt{d}|x-m|_2$ and $|T(x)|_2 = |x|_1$, we have a radial bound: $|T(x)|_2 \in [k - \frac{d}{2},k + \frac{d}{2}]$. On the other hand, let $\omega^\prime_x = \frac{x}{|x|_2},\ \omega^\prime_m = \frac{m}{|m|_2}$, and $\angle(\omega^\prime_{x},\omega^\prime_{m}) \in [0,\pi]$ be the angle between two vectors. Since 
\begin{equation*}
|m|_2 \geq \frac{|m|_1}{\sqrt{d}} = \frac{k}{\sqrt{d}},\ |x|_2 \geq |m|_2 - |x-m|_2 \geq \frac{k}{\sqrt{d}} - \frac{\sqrt{d}}{2},
\end{equation*}
we have
\begin{equation*}
    \sin \angle(\omega^\prime_{x},\omega^\prime_{m}) \leq \frac{|x-m|_2}{\min (|x|_2,|m|_2)} \leq \frac{2 \sqrt{d}}{k},
\end{equation*}
for all $k \geq d$, and hence an angular bound: $\angle(\omega^\prime_{x},\omega^\prime_{m}) \lesssim_d \frac{1}{k}$ for all $k$ sufficiently large.

Since $m \in E$, we have $|\cos \angle(\omega^\prime_m,\tilde{\omega})| \leq a \sqrt{d} < 1$. Estimating $\cos^{-1}(a \sqrt{d})$, by the Taylor expansion for example, an angular bound $\angle(\omega^\prime_m,\tilde{\omega}) \in [\frac{\pi}{2}-2a\sqrt{d},\frac{\pi}{2}+2a\sqrt{d}]$ follows. Then,
\begin{equation*}
T(\tilde{Q}) \subseteq \left\{x \in \mathbb{R}^{d}: |x|_2 \in [k-\frac{d}{2},k+\frac{d}{2}] \text{ and } \angle(\omega^\prime_x,\tilde{\omega}) \in [\frac{\pi}{2}-(2a\sqrt{d} + \frac{2\sqrt{d}}{k}),\frac{\pi}{2}+(2a\sqrt{d}+\frac{2\sqrt{d}}{k})]\right\},    
\end{equation*}
and the volume of RHS (sector) is bounded above by $C_1 a k^{d-1} + C_2 k^{d-2}$ for some $C_1,C_2>0$. If $k \geq k_1(a,d)$, then the lower order term $k^{d-2}$ is bounded above $ak^{d-1}$ independent of $\tilde{\omega} \in \mathbb{S}^{d-1}$.   
\end{proof}

\begin{proof}[Proof of \Cref{l3}]
Assume $\alpha \in (0,\infty)$; the case $\alpha = \infty$ follows similarly. Fix $R \gg 1$ in \Cref{def:kernel} and let $|\xi| < R^{-1}$. If $|\xi| \geq R^{-1}$, then $\sigma(\xi) \simeq 1$ where the lower bound is by the nearest-neighbor degeneracy and the upper bound is by the compactness of $\mathbb{T}^d$, and hence $\sigma(\xi) \simeq \delta(|\xi|)$ away from the origin.

To show $\sigma(\xi) \lesssim \delta(|\xi|)$ on $|\xi| < R^{-1}$, note that
\begin{equation*}
\begin{split}
\sigma(\xi) &= \sum_{|m| \leq R} J_m \sin^2 \left(\frac{m\cdot \xi}{2}\right) + \sum_{R<|m| \leq |\xi|^{-1}} J_m \sin^2 \left(\frac{m\cdot \xi}{2}\right) + \sum_{|m| > |\xi|^{-1}} J_m \sin^2 \left(\frac{m\cdot \xi}{2}\right)\\
    &=: \mathscr{I}_1 + \mathscr{I}_2 + \mathscr{I}_3.    
\end{split}
\end{equation*}
By direct estimation using the upper bound $J_{m} \lesssim \eta(|m|)$,
\begin{equation*}
\mathscr{I}_1 + \mathscr{I}_2 + \mathscr{I}_3 \lesssim |\xi|^2 \sum_{|m| \leq R} J_{m} |m|^2 + |\xi|^2 \sum_{R < k \leq |\xi|^{-1}} k^{1-\alpha} + \sum_{k>|\xi|^{-1}} k^{-(1+\alpha)} \lesssim \delta(|\xi|).
\end{equation*}

To show the lower bound, we claim
\begin{equation}\label{count2}
    \mathbb{E}_{\mu_{k}} [(m \cdot \xi)^2] =  \mathbb{E}_{\mu_k} \left[\left(m \cdot \frac{\xi}{|\xi|}\right)^2\right] |\xi|^{2} \gtrsim k^2 |\xi|^2.
\end{equation}

Let $\tilde{\omega} = \frac{\xi}{|\xi|}$ and $0 < \delta \ll d^{-1}$. Define $S_\delta = \{ m \in S_k: \left(\frac{m \cdot \tilde{\omega}}{k}\right)^2 > \delta\}$. By \Cref{count},
\begin{equation*}
    \mu_{k} (S_\delta^c) \leq (\# S_\delta^{c}) \sup_{m \in S_k}\mu_{k}\{m\} \lesssim \sqrt{\delta},
\end{equation*}
where the last inequality follows from $\mu_{k}\{m\} \lesssim \frac{J_m}{k^{d-1} \eta(k)} \lesssim k^{-(d-1)}$. Then,
\begin{equation*}
    \mathbb{E}_{\mu_k} \left[\left(m \cdot \tilde{\omega}\right)^2\right] \geq k^2 \delta \mu_{k} (S_{\delta}) \geq C_1 k^2 \delta (1- C_2 \sqrt{\delta}),
\end{equation*}
for some dimensional constants $C_1,C_2>0$, which justifies the claim. Then for $R \leq k < |\xi|^{-1}$,
\begin{equation*}
    \sigma(\xi) \geq \sum_{R < |m| \leq |\xi|^{-1}} J_{m} \sin^2\left(\frac{m \cdot \xi}{2}\right) = \sum_{R < k \leq |\xi|^{-1}} \overline{J}_k \mathbb{E}_{\mu_k} \left[\sin^2\left(\frac{m \cdot \xi}{2}\right)\right] \gtrsim |\xi|^2\sum_{R < k \leq |\xi|^{-1}} \overline{J}_k k^2,
\end{equation*}
where the last inequality is by \eqref{count2}. By definition, $\overline{J}_k \gtrsim k^{d-1} \eta(k)$, and therefore,
\begin{equation*}
    |\xi|^2\sum_{R < k \leq |\xi|^{-1}} \overline{J}_k k^2 \gtrsim |\xi|^2 \sum_{R < k \leq |\xi|^{-1}} k^{1-\alpha} \gtrsim \delta(|\xi|).
\end{equation*}
\end{proof}

We now clarify how the excitation threshold depends on the nonlinearity. The next two propositions show a dichotomy: in the mass-subcritical regime there is no positive threshold, while at and above the mass-critical regime a strictly positive threshold appears. Weak nonlinear effects let energy disperse, whereas stronger interactions counteract dispersion and sustain localized states. On the other hand, when $\alpha = 0$, we give a partial result in $d=1$ where a logarithmic refinement yields positive thresholds for any power nonlinearity, thereby extending \Cref{p1}.

\begin{proposition}\label{p2}
If $p < 1 + \frac{4\alpha^\prime}{d}$, then $K = 0$. Consequently $\nu_0 = 0$, and $I_\nu$ has a minimizer for any $\nu > 0$.
\end{proposition}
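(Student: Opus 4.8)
The plan is to show that the infimum $K$ in \eqref{gns5} is attained in the limit along an explicit family of spreading, slowly-varying profiles, using the symbol equivalence $\sigma \simeq_{d,\alpha} \delta$ from \Cref{l3} to pin down the decay rate of the kinetic energy. Fix a nonzero Schwartz function $\phi$ on $\mathbb{R}^d$ and set $u^{(N)}_n = \phi(n/N)$, so that $u^{(N)}$ spreads over scale $N$ and its Fourier transform concentrates near $\xi = 0$ -- precisely the low-frequency regime in which \Cref{l3} gives $\sigma(\xi) \simeq \delta(\xi)$.

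First I would record the two elementary Riemann-sum asymptotics $\|u^{(N)}\|_{l^2}^2 \simeq N^d \|\phi\|_{L^2}^2$ and $\|u^{(N)}\|_{l^{p+1}}^{p+1} \simeq N^d \|\phi\|_{L^{p+1}}^{p+1}$, valid for $N$ large since $\phi$ is fixed and continuous. The heart of the matter is the kinetic term, which I would estimate directly on the physical side via $\langle L u^{(N)}, u^{(N)}\rangle = \tfrac12 \sum_{j \neq 0} J_{|j|} \sum_n |\phi(n/N) - \phi((n+j)/N)|^2$. Using $\sum_n |\phi(n/N) - \phi((n+j)/N)|^2 \lesssim N^d \min\{(|j|/N)^2, 1\}$ (smoothness of $\phi$ for the near range $|j| \leq N$, boundedness for the far range $|j| > N$) gives
\[
\langle L u^{(N)}, u^{(N)}\rangle \lesssim N^d\Big( N^{-2}\sum_{|j| \leq N} |j|^2 J_{|j|} + \sum_{|j| > N} J_{|j|}\Big).
\]
Inserting $J_{|j|} \simeq |j|^{-(d+\alpha)}$ and summing in shells $|m| = k$, both terms are of order $N^{-\alpha}$ when $\alpha \in (0,2)$ and of order $N^{-2}$ when $\alpha > 2$, so that $\langle L u^{(N)}, u^{(N)}\rangle \lesssim N^{d - 2\alpha'}$; at the borderline $\alpha = 2$ the near range produces an extra factor $\log N$ that does not affect the conclusion. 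This is exactly the scaling predicted by $\delta$, and can equivalently be read off from the Fourier representation $\langle L u, u\rangle = (2\pi)^{-d}\int_{\mathbb{T}^d} \sigma(\xi)|\widehat{u}(\xi)|^2\, d\xi$ together with \Cref{l3}.

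Combining the three estimates, the quotient obeys
\[
\frac{\|u^{(N)}\|_{l^2}^{p-1}\langle L u^{(N)}, u^{(N)}\rangle}{\|u^{(N)}\|_{l^{p+1}}^{p+1}} \lesssim N^{\frac{d(p-1)}{2} - 2\alpha'},
\]
up to the harmless logarithm at $\alpha = 2$, which tends to $0$ as $N \to \infty$ precisely when $\frac{d(p-1)}{2} < 2\alpha'$, i.e. $p < 1 + \frac{4\alpha'}{d}$. Hence $K = 0$, and then $\nu_0 = 0$ by \eqref{excitation_threshold}. The final clause is immediate: since $\nu_0 = 0$, \Cref{r1} gives $I_\nu < 0$ for every $\nu > 0$, and \Cref{l2} then supplies a minimizer at each such $\nu$.

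I expect the main obstacle to be the sharp control of the kinetic term. On the physical side one must justify the shell-summation split into $|j| \leq N$ and $|j| > N$ and check that the two ranges contribute at the same order $N^{d - 2\alpha'}$ (up to the borderline log), which is what pins down the critical exponent; on the Fourier side the analogous difficulty is estimating the periodization error in $\widehat{u^{(N)}}(\xi) \approx N^d \widehat{\phi}(N\xi)$ and restricting the integral to the region $|\xi| \lesssim 1$ where \Cref{l3} is available. Everything else reduces to bookkeeping with Riemann sums.
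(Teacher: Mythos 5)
Your proof is correct, and it reaches the conclusion by a genuinely different route than the paper. The paper also argues by an explicit test family, but it uses the product exponentials $u_n = e^{-\rho|n|} = \prod_{j=1}^d e^{-\rho|n_j|}$ with $\rho \to 0$ and evaluates everything on the Fourier side: the $l^2$, $l^{p+1}$ norms and the weighted integrals $\int_{-\pi}^{\pi}|\xi_i|^{2\alpha^\prime}|f(\xi_i)|^2\,d\xi_i$ come from \Cref{quadratic_form} (fractional heat-kernel asymptotics via Poisson summation), and the kinetic form is controlled by Plancherel together with the symbol equivalence of \Cref{l3}, giving the same quotient bound $\rho^{-\frac{d(p-1)}{2}+2\alpha^\prime} \to 0$; the borderline $\alpha = 2$ is handled separately there via $\sigma(\xi) \lesssim |\xi|^{2-\epsilon}$ with $\epsilon$ chosen so that $p < 1 + \frac{2}{d}(2-\epsilon)$. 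Your spreading profiles $u^{(N)}_n = \phi(n/N)$, estimated entirely in physical space through the symmetrized identity $\langle Lu,u\rangle = \frac{1}{2}\sum_{j\neq 0}J_{|j|}\sum_n |u_n-u_{n+j}|^2$ and the near/far split $\min\{(|j|/N)^2,1\}$, produce the same scaling $N^{d-2\alpha^\prime}$ and hence the same critical exponent. What your route buys: it is more elementary and self-contained, since it needs only the one-sided bound $J_k \lesssim k^{-(d+\alpha)}$ (so it also covers order-$\infty$ couplings, and requires neither \Cref{l3} nor \Cref{quadratic_form}), and it disposes of $\alpha = 2$ uniformly because the $\log N$ loss is harmless under the strict inequality on $p$. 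What the paper's route buys: economy within its own architecture, as \Cref{l3} and \Cref{quadratic_form} are needed elsewhere anyway (\Cref{p1,p4} and the anisotropic \Cref{non-radial}, where the test functions $e^{-\rho|n|}e^{-\rho|m|^\gamma}$ use exactly the general-$\gamma$ case of \Cref{quadratic_form}), so the proof of \Cref{p2} comes almost for free once those lemmas are in place. One small caution on your write-up: the aside that the kinetic bound can ``equivalently be read off'' from the Fourier representation is the only step that would require the periodization estimate you flag as a difficulty; your physical-space argument renders it unnecessary, so it is best kept as a remark rather than as part of the proof. Your final logical chain ($K=0$ gives $\nu_0 = 0$ by \eqref{excitation_threshold}, then \Cref{r1} gives $I_\nu < 0$ for all $\nu>0$, and \Cref{l2} supplies a minimizer) matches the paper's exactly.
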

The proposition is shown by an explicit construction of test functions similar to \cite[Proposition 4.1]{weinstein1999excitation}. An extension of this idea to LRI requires more technicality.
\begin{lemma}\label{quadratic_form}
Let $u_n = e^{-\rho |n|^{\gamma}}$ for $n \in \mathbb{Z},\ \rho>0,\ \gamma \in (0,2)$. Then as $\rho \rightarrow 0$, $\sum\limits_{n \in \mathbb{Z}} |u_n|^q \simeq (q \rho)^{-\frac{1}{\gamma}}$ holds for any $q \in (0,\infty)$. For $\rho \ll 1$,
\begin{equation}\label{quadratic_form2}
    \int_{-\pi}^{\pi} |\xi|^\beta |\widehat{u}(\xi)|^2 d\xi \simeq
    \begin{cases}
            \rho^{\frac{\beta-1}{\gamma}}
            & ,\text{ if } \beta \in (-1,1+2\gamma)\\
            \rho^2 |\log \rho| & ,\text{ if } \beta = 1+2\gamma\\
            \rho^2 & ,\text{ if } \beta > 1+2\gamma.
    \end{cases}
\end{equation}
\end{lemma}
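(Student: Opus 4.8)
The plan is to treat the two assertions of \Cref{quadratic_form} separately, each time reducing a discrete sum to its continuum counterpart. For the $l^q$ identity, write $\sum_{n\in\mathbb Z}|u_n|^q=\sum_{n\in\mathbb Z}e^{-q\rho|n|^\gamma}$ and compare with $\int_{\mathbb R}e^{-q\rho|x|^\gamma}\,dx$. Since $x\mapsto e^{-q\rho x^\gamma}$ is monotone decreasing on $[0,\infty)$, the sum over $n\ge 0$ is sandwiched between $\int_0^\infty e^{-q\rho x^\gamma}\,dx$ and $1+\int_0^\infty e^{-q\rho x^\gamma}\,dx$. The substitution $t=q\rho x^\gamma$ evaluates the integral as $\tfrac{1}{\gamma}\Gamma(\tfrac1\gamma)(q\rho)^{-1/\gamma}$, which diverges as $\rho\to 0$; hence the bounded endpoint corrections are negligible and $\sum_n|u_n|^q\simeq(q\rho)^{-1/\gamma}$. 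This part is routine.

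For the weighted Plancherel integral the key object is the continuum Fourier transform $\hat f(\xi)=\int_{\mathbb R}e^{-\rho|x|^\gamma}e^{-ix\xi}\,dx$. Rescaling $y=\rho^{1/\gamma}x$ gives the exact self-similar form $\hat f(\xi)=\rho^{-1/\gamma}g(\rho^{-1/\gamma}\xi)$, where $g=\mathcal F[e^{-|\cdot|^\gamma}]$ is independent of $\rho$. The decisive input, and the place where $\gamma\in(0,2)$ enters, is that $g$ is a constant multiple of the symmetric $\gamma$-stable density: it is smooth, strictly positive, and satisfies the two-sided asymptotics $g(\eta)\simeq\langle\eta\rangle^{-(1+\gamma)}$. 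This algebraic (rather than exponential) decay is precisely the Fourier footprint of the non-smoothness of $e^{-|y|^\gamma}$ at the origin. I would then pass from $\hat f$ to the discrete transform $\widehat u(\xi)=\sum_n e^{-\rho|n|^\gamma}e^{-in\xi}$ by Poisson summation, $\widehat u(\xi)=\sum_{k\in\mathbb Z}\hat f(\xi+2\pi k)$, whose hypotheses hold since $\hat f=O(\langle\cdot\rangle^{-1-\gamma})$ is integrable. Positivity of $g$ makes every summand nonnegative, so the lower bound $\widehat u(\xi)\ge\hat f(\xi)$ is immediate, while the terms $k\ne 0$ are controlled by the decay of $g$; together these give the uniform two-sided profile
\[
\widehat u(\xi)\simeq \rho^{-1/\gamma}\big\langle \rho^{-1/\gamma}\xi\big\rangle^{-(1+\gamma)},\qquad \xi\in(-\pi,\pi].
\]

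With this profile in hand, \eqref{quadratic_form2} reduces to elementary calculus. Substituting it into $\int_{-\pi}^{\pi}|\xi|^\beta|\widehat u(\xi)|^2\,d\xi$ and splitting at the frequency scale $|\xi|\simeq\rho^{1/\gamma}$, the core region contributes $\int_{|\xi|\lesssim\rho^{1/\gamma}}|\xi|^\beta\rho^{-2/\gamma}\,d\xi\simeq\rho^{(\beta-1)/\gamma}$, convergent at the origin exactly when $\beta>-1$, while the tail region contributes $\rho^2\int_{\rho^{1/\gamma}}^{\pi}|\xi|^{\beta-2-2\gamma}\,d\xi$. Comparing the exponent $\beta-2-2\gamma$ with $-1$ yields the trichotomy: the tail integral is dominated by its lower limit when $\beta<1+2\gamma$ (giving $\rho^{(\beta-1)/\gamma}$, matching the core), is logarithmic when $\beta=1+2\gamma$ (giving $\rho^2|\log\rho|$), and is dominated by its upper limit when $\beta>1+2\gamma$ (giving $\rho^2$); summing the two contributions reproduces the three cases. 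Equivalently, one changes variables $\eta=\rho^{-1/\gamma}\xi$ at the outset and reads the cases off the convergence of $\int^{\rho^{-1/\gamma}}|\eta|^\beta g(\eta)^2\,d\eta$.

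The main obstacle is the input about $g$: establishing the \emph{sharp} two-sided decay $g(\eta)\simeq\langle\eta\rangle^{-(1+\gamma)}$ together with strict positivity. The upper bound and the nonvanishing of the leading tail coefficient can be obtained by isolating the singular part of $e^{-|y|^\gamma}$ at $y=0$, whose transform behaves like that of $|y|^\gamma$, namely $\simeq|\eta|^{-(1+\gamma)}$ with coefficient proportional to $\sin(\pi\gamma/2)$ (nonzero on $(0,2)$), while the smoother remainder decays faster; alternatively one invokes the classical asymptotics and positivity of symmetric stable densities. A secondary technicality is the behavior near $\xi=\pm\pi$, where several Poisson terms are comparable and $\widehat u(\xi)\simeq\rho$, so the contribution to the integral is $O(\rho^2)$; this is absorbed into the constants in every case, being of the same order only when $\beta>1+2\gamma$, where the claim is merely $\simeq\rho^2$.
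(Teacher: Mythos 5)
Your proposal is correct and follows essentially the same route as the paper: both identify $\widehat{u}$ (via Poisson summation) with the periodization of the symmetric $\gamma$-stable density, i.e.\ the fractional heat kernel, use its two-sided bound to get the profile $\widehat{u}(\xi) \simeq \rho\,(\rho^{1+1/\gamma}+|\xi|^{1+\gamma})^{-1}$ on $(-\pi,\pi]$, and then split the weighted integral at the scale $|\xi|\simeq\rho^{1/\gamma}$ to obtain the trichotomy in $\beta$. The only minor differences are that the paper quotes the semigroup bound $p_\rho(x)\simeq\min(\rho^{-1/\gamma},\rho|x|^{-(1+\gamma)})$ from the literature rather than re-deriving the stable tail from the singularity of $e^{-|y|^\gamma}$ at the origin as you sketch, and it obtains the $l^q$ asymptotics by evaluating the profile at $\xi=0$ with $\rho$ replaced by $q\rho$, whereas you use a direct (and equally valid) sum--integral comparison.
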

\begin{proof}
Observe that $u_n$ is the fundamental solution of the fractional heat equation
\begin{equation}\label{frac_heat}
    \partial_\rho u = -(-\Delta)^{\frac{\gamma}{2}}u,\ (\xi,\rho) \in \mathbb{T}\times \mathbb{R}
\end{equation}
on the Fourier side where $\rho$ is the time variable. Let $p_\rho(x)$ be the fundamental solution of \eqref{frac_heat} posed on $\mathbb{R}$. 
From the semigroup theory of $\exp(-(-\Delta)^{\frac{\gamma}{2}})$, we have (see \cite[Section 2.6]{kwasnicki2017ten}, for example)
\begin{equation}\label{decay}
    p_\rho(x) \simeq \min(\rho^{-\frac{1}{\gamma}}, \frac{\rho}{|x|^{1+\gamma}}),\ x \in \mathbb{R}.
\end{equation}
On $\mathbb{T}$, the fundamental solution $\tilde{p}_\rho$ is periodized from $p_\rho$, i.e,
\begin{equation*}
    \tilde{p}_\rho(\xi) = \sum_{m \in \mathbb{Z}} p_\rho(\xi + 2m\pi) = \sum_{n\in\mathbb{Z}} e^{-\rho |n|^\gamma}\cos n\xi,
\end{equation*}
where the last equality is by the Poisson summation formula. By the relationship above, $\tilde{p}_\rho$ obeys \eqref{decay} for $|\xi| \leq \pi$, and therefore
\begin{equation*}
    \widehat{u}(\xi) = \tilde{p}_\rho(\xi) \simeq \frac{\rho}{\rho^{1+\frac{1}{\gamma}} + |\xi|^{1+\gamma}}.
\end{equation*}
Setting $\xi = 0$ and replacing $\rho$ by $q\rho$, the desired $l^q$-norm follows.

Let $I$ denote the integral in \eqref{quadratic_form2} where the domain is restricted to $[0,\pi]$ by the evenness of the integrand. Let $k_0 \geq 1$ be the least integer satisfying $\frac{\pi}{2^{k_0-1}} \leq \rho^{\frac{1}{\gamma}}$. Decomposing $I = I_1 + I_2 + I_3$ on $[0,\rho^{\frac{1}{\gamma}}] \cup [\rho^{\frac{1}{\gamma}},\frac{\pi}{2}] \cup [\frac{\pi}{2},\pi]$, respectively, where
\begin{align}\label{dyadic}
    I_1 &\simeq \rho^{-\frac{2}{\gamma}} \int_0^{\rho^{\frac{1}{\gamma}}} |\xi|^\beta d\xi \simeq \rho^{\frac{\beta-1}{\gamma}},\nonumber\\
    I_2 &\simeq \rho^2\sum_{k=1}^{k_0} \int_{\frac{\pi}{2^{k+1}}}^{\frac{\pi}{2^{k-1}}} \frac{2^{-\beta k}d\xi}{(\rho^{1+\frac{1}{\gamma}} + |\xi|^{1+\gamma})^2} \simeq \rho^2 \sum_{k=1}^{k_0} 2^{(1+2\gamma - \beta)k},\\
    I_3 &\simeq \rho^2 \int_{\frac{\pi}{2}}^{\pi} |\xi|^\beta d\xi \simeq \rho^2,\nonumber
\end{align}
\eqref{quadratic_form2} follows from the sign consideration of the geometric series in \eqref{dyadic}.
\end{proof}
\begin{proof}[Proof of \Cref{p2}]
Once $K=0$ is shown, the rest of the statement follows from \Cref{r1}. Let
\begin{equation*}
u_n = e^{-\rho |n|} = \prod\limits_{j=1}^d e^{-\rho |n_j|}.    
\end{equation*}
For $\rho \ll 1$,
\begin{equation*}
\| u \|_{l^2}^{p-1} \simeq  \rho^{-\frac{d(p-1)}{2}},\ \| u \|_{l^{p+1}}^{p+1} \simeq \rho^{-d}.    
\end{equation*}
Let $\alpha \in (0,\infty] \setminus \{2\}$. By \Cref{l3} and the Plancherel Theorem,
\begin{equation*}
    \langle Lu,u \rangle \simeq \sum_{i=1}^d \int_{-\pi}^{\pi} |\xi_i|^{2\alpha^\prime} |f(\xi_i)|^2 d\xi_i \cdot \prod_{j \neq i} \int_{-\pi}^{\pi} |f(\xi_j)|^2 d\xi_j,
\end{equation*}
where $f(\xi_j)$ is the discrete Fourier transform of $e^{-\rho |n_j|}$. By \Cref{quadratic_form},
\begin{equation}
    \int_{-\pi}^{\pi} |\xi_i|^{2\alpha^\prime} |f(\xi_i)|^2 d\xi_i = O(\rho^{2\alpha^\prime - 1}),\ \rho \rightarrow 0.
\end{equation}
Then by the condition on $p$,
\begin{equation*}
    \frac{\| u \|_{l^2}^{p-1}\langle Lu,u \rangle}{\| u \|_{l^{p+1}}^{p+1}} \lesssim \rho^{-\frac{d(p-1)}{2} + 2\alpha^\prime} \xrightarrow[\rho \rightarrow 0]{} 0.
\end{equation*}

If $\alpha = 2$, then $\sigma(\xi) \lesssim |\xi|^{2-\epsilon}$ for any $\epsilon > 0$. Choose $\epsilon > 0$ such that $p < 1 + \frac{2}{d}(2-\epsilon)$. Then a similar analysis as above yields $K = 0$.
\end{proof}

\begin{proposition}\label{p1}
If $p \geq 1 + \frac{4\alpha^\prime}{d}$, then $\nu_0 > 0$. If $\nu > \nu_0$, then $I_\nu < 0$ and has a minimizer. Conversely if $\nu < \nu_0$, then $I_\nu = 0$ and no minimizer exists.
\end{proposition}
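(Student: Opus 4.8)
I would begin by isolating what is genuinely new, since two of the three assertions are already essentially recorded in \Cref{r1}. For $\nu > \nu_0$ the inequality \eqref{gns} fails on some test sequence, so $I_\nu < 0$ by \Cref{l4} together with \Cref{l1}, and \Cref{l2} then furnishes a minimizer; for $\nu < \nu_0$ the inequality \eqref{gns} holds, whence $I_\nu = 0$ by \Cref{l4}. This leaves exactly two statements to prove: that $\nu_0 > 0$, and that no minimizer exists when $\nu < \nu_0$.

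To show $\nu_0 > 0$, I would use \eqref{excitation_threshold} to reduce (with $\kappa = 1$) to the positivity of $K$, i.e. to the Gagliardo-Nirenberg inequality
\begin{equation*}
\| u \|_{l^{p+1}}^{p+1} \lesssim \| u \|_{l^2}^{p-1} \langle Lu,u \rangle, \qquad u \in l^2(\mathbb{Z}^d),
\end{equation*}
valid for $p \geq p_c := 1 + \tfrac{4\alpha^\prime}{d}$. The strategy is to settle the endpoint $p = p_c$ first and then bootstrap. By \Cref{l3} and Plancherel, $\langle Lu,u \rangle \simeq \int_{\mathbb{T}^d} \delta(\xi)|\widehat{u}(\xi)|^2\, d\xi \gtrsim \int_{\mathbb{T}^d} |\xi|^{2\alpha^\prime}|\widehat{u}(\xi)|^2\, d\xi$, the last step holding in every case of \eqref{equivalence}. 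I would then introduce the band-limited interpolant $w:\mathbb{R}^d \to \mathbb{C}$ with $\widehat{w} = \widehat{u}\,\mathds{1}_{\mathbb{T}^d}$, so that $w|_{\mathbb{Z}^d} = u$, giving $\int_{\mathbb{T}^d}|\xi|^{2\alpha^\prime}|\widehat{u}|^2\, d\xi = \| w \|_{\dot H^{\alpha^\prime}}^2$, $\| u \|_{l^2} \simeq \| w \|_{L^2}$, and, by a Plancherel-Pólya (sampling) comparison, $\| u \|_{l^q} \simeq \| w \|_{L^q(\mathbb{R}^d)}$. The discrete endpoint inequality would then follow from the mass-critical continuum fractional Gagliardo-Nirenberg inequality
\begin{equation*}
\| w \|_{L^{p_c+1}}^{p_c+1} \lesssim \| w \|_{L^2}^{p_c-1}\| w \|_{\dot H^{\alpha^\prime}}^2,
\end{equation*}
whose scaling exponents are balanced precisely at $p = p_c$ (mirroring the exponent balance behind \Cref{p2}). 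For the supercritical range $p > p_c$ I would deduce the inequality from the endpoint using $l^2 \hookrightarrow l^\infty$: factoring $|u_n|^{p+1} = |u_n|^{p-p_c}|u_n|^{p_c+1}$ yields $\| u \|_{l^{p+1}}^{p+1} \leq \| u \|_{l^\infty}^{p-p_c}\| u \|_{l^{p_c+1}}^{p_c+1} \leq \| u \|_{l^2}^{p-p_c}\| u \|_{l^{p_c+1}}^{p_c+1}$, and inserting the endpoint bound restores the correct powers of $\| u \|_{l^2}$ and $\langle Lu,u\rangle$. This gives $K > 0$, hence $\nu_0 > 0$.

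For the nonexistence of a minimizer when $\nu < \nu_0$, I would argue by a scaling (virial) contradiction. Suppose $q_*$ minimizes $I_\nu$, so $M[q_*] = \nu$ and $H[q_*] = I_\nu = 0$; since $q_* \neq 0$ the number $A := \tfrac{1}{p+1}\| q_* \|_{l^{p+1}}^{p+1} = \tfrac12\langle Lq_*,q_* \rangle$ is strictly positive. For $\theta > 1$ the rescaling $\sqrt{\theta}\,q_*$ has mass $\theta\nu$ and
\begin{equation*}
H[\sqrt{\theta}\,q_*] = \tfrac{\theta}{2}\langle Lq_*,q_* \rangle - \tfrac{\theta^{(p+1)/2}}{p+1}\| q_* \|_{l^{p+1}}^{p+1} = A\,\theta\bigl(1 - \theta^{(p-1)/2}\bigr) < 0,
\end{equation*}
using $p > 1$. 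Choosing $\theta \in (1,\nu_0/\nu)$, possible since $\nu < \nu_0$, keeps the mass $\theta\nu$ below $\nu_0$, so $I_{\theta\nu} = 0$ by the part just proved; but $\sqrt{\theta}\,q_*$ is then an admissible competitor with $H[\sqrt{\theta}\,q_*] < 0 = I_{\theta\nu}$, a contradiction.

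I expect the endpoint $p = p_c$ to be the main obstacle: everything away from criticality is either a soft consequence of \Cref{r1} or one of the elementary scaling/interpolation steps above, whereas at criticality one must secure the sharp Gagliardo-Nirenberg inequality, where the fractional order $\alpha^\prime$ and the replacement of the discrete quadratic form by the continuum $\dot H^{\alpha^\prime}$ seminorm (legitimate only because frequencies are confined to $\mathbb{T}^d$, cf. \Cref{l3}) are the delicate points. Additional care will be needed when $\alpha^\prime \geq d/2$ (for instance $d \leq 2$ with $\alpha \geq 2$), where the homogeneous continuum inequality should be read within the band-limited class for the comparison to remain meaningful.
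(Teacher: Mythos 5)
Your proposal is correct, but it reaches the paper's conclusions by a genuinely different route in both substantive steps. For $\nu_0>0$, the paper stays entirely on the lattice: with $\theta$ fixed by $\frac{\theta\alpha^\prime}{d}=\frac{1}{2}-\frac{1}{p+1}$ it invokes the discrete Gagliardo--Nirenberg inequality \eqref{gns4} for $|\nabla_d|^{\alpha^\prime}$, dominates $\| |\nabla_d|^{\alpha^\prime}u\|_{l^2}^2$ by $\langle Lu,u\rangle$ via \Cref{l3}, absorbs the excess power of $\langle Lu,u\rangle/\|u\|_{l^2}^2$ using the $l^2$-boundedness of $L$, and treats the exponents excluded from \eqref{condition} by the bootstrap \eqref{bootstrap}. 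You instead prove only the single endpoint $p_c=1+\frac{4\alpha^\prime}{d}$ --- by passing to the band-limited interpolant, invoking the Plancherel--P\'olya sampling inequality and the continuum mass-critical fractional Gagliardo--Nirenberg inequality --- and then reach every $p>p_c$ through $\|u\|_{l^{p+1}}^{p+1}\le\|u\|_{l^2}^{p-p_c}\|u\|_{l^{p_c+1}}^{p_c+1}$, which is the same $l^2\hookrightarrow l^\infty$ device as \eqref{bootstrap} but organized around one endpoint rather than the case analysis \eqref{condition}. Your structure is arguably cleaner and does not take a discrete GNS inequality as a black box, at the price of importing continuum machinery; note that only the direction $\|u\|_{l^{p_c+1}}\lesssim\|w\|_{L^{p_c+1}}$ of your asserted equivalence is actually used (and is the direction that holds unproblematically at exponential type exactly $\pi$ --- the reverse direction is delicate there, so it is safer to state the sampling comparison one-sidedly). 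For nonexistence below threshold, the paper substitutes the putative minimizer, which satisfies $H[q_*]=0$, into \eqref{gns} at mass $\nu_0$ and reads off $\nu_0\le\nu$ in \eqref{no_ground}; your scaling contradiction $H[\sqrt{\theta}\,q_*]<0=I_{\theta\nu}$ for $1<\theta<\nu_0/\nu$ is equally short and valid (both arguments require $\langle Lq_*,q_*\rangle>0$, which holds since otherwise $\|q_*\|_{l^{p+1}}=0$ would force $q_*=0$), and it reuses the same strict-superlinearity-under-$L^2$-scaling mechanism that the paper employs inside \Cref{l2} to exclude dichotomy.
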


\begin{proof}
Given $\nu_0 > 0$, \Cref{r1} implies the existence of minimizer for any $\nu > \nu_0$ and, conversely, $I_\nu = 0$ for any $\nu < \nu_0$. If $q_* \in l^2(\mathbb{Z}^d)$ were a minimizer of $I_\nu$ for $\nu < \nu_0$, then \eqref{gns} yields 
\begin{equation}\label{no_ground}
    \| q_* \|_{l^{p+1}}^{p+1}=\frac{p+1}{2} \langle Lq_*,q_*\rangle \leq \frac{p+1}{2} \nu_0 ^{-\frac{p-1}{2}} \nu^{\frac{p-1}{2}}\langle Lq_*,q_*\rangle,
\end{equation}
implying $\nu_0 \leq \nu$, a contradiction.

To show estimates of the form \eqref{gns} hold, let $\theta \in (0,1)$ satisfy $\frac{\theta \alpha^\prime}{d} = \frac{1}{2} - \frac{1}{p+1}$. Note by standard algebra that the condition on $\theta$ holds if and only if 
\begin{equation}\label{condition}
\begin{split}
\{\alpha \in [2,\infty],\ d \in \{1,2\}\} &\text{\ or\ } \{\alpha \in [2,\infty],\ d = 3,\ p < \frac{d+2}{d-2}\}\\
\text{or\ } \{\alpha \in [1,2),\ d = 1\} &\text{\ or\ } \{\alpha \in (0,2),\ d > \alpha,\ p < \frac{d+\alpha}{d-\alpha}\}.
\end{split}
\end{equation}
By \Cref{l3},
\begin{equation*}
    \| |\nabla_d|^{\alpha^\prime}u\|_{l^2}^2 \simeq \int |\xi|^{2\alpha^\prime}|\widehat{u}(\xi)|^2 d\xi \lesssim \int \sigma(\xi)|\widehat{u}(\xi)|^2 d\xi = \langle Lu,u \rangle,
\end{equation*}
where $|\nabla_d| = (-\Delta_d)^{\frac{1}{2}}$ and $\Delta_d$ is the graph Laplacian. Then by the discrete Gagliardo-Nirenberg inequality, we have
\begin{equation}\label{gns4}
    \| u \|_{l^{p+1}} \lesssim \| u \|_{l^2}^{1-\theta} \| |\nabla_d|^{\alpha^\prime}u\|_{l^2}^\theta \lesssim \| u \|_{l^2}^{1-\theta} \langle Lu,u \rangle^{\frac{\theta}{2}}.
\end{equation}
Let the parameters satisfy \eqref{condition}. Then \eqref{gns4} and $p \geq 1 + \frac{4\alpha^\prime}{d}$ imply
\begin{equation*}
    \| u \|_{l^{p+1}} \lesssim \| u \|_{l^{2}}^{\frac{p-1}{p+1}} \langle Lu,u \rangle^{\frac{1}{p+1}} \left(\frac{\langle Lu,u \rangle}{\| u \|_{l^{2}}^2}\right)^{\frac{d(p-1)-4\alpha^\prime}{4\alpha^\prime(p+1)}} \lesssim \| u \|_{l^{2}}^{\frac{p-1}{p+1}} \langle Lu,u \rangle^{\frac{1}{p+1}},
\end{equation*}
and hence $\nu_0 > 0$. If $p \gg 1$ does not satisfy \eqref{condition}, then we can bootstrap from the previous argument by choosing $p_1 < p$ satisfying \eqref{condition}, and hence
\begin{equation}\label{bootstrap}
    \sum_n |u_n|^{p_1+1} = \| u \|_{l^{p_1+1}}^{p_1+1} \lesssim \| u \|_{l^{2}}^{p_1-1} \langle Lu,u \rangle.
\end{equation}
Assume $\| u \|_{l^2} = 1$. By $l^2 \hookrightarrow l^\infty$, it follows that $\| u \|_{l^{p+1}}^{p+1} \lesssim \langle Lu,u \rangle$, and the general case follows from replacing $u$ by $\frac{u}{\| u \|_{l^2}}$.
\end{proof}

\begin{remark}
The quantity $\nu_{0}$ is the decisive mass threshold for ground-state existence, and this applies to \Cref{p1,p3,p4}. Hence in each setting, if $\nu>\nu_{0}$, then $I_\nu<0$ and a minimizer exists by \Cref{l2}, and if $\nu<\nu_{0}$, then $I_\nu=0$ and no minimizer exists.    
\end{remark}

\begin{corollary}\label{p4}
    Let $d = 1$. Suppose there exists $q > 1$ such that $C_1 \leq k \log^q (k) J_k \leq C_2$ for all $k \geq k_0$ for some $k_0 \in \mathbb{N}$ and $0< C_1 < C_2$. Then $\nu_0 > 0$ for any $p > 1$. 
\end{corollary}
\begin{proof}
We claim
\begin{equation}\label{symbol_log}
\sigma(\xi) \simeq_{q} \langle \log |\xi|\rangle^{-(q-1)}.
\end{equation}
Assuming \eqref{symbol_log}, let $\tilde{\alpha} \ll 1$ such that $1+ 2\tilde{\alpha} < p < \frac{1+\tilde{\alpha}}{1-\tilde{\alpha}}$ and $\frac{\theta \tilde{\alpha}}{2} = \frac{1}{2} - \frac{1}{p+1}$ for some $\theta \in (0,1)$; the general case when $p \geq \frac{1+\tilde{\alpha}}{1-\tilde{\alpha}}$ follows as the bootstrapping argument \eqref{bootstrap}. By \eqref{gns4}, \eqref{symbol_log}, it follows that 
\begin{equation*}
\| u \|_{l^{p+1}}^{p+1} \lesssim \| u \|_{l^{2}}^{p-1} \langle Lu,u \rangle    
\end{equation*}
for any $u \in l^2(\mathbb{Z})$, and hence $\nu_0 > 0$. It suffices to show \eqref{symbol_log} for $|\xi| < k_0^{-1}$. 

To obtain the upper bound, let $M \simeq (|\xi|^{\frac{2}{3}} |\log |\xi||^{\frac{q-1}{3}})^{-1}$. Then,
\begin{equation*}
\begin{split}
\sum_{|m| \leq M} J_{m}\sin^2\left(\frac{m  \xi}{2}\right) &\lesssim \overline{J}|\xi|^2 \sum_{k=1}^M k^{2} \lesssim |\xi|^2 M^{3} \simeq \langle\log |\xi|\rangle^{-(q-1)},\\
\sum_{|m| > M}  J_{m}\sin^2\left(\frac{m \xi}{2}\right) &\lesssim \sum_{k > M} \frac{1}{k \log^q k} \simeq \frac{1}{(\log M)^{q-1}} \lesssim \langle\log |\xi|\rangle^{-(q-1)}.
\end{split}
\end{equation*}
To obtain the lower bound, we show
\begin{equation*}
    \sigma(\xi) \geq \sum\limits_{|m| > |\xi|^{-1}} J_m \sin^2 \left(\frac{m \xi}{2}\right) \gtrsim \sum\limits_{k > |\xi|^{-1}} \frac{\sin^2 \left(\frac{k |\xi|}{2}\right)}{k \log^q k}.
\end{equation*}

Since $\sin^2 \left(\frac{k|\xi|}{2}\right) \geq \frac{1}{2}$ whenever $k \in E_j := [\frac{(2j+1/2)\pi}{|\xi|},\frac{(2j+3/2)\pi}{|\xi|}]$ for $j \in \mathbb{Z}$, it follows that
\begin{equation*}
\begin{split}
\sum\limits_{k > |\xi|^{-1}} \frac{\sin^2 \left(\frac{k |\xi|}{2}\right)}{k \log^q k} &\gtrsim \sum_{j=0}^\infty \sum_{k \in E_j} \frac{1}{k \log^q k}\\
&\simeq \sum_{j=0}^\infty \left(\log \frac{(2j+1)\pi}{|\xi|}\right)^{-(q-1)} - \left(\log \frac{(2j+\frac{3}{2})\pi}{|\xi|}\right)^{-(q-1)}\\
&\gtrsim \sum_{j=0}^\infty \frac{1}{(2j+3/2)\pi \log^q \left(\frac{(2j+3/2)\pi}{|\xi|}\right)} \gtrsim \sum_{k \gg |\xi|^{-1}} \frac{1}{k \log^q k} \simeq \langle \log |\xi|\rangle^{-(q-1)},
\end{split}
\end{equation*}
where the difference quotient was estimated below by the derivative of $(\log (\cdot))^{-(q-1)}$. 
\end{proof}
\subsection{Non-radial algebraic decay}\label{non-radial}

Although our analysis so far has focused on radially symmetric interaction kernels, the underlying variational framework applies more generally. In particular, it extends naturally to non-radial couplings, as illustrated in the example below. For $i=1,2$, let $J^{(i)}:\mathbb{Z} \rightarrow [0,\infty)$ be an interaction kernel of order $\alpha_i \in (0,\infty]$ and let $\alpha_1 \geq \alpha_2$ by symmetry. Define
\begin{equation*}
    J_{k_1,k_2} = J^{(1)}_{k_1} \delta_{k_2} + \delta_{k_1} J^{(2)}_{k_2}. 
\end{equation*}
Our motivation is to extend the ideas of \Cref{radial_algebraic} to characterize the ground states of the anisotropic 2D model given by \eqref{main_model} where
\begin{equation*}
\begin{split}
(L u)_{n,m} &= \sum_{(n^\prime,m^\prime) \neq (n,m)} J_{n-n^\prime,m-m^\prime}(u_{n,m}-u_{n^\prime,m^\prime})\\
&= \sum_{n^\prime \neq n} J^{(1)}_{n-n^\prime}(u_{n,m}-u_{n^\prime,m}) + \sum_{m^\prime \neq m} J^{(2)}_{m-m^\prime}(u_{n,m}-u_{n,m^\prime}) =: (L_1 u)_{n,m} + (L_2 u)_{n,m},     
\end{split}
\end{equation*}
and $L = \mathcal{F}^{-1} (\sigma(\xi_1,\xi_2)) \mathcal{F}$ where
\begin{equation}\label{symbol_2d}
    \sigma(\xi_1,\xi_2) = \sigma_1(\xi_1) + \sigma_2(\xi_2) := \sum\limits_{m \in \mathbb{Z}} J^{(1)}_{m} \sin^2 \left(\frac{m \xi_1}{2}\right) +  J^{(2)}_{m} \sin^2 \left(\frac{m\xi_2}{2}\right).
\end{equation}
\begin{proposition}
Let $\alpha_i^\prime$ be defined as \eqref{prime}. If $p < 1 + \frac{4\alpha_1^\prime\alpha_2^\prime}{\alpha_1^\prime + \alpha_2^\prime}$, then $I_\nu$ has a minimizer for any $\nu > 0$, or equivalently, $\nu_0 = 0$. 
\end{proposition}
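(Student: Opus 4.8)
The plan is to follow the strategy of \Cref{p2}: by \Cref{r1} it suffices to exhibit a family of test functions along which the quotient defining $K$ in \eqref{gns5} tends to $0$, since $K = 0$ forces $\nu_0 = 0$, and then $I_\nu < 0$ admits a minimizer for every $\nu > 0$. The new feature is the split symbol \eqref{symbol_2d}, $\sigma(\xi_1,\xi_2) = \sigma_1(\xi_1) + \sigma_2(\xi_2)$, whose two summands carry the distinct orders $\alpha_1 \geq \alpha_2$. To exploit this I would use an \emph{anisotropic} product test function $u_{n_1,n_2} = u^{(1)}_{n_1}u^{(2)}_{n_2}$ with $u^{(i)}_n = e^{-\rho_i|n|}$, letting the two rates $\rho_1,\rho_2$ vanish at different speeds tuned to the two orders.

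First I would record the separable estimates. By \Cref{quadratic_form} with $\gamma = 1$ (whose $l^q$ asymptotics are $q$-independent up to constants), $\|u\|_{l^2}^2 \simeq \rho_1^{-1}\rho_2^{-1}$ and $\|u\|_{l^{p+1}}^{p+1} \simeq \rho_1^{-1}\rho_2^{-1}$. For the quadratic form, Plancherel together with the factorization $\widehat{u} = \widehat{u^{(1)}}\,\widehat{u^{(2)}}$ gives $\langle Lu,u\rangle = \big(\int \sigma_1|\widehat{u^{(1)}}|^2\big)\|u^{(2)}\|_{l^2}^2 + \|u^{(1)}\|_{l^2}^2\big(\int \sigma_2|\widehat{u^{(2)}}|^2\big)$; invoking $\sigma_i(\xi)\simeq|\xi|^{2\alpha_i'}$ from \Cref{l3} and \Cref{quadratic_form} (applicable since $2\alpha_i' \leq 2 < 3 = 1 + 2\gamma$) yields $\int\sigma_i|\widehat{u^{(i)}}|^2 \simeq \rho_i^{2\alpha_i'-1}$, hence $\langle Lu,u\rangle \simeq \rho_1^{-1}\rho_2^{-1}\big(\rho_1^{2\alpha_1'} + \rho_2^{2\alpha_2'}\big)$.

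The quotient in \eqref{gns5} then reduces to $\dfrac{\|u\|_{l^2}^{p-1}\langle Lu,u\rangle}{\|u\|_{l^{p+1}}^{p+1}} \simeq (\rho_1^{-1}\rho_2^{-1})^{(p-1)/2}\big(\rho_1^{2\alpha_1'}+\rho_2^{2\alpha_2'}\big)$. Writing $\rho_i = \rho^{a_i}$ with $a_i > 0$ and $s := \tfrac{p-1}{2}$, this is a sum $\rho^{E_1} + \rho^{E_2}$ with $E_1 = a_1(2\alpha_1'-s) - a_2 s$ and $E_2 = -a_1 s + a_2(2\alpha_2'-s)$, and it tends to $0$ as $\rho\to 0$ as soon as $E_1,E_2 > 0$. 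The heart of the proof is the elementary fact that such positive $a_1,a_2$ exist precisely when $s^2 < (2\alpha_1'-s)(2\alpha_2'-s)$; expanding, this is exactly $s < \tfrac{2\alpha_1'\alpha_2'}{\alpha_1'+\alpha_2'}$, i.e.\ the hypothesis $p < 1 + \tfrac{4\alpha_1'\alpha_2'}{\alpha_1'+\alpha_2'}$ (and the auxiliary conditions $s < 2\alpha_i'$ follow for free from it). This is exactly where the harmonic-mean threshold emerges and is the step I expect to require the most care.

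Finally, the threshold orders $\alpha_i = 2$ escape the clean power law, since there \Cref{l3} gives $\sigma_i(\xi)\simeq|\xi|^2\langle\log|\xi|\rangle$. As in the $\alpha = 2$ case of \Cref{p2}, I would replace the exact asymptotics by the crude upper bound $\sigma_i(\xi)\lesssim|\xi|^{2-\epsilon}$ near the origin, which amounts to lowering the effective order to $1-\tfrac{\epsilon}{2}$; since $p$ lies \emph{strictly} below the threshold and $(\alpha_1',\alpha_2')\mapsto\tfrac{4\alpha_1'\alpha_2'}{\alpha_1'+\alpha_2'}$ is continuous, a small enough $\epsilon$ preserves the strict inequality, and the same choice of $a_1,a_2$ drives the quotient to $0$. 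Thus $K = 0$ in all cases, which completes the argument via \Cref{r1}.
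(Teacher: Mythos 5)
Your proposal is correct and follows essentially the same route as the paper: a product test function with anisotropic widths matched to the two orders, estimated via \Cref{l3} and \Cref{quadratic_form}, driving the quotient \eqref{gns5} to zero (with the same $\sigma_i(\xi) \lesssim |\xi|^{2-\epsilon}$ trick when $\alpha_i = 2$), and concluding through \Cref{r1}. The only difference is cosmetic: you implement the anisotropy through two decay rates $\rho^{a_1},\rho^{a_2}$ on pure exponentials, while the paper uses a single rate with a stretched exponential $e^{-\rho|n|}e^{-\rho|m|^{\gamma}}$, $\gamma = \alpha_2^\prime/\alpha_1^\prime$; choosing $a_1 = 1$, $a_2 = \alpha_1^\prime/\alpha_2^\prime$ in your scheme reproduces exactly the paper's scaling, and your two-exponent formulation just makes explicit where the harmonic-mean threshold comes from.
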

\begin{proof}
As \Cref{p2}, it suffices to show the claim for $\alpha_i \neq 2$ for $i=1,2$. Let $\gamma = \frac{\alpha_2^\prime}{\alpha_1^\prime} \in (0,1]$ and define
\begin{equation*}
    u_{n,m} = v_n w_m := e^{-\rho |n|} e^{-\rho |m|^{\gamma}}.
\end{equation*}
A routine computation using \Cref{l3,quadratic_form} yields
\begin{equation*}
\begin{split}
\| u \|_{l^2}^{p-1} &\simeq \rho^{-\frac{p-1}{2}(1+\frac{1}{\gamma})},\ \| u \|_{l^{p+1}}^{p+1} \simeq \rho^{-(1+\frac{1}{\gamma})},\ \langle Lu,u \rangle \simeq \rho^{2\alpha_1^\prime - 1 - \frac{1}{\gamma}},    
\end{split}
\end{equation*}
and the claim follows as \Cref{p2}.
\end{proof}
To show $\nu_0 > 0$ for sufficiently large nonlinearity, we observe that the variational characterization of ground states extends straightforwardly from radial coupling to our consideration in this section. In particular, \Cref{l1,l4} and \Cref{l2} hold. We identify the range of parameters that ensures \eqref{gns} holds for all $l^2$-functions.
\begin{proposition}\label{p3}
    If $p \geq 1 + \frac{4\alpha_1^\prime\alpha_2^\prime}{\alpha_1^\prime + \alpha_2^\prime}$, then $\nu_0 > 0$. 
\end{proposition}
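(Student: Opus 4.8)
The plan is to establish the Gagliardo--Nirenberg-type inequality \eqref{gns} for the non-radial symbol by reducing the two-dimensional estimate to a tensor-product of the one-dimensional fractional inequalities already used in \Cref{p1}. Since $\sigma(\xi_1,\xi_2) = \sigma_1(\xi_1) + \sigma_2(\xi_2)$ splits additively, \Cref{l3} applied to each factor gives $\sigma_i(\xi_i) \simeq |\xi_i|^{2\alpha_i'}$ (up to the logarithmic correction when $\alpha_i = 2$, which is handled by the usual $\epsilon$-loss as in the proof of \Cref{p2}). Hence $\langle Lu,u\rangle \simeq \| |\partial_1|^{\alpha_1'} u\|_{l^2}^2 + \| |\partial_2|^{\alpha_2'} u\|_{l^2}^2$, where $|\partial_i|^{\alpha_i'}$ denotes the one-dimensional fractional difference operator acting in the $i$-th variable. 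The goal is to control $\|u\|_{l^{p+1}}^{p+1}$ by $\|u\|_{l^2}^{p-1}\langle Lu,u\rangle$ whenever $p \geq 1 + \frac{4\alpha_1'\alpha_2'}{\alpha_1'+\alpha_2'}$.

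First I would record the exponent bookkeeping: the ``effective'' critical power $1 + \frac{4\alpha_1'\alpha_2'}{\alpha_1'+\alpha_2'}$ corresponds to the anisotropic Sobolev embedding in which each direction contributes regularity $\alpha_i'$, so the relevant homogeneity is captured by the harmonic-mean exponent $\bar\alpha' := \frac{2\alpha_1'\alpha_2'}{\alpha_1'+\alpha_2'}$; the threshold then reads $p \geq 1 + \frac{2\bar\alpha'}{1}$, mirroring the radial case $p \geq 1 + \frac{4\alpha'}{d}$ with $d=2$ once one accounts for the anisotropy. Next I would invoke an anisotropic Gagliardo--Nirenberg inequality of the form
\begin{equation*}
\| u \|_{l^{p+1}} \lesssim \| u \|_{l^2}^{1-\theta}\, \big(\| |\partial_1|^{\alpha_1'} u\|_{l^2}^{2} + \| |\partial_2|^{\alpha_2'} u\|_{l^2}^{2}\big)^{\theta/2},
\end{equation*}
valid for an appropriate $\theta \in (0,1)$ determined by scaling in each variable separately. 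This is the direct analogue of \eqref{gns4}, and the admissible range of $(p,\alpha_1,\alpha_2)$ playing the role of \eqref{condition} must be identified by matching the two scaling exponents (one per coordinate) simultaneously.

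Once this anisotropic inequality is in hand, the mass-critical/supercritical case $p = 1 + \frac{4\alpha_1'\alpha_2'}{\alpha_1'+\alpha_2'}$ makes the powers of $\|u\|_{l^2}$ and $\langle Lu,u\rangle$ balance exactly, yielding $\|u\|_{l^{p+1}}^{p+1} \lesssim \|u\|_{l^2}^{p-1}\langle Lu,u\rangle$; for strictly supercritical $p$ within the admissible range one absorbs the extra factor as in the computation following \eqref{gns4}, and for very large $p$ outside that range one bootstraps through an admissible $p_1 < p$ exactly as in \eqref{bootstrap}, using $l^2 \hookrightarrow l^\infty$. Establishing \eqref{gns} for some $\nu > 0$ then gives $\nu_0 > 0$ by the definition of $\nu_0$ together with \Cref{l4}, completing the proof.

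I expect the main obstacle to be proving the anisotropic Gagliardo--Nirenberg inequality with two independent fractional regularity exponents, since the standard discrete Gagliardo--Nirenberg inequality invoked in \Cref{p1} is isotropic. The cleanest route is likely to interpolate one variable at a time: fix $n$ and apply the one-dimensional fractional inequality in $m$, then apply the one-dimensional inequality in $n$ to the resulting mixed-norm quantity, tracking the Lebesgue exponents through a mixed-norm (Minkowski) inequality. The bookkeeping to verify that the two scaling conditions are compatible precisely when $p \leq 1 + \frac{4\alpha_1'\alpha_2'}{\alpha_1'+\alpha_2'}$ borders the admissible region — i.e.\ that the threshold in the hypothesis is exactly the anisotropic mass-critical exponent — is where care is required, but the additive structure of $\sigma$ makes the reduction to one-dimensional estimates natural rather than genuinely obstructive.
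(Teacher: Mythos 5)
Your proposal is correct in outline and reaches the result, but it proves the central technical ingredient by a genuinely different route than the paper. Both arguments reduce \Cref{p3} to an anisotropic Gagliardo--Nirenberg inequality, followed by the same absorption-of-the-supercritical-factor step and the bootstrap \eqref{bootstrap}; the difference is how that inequality is obtained. The paper introduces the single anisotropic operator $D = \mathcal{F}^{-1}(\xi_1^2+|\xi_2|^\alpha)^{1/2}\mathcal{F}$ with $\alpha = 2\alpha_2^\prime/\alpha_1^\prime$, $s=\alpha_1^\prime$, builds a Littlewood--Paley decomposition adapted to this symbol, and derives \eqref{gns2} from the Bernstein-type bounds of \Cref{Bernstein,Bernstein2}, following the cited method of Younghun; the comparison $\|D^s u\|_{l^2}^2 \lesssim \langle Lu,u\rangle$ in \eqref{quadratic_form3} is then exactly your observation that $\sigma_1(\xi_1)+\sigma_2(\xi_2) \gtrsim |\xi_1|^{2\alpha_1^\prime}+|\xi_2|^{2\alpha_2^\prime}$ via \Cref{l3} applied in each variable. (Note that at $\alpha_i=2$ the logarithmic factor only \emph{helps} in this lower-bound direction, so no $\epsilon$-loss is needed; the $\epsilon$-loss of \Cref{p2} concerns the opposite, upper-bound direction, so your remark there is misdirected though harmless.) Your slicing strategy---1D fractional Gagliardo--Nirenberg in the second index for each fixed first index, H\"older in the first index, then the 1D inequality applied to $v_n := \|u(n,\cdot)\|_{l^2}$---is more elementary, avoids Littlewood--Paley theory entirely, and exploits the additive tensor structure of \eqref{symbol_2d}; the exponent bookkeeping does close exactly at $p = 1+\tfrac{4\alpha_1^\prime\alpha_2^\prime}{\alpha_1^\prime+\alpha_2^\prime}$, which is, as you computed, the anisotropic mass-critical exponent. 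The one step you should make explicit is how the fractional energy of the sliced function $v$ is controlled: it follows from the representation $\langle L_1u,u\rangle = \tfrac12\sum_{n,n^\prime}J^{(1)}_{|n-n^\prime|}\sum_m |u_{n,m}-u_{n^\prime,m}|^2$, the triangle inequality $|v_n - v_{n^\prime}| \leq \|u(n,\cdot)-u(n^\prime,\cdot)\|_{l^2}$, and \Cref{l3} in one dimension---this collapses cleanly precisely because the inner norm is $l^2$, matching the quadratic form, and is the real content behind your invocation of Minkowski. As for trade-offs: the paper's Bernstein machinery treats all exponents uniformly and transfers ready-made $\mathbb{R}^2$ results to $\mathbb{Z}^2$, whereas your route is self-contained but inherits one-dimensional admissibility restrictions (the analogue of \eqref{condition}) in each variable together with the H\"older constraint that the energy factor fit in $l^2$ (roughly $p \lesssim 1 + 4\alpha_2^\prime$), so it must lean on the bootstrap \eqref{bootstrap} in more cases; since the bootstrap is available, this costs nothing in the final statement.
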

Since the dispersion relation depends on directions, the multiplier \eqref{symbol_2d} is not radial on $\mathbb{T}^2$, and to address this issue, the anisotropic modification of the Littlewood-Paley decomposition is reviewed. For $\alpha > 0$, define $D = \mathcal{F}^{-1}(\xi_1^2 + |\xi_2|^\alpha)^{1/2}\mathcal{F}$. Define an even function $\phi \in C^\infty_c((-2,2);[0,1])$ where $\phi(z) = 1$ for $z \in [-1,1]$. Let $\psi(z) := \phi(z)-\phi(2z)$. For $N \in 2^\mathbb{Z}$, let 
\begin{equation*}
    \psi_N(\xi_1,\xi_2) = \psi\left(\frac{\sqrt{\xi_1^2+|\xi_2|^\alpha}}{\pi N}\right);\ P_N := \mathcal{F}^{-1}\psi_N \mathcal{F}.
\end{equation*}
On lattices, the dyadic decomposition is restricted to $N \leq 1$. The mapping properties of $D$ and $P_N$ were studied in $\mathbb{R}^2$ in \cite[Lemma 3.1, 3.3]{CHOI2022100406}. These results can be transferred to $\mathbb{Z}^2$ with ease.
\begin{lemma}\label{Bernstein}
Let $1 \leq q_1 \leq q_2 \leq \infty,\ N \leq 1$, and $\alpha > 0$. Then,
\begin{equation*}
\| P_N f \|_{l^{q_2}(\mathbb{Z}^2)} \lesssim N^{(1+\frac{2}{\alpha})(\frac{1}{q_1}-\frac{1}{q_2})} \| P_N f \|_{l^{q_1}(\mathbb{Z}^2)}.    
\end{equation*}
\end{lemma}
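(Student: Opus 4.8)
The plan is to prove this anisotropic Bernstein estimate by the classical route: realize $P_N$ as convolution against a kernel and apply Young's inequality on $\mathbb{Z}^2$, with the only delicate point being the $l^r$-size of the kernel, which must encode the anisotropic volume $N^{1+\frac{2}{\alpha}}$ of the frequency support of $\psi_N$. Since $\psi$ is supported in $\{\frac12 \leq |z| \leq 2\}$, I would first fatten the multiplier: fix an even $\widetilde{\psi} \in C^\infty_c$ with $\widetilde{\psi} \equiv 1$ on $\{\frac14 \leq |z| \leq 4\}$, put $\widetilde{\psi}_N(\xi) = \widetilde{\psi}\!\left(\sqrt{\xi_1^2+|\xi_2|^\alpha}/(\pi N)\right)$ and $\widetilde{P}_N = \mathcal{F}^{-1}\widetilde{\psi}_N\mathcal{F}$, so that $\widetilde{\psi}_N \psi_N = \psi_N$ and hence $P_N f = \widetilde{P}_N P_N f = \widetilde{K}_N * (P_N f)$ with kernel $\widetilde{K}_N = \mathcal{F}^{-1}\widetilde{\psi}_N$. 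Discrete Young's inequality with $\frac{1}{r} = 1 - \left(\frac{1}{q_1}-\frac{1}{q_2}\right)$ then yields $\|P_N f\|_{l^{q_2}} \leq \|\widetilde{K}_N\|_{l^r}\,\|P_N f\|_{l^{q_1}}$, so the whole lemma reduces to the kernel bound $\|\widetilde{K}_N\|_{l^r(\mathbb{Z}^2)} \lesssim N^{(1+\frac{2}{\alpha})(1-\frac{1}{r})}$, where $1-\frac{1}{r} = \frac{1}{q_1}-\frac{1}{q_2}$ recovers the stated exponent.

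Second, I would establish the kernel estimate on $\mathbb{R}^2$ by anisotropic scaling. The parabolic-type dilation $(\xi_1,\xi_2) = (N\eta_1, N^{2/\alpha}\eta_2)$ leaves the symbol invariant, $\sqrt{\xi_1^2+|\xi_2|^\alpha}/N = \sqrt{\eta_1^2+|\eta_2|^\alpha}$, so $\widetilde{\psi}_N(\xi) = \widetilde{\psi}_1(N^{-1}\xi_1, N^{-2/\alpha}\xi_2)$ and the continuous kernel obeys the exact scaling $\widetilde{K}_N^{\mathbb{R}}(x) = N^{1+\frac{2}{\alpha}}\,\widetilde{K}_1^{\mathbb{R}}(Nx_1, N^{2/\alpha}x_2)$. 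A change of variables then gives $\|\widetilde{K}_N^{\mathbb{R}}\|_{L^r(\mathbb{R}^2)} = N^{(1+\frac{2}{\alpha})(1-\frac{1}{r})}\|\widetilde{K}_1^{\mathbb{R}}\|_{L^r(\mathbb{R}^2)}$, and the finiteness of $\|\widetilde{K}_1^{\mathbb{R}}\|_{L^r}$ for every $r \in [1,\infty]$ is precisely the content of \cite[Lemma 3.1, 3.3]{CHOI2022100406}, which supplies the pointwise anisotropic decay of $\widetilde{K}_1^{\mathbb{R}}$.

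Third, I would transfer to $\mathbb{Z}^2$. Since $N \leq 1$, the support $\{\sqrt{\xi_1^2+|\xi_2|^\alpha}\simeq N\}$ is an anisotropic box of side lengths $\simeq N$ and $\simeq N^{2/\alpha}$ sitting inside $\mathbb{T}^2 = (-\pi,\pi]^2$; extending $\widetilde{\psi}_N$ by zero, the discrete kernel is literally the restriction of the continuous one, $\widetilde{K}_N(n) = \widetilde{K}_N^{\mathbb{R}}(n)$ for $n \in \mathbb{Z}^2$. The dual length scales $N^{-1}$ and $N^{-2/\alpha}$ are both $\geq 1$, so $\widetilde{K}_N^{\mathbb{R}}$ varies slowly across unit lattice cells and the Riemann-sum comparison $\sum_{n}|\widetilde{K}_N^{\mathbb{R}}(n)|^r \lesssim \|\widetilde{K}_N^{\mathbb{R}}\|_{L^r(\mathbb{R}^2)}^r + \|\widetilde{K}_N^{\mathbb{R}}\|_{L^\infty}^r$ holds. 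Since $\|\widetilde{K}_N^{\mathbb{R}}\|_{L^\infty} \simeq N^{1+\frac{2}{\alpha}}$ and $N \leq 1$, the $L^\infty$ term is dominated by the $L^r$ term, which gives $\|\widetilde{K}_N\|_{l^r(\mathbb{Z}^2)} \lesssim N^{(1+\frac{2}{\alpha})(1-\frac{1}{r})}$ and closes the argument.

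I expect the only genuine difficulty to lie in the $\mathbb{R}^2$ kernel decay of the second step: the symbol $(\xi_1^2+|\xi_2|^\alpha)^{1/2}$ is non-smooth along $\xi_2 = 0$ whenever $\alpha \notin 2\mathbb{N}$, so plain integration by parts in $\xi_2$ fails and one must exploit the anisotropic structure to obtain enough decay of $\widetilde{K}_1^{\mathbb{R}}$ for $L^r$-integrability across the full range. This is exactly the estimate imported from \cite{CHOI2022100406}; granting it, the fattening reduction and the lattice-to-continuum comparison are routine, which is why the transfer from $\mathbb{R}^2$ to $\mathbb{Z}^2$ can indeed be carried out \emph{with ease}.
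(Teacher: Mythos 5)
Your overall route---fattening $\psi_N$ to $\widetilde\psi_N$, discrete Young's inequality, anisotropic rescaling on $\mathbb{R}^2$, then sampling the continuous kernel on $\mathbb{Z}^2$---is exactly the transfer the paper has in mind: the paper offers no proof of this lemma beyond citing \cite[Lemma 3.1, 3.3]{CHOI2022100406} for the $\mathbb{R}^2$ statements and asserting that the passage to $\mathbb{Z}^2$ is easy, and your first two steps (including the bookkeeping $1-\frac{1}{r}=\frac{1}{q_1}-\frac{1}{q_2}$ and the scaling identity $\|\widetilde K_N^{\mathbb{R}}\|_{L^r}=N^{(1+\frac{2}{\alpha})(1-\frac{1}{r})}\|\widetilde K_1^{\mathbb{R}}\|_{L^r}$) are correct. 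The gap is in your third step. The claim that for $N\le 1$ the support of $\widetilde\psi_N$ ``sits inside $\mathbb{T}^2$,'' so that the discrete kernel is literally the restriction of the continuous one, is false for $N$ comparable to $1$: with the paper's normalization, $\mathrm{supp}\,\widetilde\psi_N\subseteq\{|\xi_1|\lesssim \pi N,\ |\xi_2|\lesssim(\pi N)^{2/\alpha}\}$ (with your fattening constants, roughly $8\pi N$ and $(8\pi N)^{2/\alpha}$), so already for $N=1$ the $\xi_1$-extent exceeds $\pi$, and the $\xi_2$-extent exceeds $\pi$ unless $N\lesssim \pi^{\alpha/2-1}$, a genuine constraint for every $\alpha$. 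For such $N$ the restricted symbol, viewed as a $2\pi$-periodic function, is continuous (it is even in each variable) but acquires derivative-jump singularities along the cell boundary; the identity $\widetilde K_N(n)=\widetilde K_N^{\mathbb{R}}(n)$ fails, and the alternative of periodizing $\widetilde\psi_N$ (so that Poisson summation does give sampling) destroys the property $\widetilde\psi_N\equiv 1$ on $\mathrm{supp}\,\psi_N\cap\mathbb{T}^2$ that you need for $\widetilde P_N P_N=P_N$. So the scaling-based kernel bound cannot be invoked for those $N$ as written.

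The gap is real but repairable, so the architecture survives. Cell containment holds for all dyadic $N\le c(\alpha):=\min\{\tfrac18,\ \pi^{\alpha/2}/(8\pi)\}$, and $c(\alpha)\ge\tfrac{1}{8\pi}$ uniformly in $\alpha$; run your kernel argument only in that range. For the remaining finitely many dyadic $N\in(c(\alpha),1]$ no kernel estimate is needed at all: on the lattice $\|g\|_{l^{q_2}}\le\|g\|_{l^{q_1}}$ whenever $q_1\le q_2$, hence
\begin{equation*}
\|P_Nf\|_{l^{q_2}}\le\|P_Nf\|_{l^{q_1}}\le (8\pi)^{(1+\frac{2}{\alpha})(\frac{1}{q_1}-\frac{1}{q_2})}\,N^{(1+\frac{2}{\alpha})(\frac{1}{q_1}-\frac{1}{q_2})}\|P_Nf\|_{l^{q_1}},
\end{equation*}
which is the claimed bound with an admissible $\alpha$-dependent constant; this also explains why the lattice decomposition is restricted to $N\le1$, since Bernstein has content only as $N\to0$. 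Two smaller points: the Riemann-sum comparison in your final step should be run on a pointwise decaying envelope of $\widetilde K_N^{\mathbb{R}}$ (anisotropic decay of the kind established in the proofs in \cite{CHOI2022100406}), since ``varies slowly across unit cells'' does not follow from finiteness of $\|\widetilde K_N^{\mathbb{R}}\|_{L^r}$ alone; and note that the cited lemmas state mapping properties rather than kernel decay, so you are importing their proofs, not their statements.
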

\begin{lemma}\label{Bernstein2}
Let $s \in \mathbb{R}$, $r \in [1,\infty],\ N \leq 1$. Then,
\begin{equation*}
    \| D^{s} P_N f \|_{l^r(\mathbb{Z}^2)} \simeq N^s \| P_N f \|_{l^r(\mathbb{Z}^2)}.
\end{equation*}
\end{lemma}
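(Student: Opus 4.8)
The plan is to reduce both inequalities to a single $l^1(\mathbb{Z}^2)$ bound on an explicit convolution kernel, and then to read off that bound from the anisotropic scaling structure of the symbol together with the $\mathbb{R}^2$ estimates of \cite{CHOI2022100406}. Write $\rho(\xi) = (\xi_1^2 + |\xi_2|^\alpha)^{1/2}$, so that $D^s = \mathcal{F}^{-1}\rho^s\mathcal{F}$ and $P_N = \mathcal{F}^{-1}\psi_N\mathcal{F}$. First I would introduce a fattened cutoff: let $\tilde\psi \in C_c^\infty$ be even, supported in a slightly larger annulus than $\psi$ and identically $1$ on $\mathrm{supp}\,\psi$, and set $\tilde\psi_N(\xi) = \tilde\psi(\rho(\xi)/(\pi N))$. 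Since $\tilde\psi_N\psi_N = \psi_N$ and $P_N f$ has Fourier support in $\mathrm{supp}\,\psi_N$, one has the identities
\[
D^s P_N f = m_N^{+}\ast P_N f, \qquad P_N f = m_N^{-}\ast (D^s P_N f), \qquad m_N^{\pm} := \mathcal{F}^{-1}\!\big[\rho^{\pm s}\tilde\psi_N\big],
\]
where $\rho^{\pm s}\tilde\psi_N$ is well defined and smooth off the $\xi_2$-axis because $\tilde\psi_N$ is supported away from $\rho=0$. By Young's inequality on $\mathbb{Z}^2$, valid for all $r\in[1,\infty]$, both directions of the claim follow at once from the kernel bounds $\|m_N^{\pm}\|_{l^1(\mathbb{Z}^2)} \lesssim N^{\pm s}$, uniformly in $N \le 1$.

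Next I would establish the continuum analogue $\|\mathcal{F}^{-1}_{\mathbb{R}^2}[\rho^{\pm s}\tilde\psi_N]\|_{L^1(\mathbb{R}^2)}\lesssim N^{\pm s}$. The key structural fact is that $\rho$ is homogeneous of degree $1$ for the anisotropic dilation $\delta_N:(\xi_1,\xi_2)\mapsto (N\xi_1,\, N^{2/\alpha}\xi_2)$, since $\rho(\delta_N\eta)=N\rho(\eta)$; hence $\tilde\psi_N(\delta_N\eta)=\tilde\psi(\rho(\eta)/\pi)=:\tilde\psi_1(\eta)$ is $N$-independent. Changing variables $\xi=\delta_N\eta$ in the inverse Fourier integral gives
\[
m_N^{\pm}(x)= N^{\pm s}\, N^{1+\frac{2}{\alpha}}\, m_1^{\pm}(\delta_N x), \qquad m_1^{\pm}:=\mathcal{F}^{-1}_{\mathbb{R}^2}\!\big[\rho^{\pm s}\tilde\psi_1\big],
\]
and the volume-preserving change of variables $y=\delta_N x$ yields $\|m_N^{\pm}\|_{L^1(\mathbb{R}^2)}=N^{\pm s}\|m_1^{\pm}\|_{L^1(\mathbb{R}^2)}$. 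That $m_1^{\pm}\in L^1(\mathbb{R}^2)$, together with the requisite anisotropic decay of $m_1^{\pm}$ and of its derivatives, is precisely the content of \cite[Lemma 3.1, 3.3]{CHOI2022100406}, which already encodes the delicate behavior of $\rho^{\pm s}$ near the $\xi_2$-axis.

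Finally I would transfer to the lattice. Because $N\le 1$, the support $\{\rho\lesssim N\}\subseteq\{|\xi_1|\lesssim N,\ |\xi_2|\lesssim N^{2/\alpha}\}$ of $\tilde\psi_N$ lies strictly inside $\mathbb{T}^2=(-\pi,\pi]^2$ once $N$ is below a fixed threshold; the finitely many remaining dyadic scales $N\in2^{\mathbb{Z}}$, $N\le1$, are each a single bounded multiplier and are disposed of directly. On the scales where the support fits, the periodic and full-line integrals coincide, so the torus kernel equals the plane kernel at integer points, and it remains to compare the sum with the integral. Applying the cube estimate $|g(n)|\le\int_{n+[0,1]^2}\big(|g|+|\partial_1 g|+|\partial_2 g|+|\partial_1\partial_2 g|\big)$ and summing over $n\in\mathbb{Z}^2$, one gets $\sum_{n}|m_N^{\pm}(n)|\lesssim\sum_{a,b\in\{0,1\}}\|\partial_1^a\partial_2^b m_N^{\pm}\|_{L^1(\mathbb{R}^2)}$. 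Each term on the right corresponds to the multiplier $\xi_1^a\xi_2^b\rho^{\pm s}\tilde\psi_N$, which by the same scaling argument has $L^1$-kernel norm $\lesssim N^{a+2b/\alpha}\,N^{\pm s}\le N^{\pm s}$ (using $N\le1$), so $\|m_N^{\pm}\|_{l^1(\mathbb{Z}^2)}\lesssim N^{\pm s}$, as required.

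The main obstacle is concentrated in the continuum kernel estimate borrowed from \cite{CHOI2022100406}: the symbol $\rho^{\pm s}=(\xi_1^2+|\xi_2|^\alpha)^{\pm s/2}$ fails to be smooth across $\xi_2=0$ for non-integer $\alpha$, so $m_1^{\pm}$ is not Schwartz and naive integration by parts is unavailable; the anisotropic decay must be extracted respecting the dilation $\delta_N$, and the factors $\xi_2^b$ in the derivative multipliers, which vanish on the axis, are what tame this singularity. The secondary care point is ensuring the fattened support genuinely fits inside a single fundamental domain so that no aliasing contaminates the identification of the torus and plane kernels, which is exactly why the statement is restricted to $N\le1$ and why the borderline scales are isolated.
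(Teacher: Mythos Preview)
Your proposal is correct and is precisely the kind of argument the paper has in mind: the paper gives no proof of this lemma at all, merely remarking that the $\mathbb{R}^2$ mapping properties of $D$ and $P_N$ from \cite[Lemma~3.1,~3.3]{CHOI2022100406} ``can be transferred to $\mathbb{Z}^2$ with ease.'' Your reduction via a fattened cutoff and Young's inequality to an $l^1$ kernel bound, the anisotropic rescaling $\delta_N$ to pull the $N^{\pm s}$ factor out, and the sum--integral comparison to pass from $L^1(\mathbb{R}^2)$ to $l^1(\mathbb{Z}^2)$ constitute exactly that transfer, and you correctly isolate the only genuine analytic input---the $L^1$ bound on $m_1^{\pm}$ despite the non-smoothness of $|\xi_2|^\alpha$---as coming from the cited reference.
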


As \Cref{p1}, the key idea is to show the Gagliardo-Nirenberg inequality involving the operator $D$.

\begin{proof}[Proof of \Cref{p3}]
Adopting the method in \cite[Proposition 4]{Younghun} along with \Cref{Bernstein,Bernstein2}, the Gagliardo-Nirenberg inequality
\begin{equation}\label{gns2}
    \| u \|_{l^{p+1}} \lesssim \| u \|_{l^2}^{1-\theta} \| D^s u \|_{l^2}^\theta
\end{equation}
can be shown for $s > 0,\ \theta \in (0,1),\ \alpha > 0$ satisfying $s\theta = (1+\frac{2}{\alpha})(\frac{1}{2} -  \frac{1}{p+1})$. Let $\alpha = \frac{2\alpha_2^\prime}{\alpha_1^\prime},\ s = \alpha_1^\prime$. Then,
\begin{equation}\label{quadratic_form3}
    \| D^s u\|_{l^2}^2 \simeq \iint (|\xi_1|^{2s} + |\xi_2|^{\alpha s}) |\widehat{u}(\xi)|^2 d\xi_1 d\xi_2 \lesssim \langle Lu,u \rangle
\end{equation}
by \Cref{l3}. By \eqref{gns2}, \eqref{quadratic_form3},
\begin{equation*}
    \| u \|_{l^{p+1}}^{p+1} \lesssim  \| u \|_{l^2}^{p-1} \langle Lu,u \rangle \left(\frac{\langle Lu,u \rangle}{\| u\|_{l^2}^2}\right)^{\frac{\alpha_1^\prime+\alpha_2^\prime}{4\alpha_1^\prime \alpha_2^\prime}\left(p-\left(1 + \frac{4\alpha_1^\prime\alpha_2^\prime}{\alpha_1^\prime + \alpha_2^\prime}\right)\right)} \lesssim \| u \|_{l^2}^{p-1} \langle Lu,u \rangle,
\end{equation*}
and hence $\nu_0 > 0$. If $p > 1$ does not satisfy the scaling condition for \eqref{gns2}, then the claim follows from the bootstrapping argument \eqref{bootstrap} by applying \eqref{gns2} to $p_1 \ll p$. 
\end{proof}

\subsection{Estimation of excitation thresholds}\label{ET2}
An estimation of \eqref{excitation_threshold} is given at the anti-continuum regime. An explicit computation of the derivatives in $\kappa$ is needed to estimate thresholds.

\begin{proposition}\label{onsite}
Let $J$ be as \Cref{def:kernel} and $\nu > 0$. Then, there exists $\kappa_0 >0$ and a smooth map $[0,\kappa_0) \rightarrow l^2(\mathbb{Z}^d)$ given by $\kappa \mapsto q(\kappa)$ such that $\{q(\kappa)\}$ defines a family of unique ground states satisfying $-\omega(\kappa) q = \kappa L q - q^p$ and $M[q(\kappa)] = \nu$ where
\begin{equation}\label{omega}
    \omega(\kappa;\nu) = \nu^{\frac{p-1}{2}} -\kappa \sum_{m \neq 0} J_{m} - \kappa^2\frac{(p-3)\nu^{-\frac{p-1}{2}}}{2} \sum_{m \neq 0} J_{m}^2 + O(\kappa^3).
\end{equation}
Furthermore $q(\kappa)$ is a unique minimizer of \eqref{gns5} for any $\kappa \in (0,\kappa_0)$, and therefore
\begin{equation}\label{excitation_threshold2}
    \nu_0 = \left(\frac{p+1}{2} \overline{J} \kappa\right)^{\frac{2}{p-1}} + o(\kappa^{\frac{2}{p-1}}),\ \kappa \rightarrow 0.
\end{equation}
\end{proposition}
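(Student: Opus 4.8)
The plan is to anchor everything at the anti-continuum state $q_*=\nu^{1/2}\delta$ and continue it by the implicit function theorem (IFT), then to extract \eqref{omega} from a perturbation series, to pin down uniqueness by a concentration argument, and finally to feed the resulting variational constant into \eqref{excitation_threshold}. Working with real, non-negative sequences (having quotiented out the gauge and translation symmetries), I would regard
\begin{equation*}
F(q,\omega,\kappa)=\big(\kappa Lq-q^p+\omega q,\ \|q\|_{l^2}^2-\nu\big)
\end{equation*}
as a map $l^2(\mathbb{Z}^d)\times\mathbb{R}\times\mathbb{R}\to l^2(\mathbb{Z}^d)\times\mathbb{R}$. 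At $\kappa=0$ the system decouples site by site and, as noted after \eqref{excitation_threshold}, the unique mass-$\nu$ minimizer of $H$ is $q_*=\nu^{1/2}\delta$ with multiplier $\omega_0=\nu^{\frac{p-1}{2}}$, so $F(q_*,\omega_0,0)=0$. The crucial observation is that $D_{(q,\omega)}F$ at this base point is an isomorphism: since $q_*^{p-1}$ is supported only at the origin, the $q$-block acts as multiplication by $\omega_0>0$ on the tail $\{n\neq0\}$ and collapses to an invertible $2\times2$ system in $(h_0,\tau)$ once the constraint $2\langle q_*,h\rangle=2\nu^{1/2}h_0$ is imposed. Because $x\mapsto x^p$ is $C^1$ on $[0,\infty)$ for $p>1$, $F$ is $C^1$ near the base point and the IFT produces the asserted smooth branch $\kappa\mapsto(q(\kappa),\omega(\kappa))$ solving \eqref{main_model3} with $M[q(\kappa)]=\nu$.

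To obtain \eqref{omega} I would substitute $q(\kappa)=q_*+\kappa q_1+\kappa^2q_2+\cdots$ and $\omega(\kappa)=\omega_0+\kappa\omega_1+\kappa^2\omega_2+\cdots$ and match powers of $\kappa$. At order $\kappa$ the tail equation gives $q_{1,n}=\nu^{1-\frac{p}{2}}J_{|n|}$ for $n\neq0$, while the constraint forces $\langle q_*,q_1\rangle=0$, i.e. $q_{1,0}=0$, and the site-$0$ balance then yields $\omega_1=-\overline J$. At order $\kappa^2$ the constraint determines $q_{2,0}=-\tfrac12\nu^{\frac32-p}\sum_{m\neq0}J_{|m|}^2$ from the quadratic tail contribution, and inserting this together with the term $\kappa\,q_0^{-1}\sum_{m\neq0}J_{|m|}q_m$ into the site-$0$ identity $\omega=q_0^{p-1}-\kappa\overline J+\kappa\,q_0^{-1}\sum_{m\neq0}J_{|m|}q_m$ produces exactly the coefficient $-\tfrac{p-3}{2}\nu^{-\frac{p-1}{2}}\sum_{m\neq0}J_{|m|}^2$. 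One caveat I would record: for $1<p<2$ the map $x\mapsto x^p$ fails to be $C^2$ at $0$, so the tail nonlinearity injects fractional-power corrections of size $\kappa^{1+p}$; these lie beyond $\kappa^2$ and leave the displayed coefficients untouched, but strictly the remainder is $O(\kappa^{\min(3,1+p)})$ rather than $O(\kappa^3)$ in that range.

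For uniqueness I would show that \emph{every} ground state collapses onto $q_*$ as $\kappa\to0$, hence lies on the IFT branch. Testing $H$ (for the operator $\kappa L$) against $q_*$ gives the upper bound $\tfrac{\kappa\nu\overline J}{2}-\tfrac{\nu^{(p+1)/2}}{p+1}$, while \Cref{l1} and $l^2\hookrightarrow l^{p+1}$ give the lower bound $-\tfrac{\nu^{(p+1)/2}}{p+1}$; for a ground state these force near-equality in the embedding, $\|q^{(\kappa)}\|_{l^{p+1}}^{p+1}\geq\nu^{(p+1)/2}-O(\kappa)$. Combined with $\|q^{(\kappa)}\|_{l^{p+1}}^{p+1}\leq\|q^{(\kappa)}\|_{l^\infty}^{p-1}\nu$ this yields $\|q^{(\kappa)}\|_{l^\infty}\to\nu^{1/2}$, so after centering $\|q^{(\kappa)}-\nu^{1/2}\delta\|_{l^2}^2=\nu-\|q^{(\kappa)}\|_{l^\infty}^2\to0$; testing the Euler--Lagrange equation against $q^{(\kappa)}$ shows $\omega^{(\kappa)}\to\omega_0$ as well. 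Thus $(q^{(\kappa)},\omega^{(\kappa)})$ eventually enters the IFT neighborhood and coincides with $(q(\kappa),\omega(\kappa))$, giving uniqueness for $\kappa<\kappa_0$.

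Finally, \eqref{excitation_threshold2} would follow from \eqref{excitation_threshold} once the value of the scale-invariant constant $K$ in \eqref{gns5} is identified in the anti-continuum regime. The plan is to argue that the optimizing configuration for \eqref{gns5} is governed, for small $\kappa$, by the single-site profile, so that the quotient equals $\overline J+o(1)$ (one checks directly that at $\delta$ it equals $\overline J$, since $\langle L\delta,\delta\rangle=\overline J$ and $\|\delta\|_{l^2}=\|\delta\|_{l^{p+1}}=1$, and that along the branch it is $\overline J+O(\kappa)$); substituting into $\nu_0=(\tfrac{p+1}{2}\kappa K)^{\frac{2}{p-1}}$ then gives $(\tfrac{p+1}{2}\overline J\kappa)^{\frac{2}{p-1}}+o(\kappa^{\frac{2}{p-1}})$. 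I expect this last identification to be the \textbf{main obstacle}: establishing that no spread configuration lowers the quotient below the concentrated value requires a second-variation/non-degeneracy analysis together with a compactness argument, and the difficulty is precisely that spreading mass to a neighboring site \emph{decreases} $\langle Lu,u\rangle$ at first order while it only perturbs the near-saturated embedding $\|u\|_{l^{p+1}}\leq\|u\|_{l^2}$ at higher order, so the sign of the governing variation must be controlled with care.
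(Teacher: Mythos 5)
Your first three steps are essentially the paper's argument and are correct. The paper sets up the same map $\Phi(q,\omega,\kappa)=((\omega+\kappa L)q-q^p,\|q\|_{l^2}^2-\nu)$, applies the implicit function theorem at $(\nu^{1/2}\delta,\nu^{(p-1)/2},0)$, and reads \eqref{omega} off the first two $\kappa$-derivatives; your coefficients $q_{1,n}=\nu^{(2-p)/2}J_{|n|}$, $\omega_1=-\overline J$, $q_{2,0}=-\tfrac12\nu^{(3-2p)/2}\sum_{m\neq0}J_{|m|}^2$, $\omega_2=-\tfrac{p-3}{2}\nu^{-\frac{p-1}{2}}\sum_{m\neq0}J_{|m|}^2$ agree exactly with the paper's $\dot q(0)$, $\dot\omega(0)$, $\ddot q(0)$, $\ddot\omega(0)$. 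Two of your additions go beyond the paper and are welcome: the collapse argument showing that \emph{every} ground state enters the IFT neighborhood as $\kappa\to0$ (the paper asserts uniqueness of the ground-state family without supplying this step), and the remark that for $1<p<2$ the remainder in \eqref{omega} is only $O(\kappa^{\min(3,1+p)})$. (Minor slip: after centering, the identity is $\|q^{(\kappa)}-\nu^{1/2}\delta\|_{l^2}^2=2\nu^{1/2}\left(\nu^{1/2}-\|q^{(\kappa)}\|_{l^\infty}\right)$, not $\nu-\|q^{(\kappa)}\|_{l^\infty}^2$; the conclusion is unaffected.)

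The genuine gap is the part you flagged: you never prove that $q(\kappa)$ minimizes \eqref{gns5}, hence never reach \eqref{excitation_threshold2} — and the obstacle you identified is not a technical nuisance but fatal to the proposed strategy. Making your own heuristic quantitative: for $u^{(\epsilon)}=\sqrt{1-\epsilon^2}\,\delta_0+\epsilon\,\delta_{m_0}$ with $J_{|m_0|}>0$,
\begin{equation*}
\frac{\|u^{(\epsilon)}\|_{l^2}^{p-1}\langle Lu^{(\epsilon)},u^{(\epsilon)}\rangle}{\|u^{(\epsilon)}\|_{l^{p+1}}^{p+1}}
=\frac{\overline J-2J_{|m_0|}\,\epsilon\sqrt{1-\epsilon^2}}{(1-\epsilon^2)^{\frac{p+1}{2}}+\epsilon^{p+1}}
=\overline J-2J_{|m_0|}\epsilon+O(\epsilon^{2}),
\end{equation*}
so $K<\overline J$ strictly. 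Since $K$ in \eqref{gns5} is scale-invariant and independent of $\kappa$, while the quotient evaluated at $q(\kappa)$ tends to $\overline J$ as $\kappa\to0$, the branch $q(\kappa)$ cannot minimize \eqref{gns5} for small $\kappa$, and \eqref{excitation_threshold2}, which via \eqref{excitation_threshold} is equivalent to $K=\overline J$, cannot hold: spreading mass wins at \emph{first} order, exactly as you feared, and no second-variation or compactness argument can reverse a first-order strict inequality. You should also be aware that the paper's own proof of this step does not supply the missing idea: it hinges on the assertion that the comparison following \eqref{minimizer} is settled by $\|v\|_{l^{p+1}}<\|q(\kappa)\|_{l^{p+1}}$, which has the inequality backwards because $H[q(\kappa)]<0$ for small $\kappa$, and on the claim that $\|q(\kappa)\|_{l^{p+1}}\to\nu^{1/2}$ strictly from the right, which contradicts \eqref{taylor}: mass conservation forces the peak down at order $\kappa^2$ while the tail contributes only $O(\kappa^{p+1})$ to $\|\cdot\|_{l^{p+1}}^{p+1}$, so the approach is from the left. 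In short, your proposal correctly establishes the first half of the proposition (the branch, its expansion, and uniqueness of ground states); the second half is not a lemma you failed to find, but a claim whose proof in the paper — and, by the computation above, whose statement — does not hold up.
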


The expansion \eqref{omega} shows that in the anti-continuum limit ($\kappa \rightarrow 0$), the excitation threshold depends on the leading-order contribution from $\overline{J}$, which governs interaction strength between neighboring sites. The absence of nonzero coupling at $\kappa = 0$ results in a vanishing threshold, implying that every excitation is localized when interactions are strictly local.

\begin{proof}
Let $\delta_{n_0}$ be the Kronecker delta supported at $n_0 \in \mathbb{Z}^d$. Let $\Phi: X \times \mathbb{R}_{\kappa} \rightarrow X:=l^2(\mathbb{Z}^d) \times \mathbb{R}_{\omega}$ where $(q,\omega,\kappa) \mapsto ((\omega + \kappa L)q - q^p, \| q \|_{l^2}^2 - \nu)=: (\Phi_1,\Phi_2)$. By \Cref{l2}, a ground state has non-negative entries and hence it suffices to consider $l^2(\mathbb{Z}^d)$ as a Hilbert space over the reals. If $\kappa = 0$, the nonlinear eigenvalue equation reduces to $\omega q = q^p$, and therefore $q_n \in \{\omega^{\frac{1}{p-1}},0\}$. Since any ground state is necessarily supported at a single site, let $z = (q_*,\omega_*,\kappa_*) := (\nu^{\frac{1}{2}} \delta_{n_0},
\nu^{\frac{p-1}{2}},0)$ where we may assume $n_0 = 0$ by the translation symmetry. It can be shown that the Fr\'{e}chet derivative in $X$ at $z$, $D\Phi(z)$, is invertible, and therefore the Implicit Function Theorem (IFT) yields the desired (unique) smooth map on $[0,\kappa_0)$. Taking the $\kappa$-derivatives of $\Phi(q(\kappa),\omega(\kappa)) = 0$ gives 
\begin{equation*}
\begin{split}
    (\omega + \kappa L - pq^{p-1}) \dot{q} + (\dot{\omega} + L)q &= 0;\ \langle q, \dot{q} \rangle =0,\\
    (\omega + \kappa L -pq^{p-1}) \ddot{q} + (2\dot{\omega} + 2L - p(p-1)[q^{p-2}\dot{q}])\dot{q} + \ddot{\omega} q &= 0;\ \langle \dot{q},\dot{q} \rangle + \langle q, \ddot{q}\rangle = 0.
\end{split}
\end{equation*}
An explicit evaluation at $\kappa = 0$ yields
\begin{equation*}
\begin{split}
    \dot{q}(0)_n &= \nu^{\frac{2-p}{2}} J_{n}(1-\delta);\ \dot{\omega}(0) = -\sum_{m \neq 0} J_{m}\\
    \ddot{q}(0)_n &= 
    \begin{cases}
            -\nu^{\frac{3-2p}{2}} \sum\limits_{m \neq 0} J_{m}^2
            & ,\ n = 0\\
            2 \nu^{\frac{3-2p}{2}}\sum\limits_{m \notin \{0,n\}} J_{m} J_{n-m} & ,\ n \neq 0
    \end{cases};
    \qquad \ddot{w}(0) = (3-p) \nu^{-\frac{p-1}{2}} \sum_{m \neq 0} J_{m}^2,
\end{split}
\end{equation*}
and therefore
\begin{equation}\label{taylor}
    q(\kappa)_n = 
    \begin{cases}
            \nu^{\frac{1}{2}}-\frac{\nu^{\frac{3-2p}{2}}}{2}\kappa^2 \sum\limits_{m \neq 0} J_{m}^2 + o(\kappa^2)
            & ,\ n = 0\\
            \nu^{\frac{2-p}{2}}\kappa J_{n} + o(\kappa)& ,\ n \neq 0,
    \end{cases}
\end{equation}
and likewise for $\omega$, which shows \eqref{omega}. Note that the error terms of $q(\kappa)$ can be uniformly bounded in $n$ since the convolution terms of $\ddot{q}(0)$ can be uniformly bounded by the Young's inequality. Better approximations could be obtained by taking the higher order Taylor expansions.

To show $q(\kappa)$ is a unique minimizer for $\kappa > 0$, assume $M[v] = \nu$; otherwise consider $\tilde{v} = \frac{\nu^{1/2}}{\| v \|_{l^2}} v$. Since $H[v] \geq H[q(\kappa)]$, we have
\begin{equation}\label{minimizer}
    \frac{\langle \kappa L v,v \rangle}{\| v \|_{l^{p+1}}^{p+1}} \geq \frac{2H[q(\kappa)]}{\| v \|_{l^{p+1}}^{p+1}} + \frac{2}{p+1}.
\end{equation}
Then the RHS of \eqref{minimizer} $> \frac{\langle \kappa L q(\kappa),q(\kappa) \rangle}{\| q(\kappa) \|_{l^{p+1}}^{p+1}}$, and hence the claim, if and only if $\| v \|_{l^{p+1}} < \| q(\kappa) \|_{l^{p+1}}$ for any $\kappa \in (0,\kappa_0)$ by shrinking $\kappa_0$ if necessary.

By contradiction, suppose there exists $\{\kappa_j\} \subseteq (0,\kappa_0)$ such that $\kappa_j \xrightarrow[j\rightarrow\infty]{} 0$ and $\| v \|_{l^{p+1}} \geq \| q(\kappa_j) \|_{l^{p+1}}$. By \eqref{taylor}, $\| q(\kappa_j) \|_{l^{p+1}} \xrightarrow[j\rightarrow\infty]{} \| q(0) \|_{l^{p+1}} = \nu^{\frac{1}{2}}$ strictly from the right, and therefore for any $j$ sufficiently large, we have
\begin{equation*}
\nu^{\frac{1}{2}} < \| q(\kappa_j)\|_{l^{p+1}} \leq \| v \|_{l^{p+1}} \leq \| v \|_{l^{2}} = \nu^{\frac{1}{2}}. 
\end{equation*}
\end{proof}
\begin{remark}
    In \Cref{fig}, the dynamics under \eqref{main_model} with the coupling $\kappa = 0.1,\ d=1,\ J_{n} = |n|^{-(1+\alpha)},\ p=3$ is generated with the initial condition $q(\kappa) = \nu^{\frac{1}{2}} \delta + \frac{\nu^{-\frac{1}{2}}\kappa}{|n|^{1+\alpha}} (1-\delta)$, i.e., the first-order approximation of $q(\kappa)$ in \eqref{taylor}. The vanishing Dirichlet boundary condition, $u_{n} = 0$ for $|n| > 50$, is imposed. Observe in \eqref{excitation_threshold2} that if the zeroth-order approximation of $q(\kappa)$, i.e., $q(\kappa) \approx \nu^{\frac{1}{2}}\delta$, were used, then $\nu_0 \approx 4\kappa \zeta(1+\alpha)$. Hence if $\alpha \ll 1$, then $\nu_0 \gg 1$ and no ground state can form by \Cref{p1}, which can be seen in the left plot. Similarly the right plot suggests the solution to decay for $\nu \ll \nu_0$ for fixed $\alpha$.
\end{remark}
\begin{remark}
    Variational approaches do not ensure uniqueness in general. Uniqueness in \Cref{onsite} is a consequence of IFT, and therefore cannot be generalized to an arbitrary $\kappa$. For PDEs, uniqueness of ground states to nonlocal equations have been established for certain models  \cite{frank2016uniqueness}. It is of interest whether similar results could be obtained for nonlocal models in lattices or networks. 
\end{remark}

\vspace*{-5.0cm}
\begin{figure}[h]
\hspace*{-2.5cm}
\begin{subfigure}[h]{0.75\linewidth}
\includegraphics[width=\linewidth]{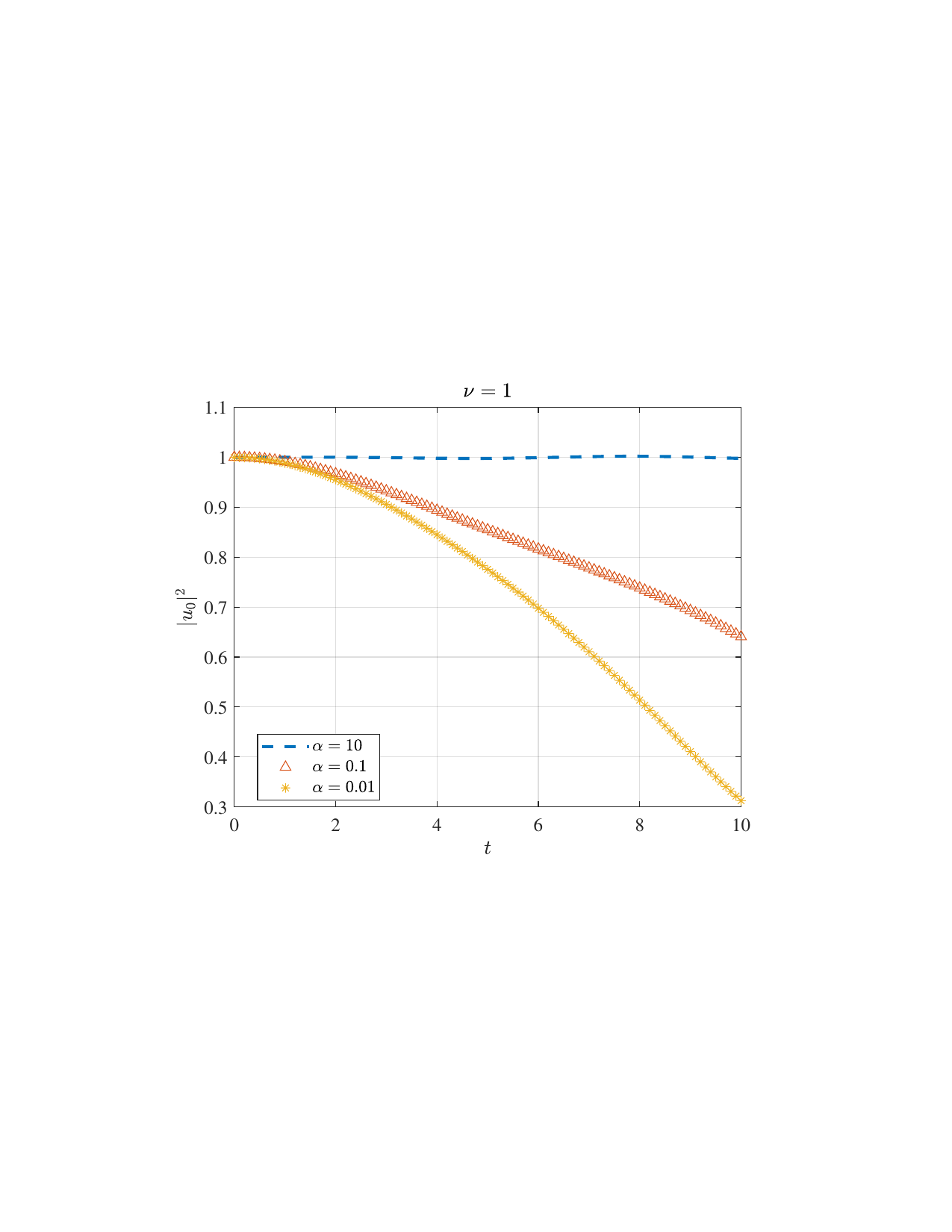}
\end{subfigure}
\hspace*{-4.0cm}
\begin{subfigure}[h]{0.75\linewidth}
\includegraphics[width=\linewidth]{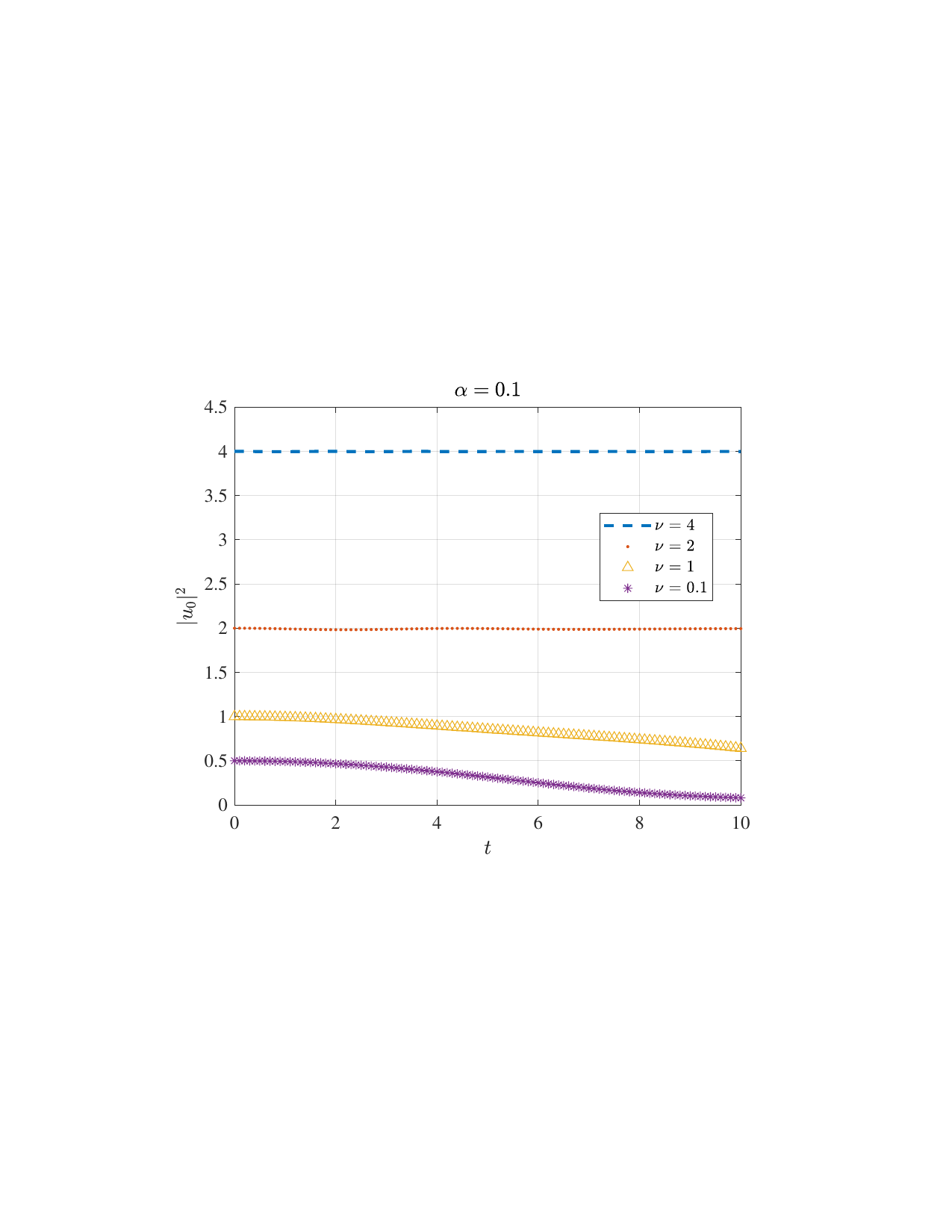}
\end{subfigure}
\vspace*{-5.0cm}
\caption{Peak intensity vs. time for varying $\alpha$ (left) and $\nu$ (right) with $\kappa = 0.1$. The left plot shows a transition from localization to decay as $\alpha$ increases. The right plot confirms that for $\nu < \nu_0$, solutions exhibit global dispersive decay, while for $\nu > \nu_0$, localized breather-like states persist, validating the threshold predictions of \Cref{p1}.}
\label{fig}
\end{figure}

\section{Applications to the 1D algebraic lattice}\label{applications}
For concreteness, let $d=1,\ J_n = |n|^{-(1+\alpha)}$ for $\alpha < \infty$. For initial data of small mass below the energy threshold, ground states do not exist by \Cref{p1}. In fact, it was shown in \cite{stefanov2005asymptotic} that the solutions to DNLS decay to zero in $l^r$ for $r > 2$. Analogously, nonlinear decay estimates and global scattering are shown in this section by developing linear dispersive estimates and realizing the nonlinear evolution as perturbation. On the other hand, we provide numerical evidence that localized initial data with sufficiently large mass evolve into a localized breather-like component and an algebraically-decaying radiation.

In \Cref{decay_transition}, the left panel illustrates the decomposition into a breather-like core and a dispersive component. The right panel shows the algebraic decay rate of radiation, which follows $|n|^{-(1+\alpha)}$, with fitted slopes $-2.499$ (for $\alpha = 1.5$) and $-1.501$ (for $\alpha = 0.5$), closely matching theoretical predictions.

\begin{figure}[h!]
\vspace*{-5.0cm}
\hspace*{-2.0cm}
\begin{subfigure}[h]{0.75\linewidth}
\includegraphics[width=\linewidth]{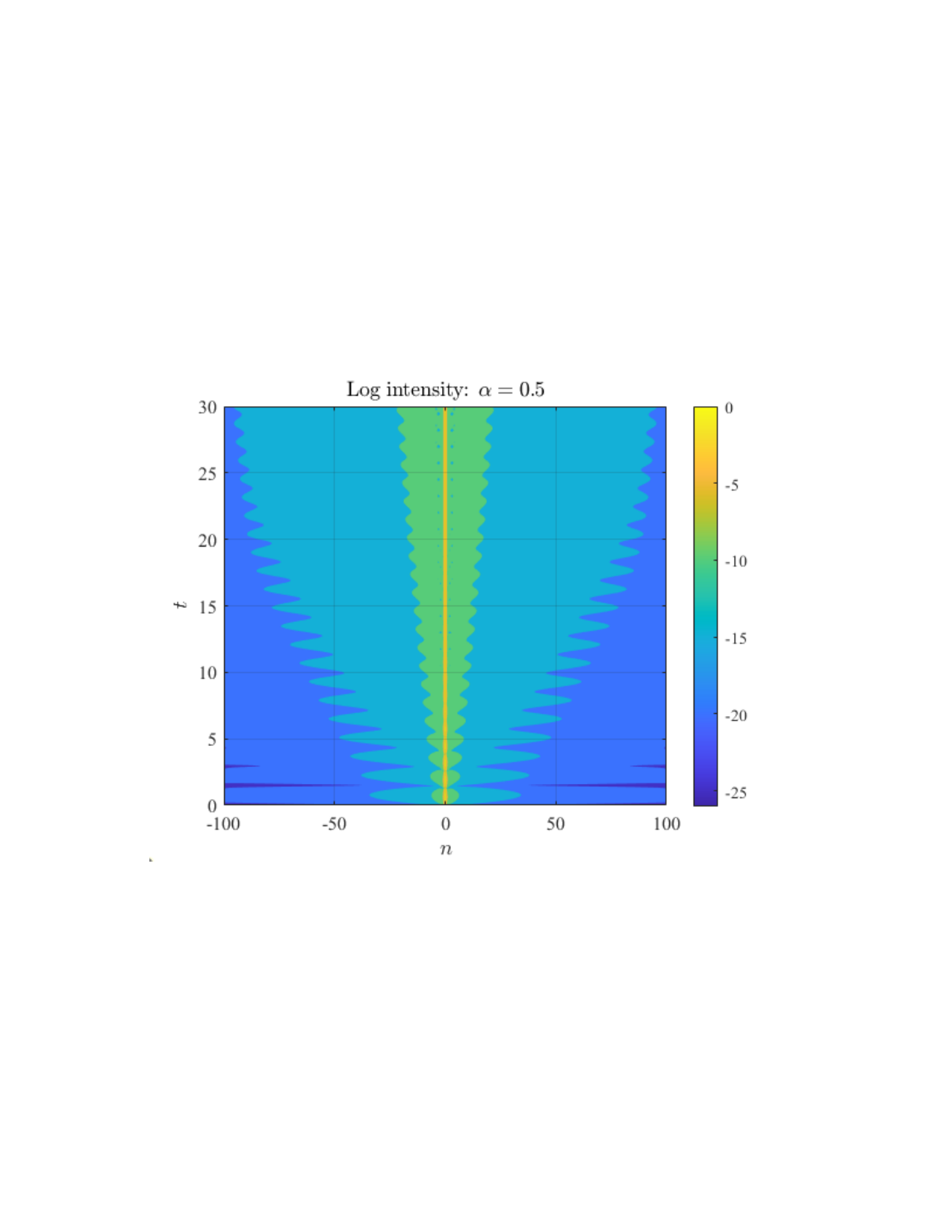}
\end{subfigure}
\hspace*{-4.0cm}
\begin{subfigure}[h]{0.75\linewidth}
\includegraphics[width=\linewidth]{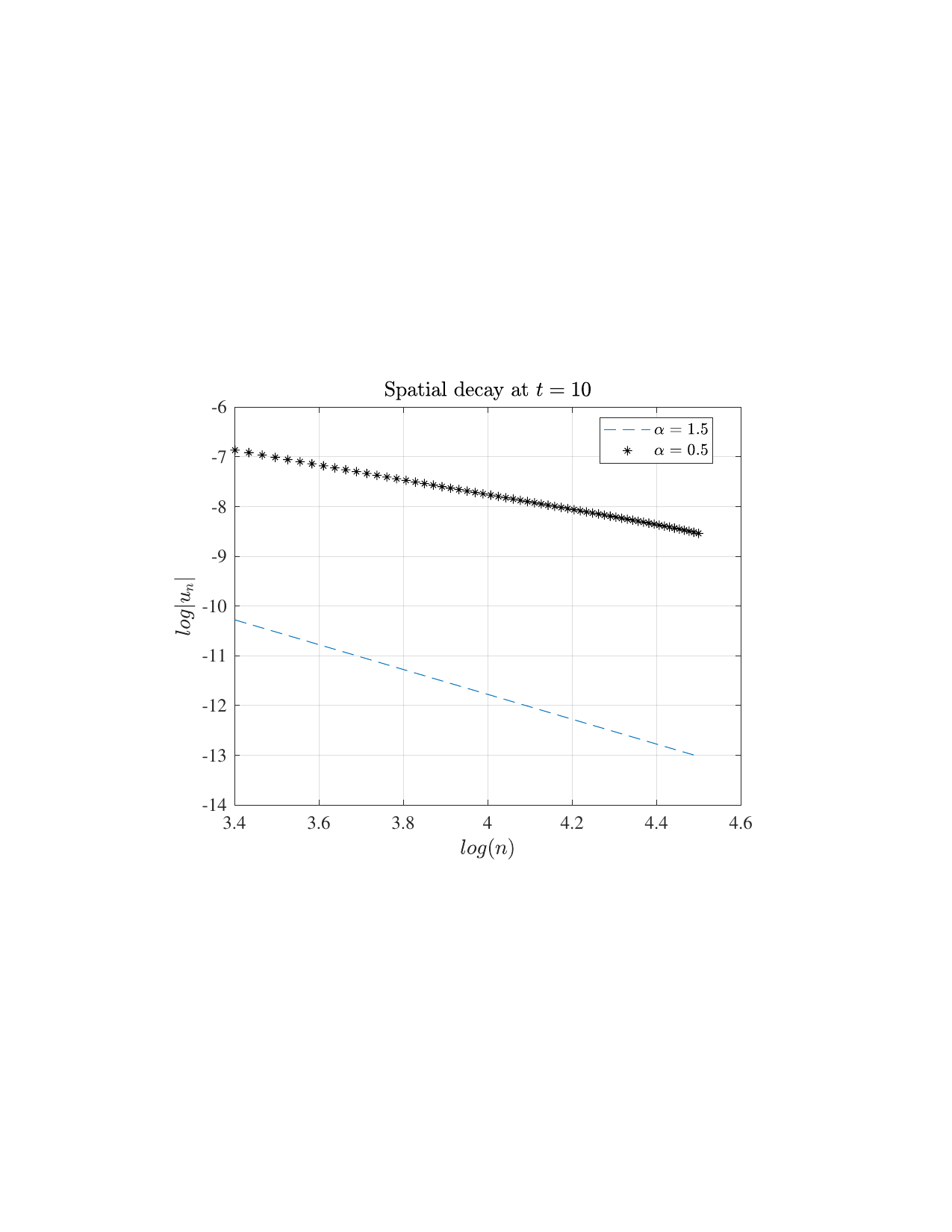}
\end{subfigure}
\vspace*{-5.0cm}
\caption{The evolution of a localized initial condition $u(0) = 5 \delta_0$ under $\kappa = 0.1,\ p=3$, and the Dirichlet boundary condition $u_{n} = 0,\ |n| > 100$. Left: $\log |u_n(t)|$ on $t \in [0,30]$ for $\alpha = 0.5$. Right: for $\alpha = 1.5$, the linear fit is $y = -2.499 x - 1.779$, and for $\alpha = 0.5$, $y = -1.501 x - 1.756$. The left plot is a visual illustration of the decomposition into a localized state and radiation. Note that the intensity is not stationary since $u(0)$ is not an exact solution to the nonlinear eigenvalue problem \eqref{main_model3}. The decay of radiation, whose order is $d+\alpha$, is in excellent agreement with \Cref{algebraic_decay}.}
\label{decay_transition}
\end{figure}

To show linear dispersive estimates, the basics of the polylogarithm functions are revised. The linear flow of \eqref{main_model} with $(u_n(0))_n = (\phi_n)_n \in l^2$ is given by
\begin{equation}\label{kt}
    u(t) = U(t) \phi = K_{t} \ast \phi;\qquad (K_t)_n := \frac{1}{2\pi} \int_{-\pi}^{\pi} e^{i (n \xi-\sigma(\xi)t)}d\xi.
\end{equation}
By scaling, let
\begin{equation*}
    \sigma(\xi) = \sum_{k=1}^\infty \frac{\sin^2\left(\frac{k\xi}{2}\right)}{k^{1+\alpha}} = \frac{1}{2}\zeta(1+\alpha) - \frac{1}{4}(\mathrm{Li}_{1+\alpha}(e^{i\xi}) + \mathrm{Li}_{1+\alpha}(e^{-i\xi})),
\end{equation*}
where the polylogarithm $\mathrm{Li}_s(z)$ is a holomorphic function in $s,z \in \mathbb{C}$; see \cite{NIST:DLMF} and references therein for more details regarding this special function. The radius of convergence (in $z$) is 1, inside which
\begin{equation*}
    \mathrm{Li}_s(z) := \sum_{k=1}^\infty \frac{z^k}{k^s}.
\end{equation*}
Outside the radius of convergence, there exists an analytic continuation in $\mathbb{C} \setminus [1,\infty)$ where for $\Re s>0$, the polylogarithm can be identified with the Bose-Einstein integral
\begin{equation}\label{bose_einstein}
    \mathrm{Li}_s(z) = \frac{1}{\Gamma(s)} \int_0^\infty \frac{t^{s-1}}{e^t/z - 1} dt.
\end{equation}
An important property of polylogarithms pertinent to our interest is
\begin{equation}\label{derivative}
    \frac{d}{d\xi} \mathrm{Li}_s(e^{\pm i \xi}) = \pm i \mathrm{Li}_{s-1}(e^{\pm i \xi}),\ s \in \mathbb{C},\ \xi \notin 2\pi \mathbb{Z}.
\end{equation}
Although $\sigma \in C(\mathbb{T})$ by the summability of $J$, it is not analytic at the origin. Hence the derivatives of $\sigma$ are locally identified with those of the polylogarithms, by uniqueness of analytic continuation, which furnishes the following dispersive estimates.

\begin{proposition}[Linear dispersive decay]\label{p5}
    For any $\alpha \in (0,\infty)$, there exists $C = C(\alpha) > 0$ such that
    \begin{equation}\label{dispersive_estimate}
        \| U(t) \phi \|_{l^\infty} \leq C \langle t \rangle^{-\rho}\| \phi \|_{l^1},
    \end{equation}
    where $\rho = \rho_+:=\frac{1}{3}$ if $\alpha \in (1,\infty)$ and $\rho = \rho_{-}:=\frac{1}{2}$ if $\alpha \in (0,1]$.    
\end{proposition}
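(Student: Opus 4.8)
The plan is to realize $U(t)\phi=K_t\ast\phi$ (recall \eqref{kt}) and estimate the kernel directly: by Young's inequality $\|U(t)\phi\|_{l^\infty}\le\|K_t\|_{l^\infty}\|\phi\|_{l^1}$, so it suffices to bound the oscillatory integral $\sup_n|(K_t)_n|$, where $(K_t)_n=\frac1{2\pi}\int_{-\pi}^{\pi}e^{i\Phi_n(\xi)}\,d\xi$ has phase $\Phi_n(\xi)=n\xi-t\sigma(\xi)$. For $t\lesssim1$ the integral is bounded by $1\lesssim\langle t\rangle^{-\rho}$, so I assume $t\gg1$. The decisive structural point is that $\Phi_n''=-t\sigma''$ and $\Phi_n'''=-t\sigma'''$ are independent of $n$, so any van der Corput estimate is automatically uniform in $n$, exactly matching the statement; the whole problem reduces to the lower bounds available for $|\sigma''|$ and $|\sigma'''|$ on $\mathbb{T}$.

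Next I would read off the needed information on $\sigma$ from the polylogarithm. Applying \eqref{derivative} to $\sigma=\tfrac12\zeta(1+\alpha)-\tfrac14(\mathrm{Li}_{1+\alpha}(e^{i\xi})+\mathrm{Li}_{1+\alpha}(e^{-i\xi}))$ gives $\sigma''(\xi)=\tfrac12\Re\,\mathrm{Li}_{\alpha-1}(e^{i\xi})$ and $\sigma'''(\xi)=-\tfrac12\Im\,\mathrm{Li}_{\alpha-2}(e^{i\xi})$, both real-analytic on $(0,\pi]$. The singular expansion $\mathrm{Li}_s(e^{i\xi})=\Gamma(1-s)(-i\xi)^{s-1}+\sum_{k\ge0}\tfrac{\zeta(s-k)}{k!}(i\xi)^k$ then yields, as $\xi\to0^+$, that $\sigma''(\xi)\sim c_\alpha|\xi|^{\alpha-2}$ for $\alpha\in(0,2)$ with $c_\alpha<0$ when $\alpha\in(0,1)$ and $c_\alpha>0$ when $\alpha\in(1,2)$, the borderline $\alpha=1$ being exactly the case $\sigma''\equiv-\tfrac14$ (where $\sum k^{-2}\cos k\xi$ is an elementary Fourier series), while $\sigma''(0)=\tfrac12\zeta(\alpha-1)>0$ is finite for $\alpha>2$. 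On the other hand $\sigma''(\pi)=\tfrac12\mathrm{Li}_{\alpha-1}(-1)<0$ for every $\alpha>0$. This sign bookkeeping is the source of the dichotomy: for $\alpha>1$ the values at $0^+$ and $\pi$ have opposite signs, so $\sigma''$ must vanish in the interior, whereas for $\alpha\le1$ both ends are negative, consistent with $\sigma''<0$ throughout (the sign I must confirm).

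With these facts, the van der Corput machine runs case by case. Near the origin I dyadically decompose $|\xi|\sim2^{-j}$: for $\alpha\in(0,2)$ one has $|\sigma''|\gtrsim2^{j(2-\alpha)}$ there, so the order-two estimate gives $\lesssim(t\,2^{j(2-\alpha)})^{-1/2}$, and because $2-\alpha>0$ the geometric series over $j$---using the trivial length bound on the innermost non-oscillatory annuli---sums to $\lesssim t^{-1/2}$ (for $\alpha\ge2$, $\sigma''$ is bounded below near $0$ and the bound is immediate). Thus the origin always contributes at least $t^{-1/2}\le\langle t\rangle^{-\rho}$. Away from the origin, if $\alpha\le1$ then $|\sigma''|\ge c>0$ on $[\delta,\pi]$ and the order-two estimate yields $\lesssim t^{-1/2}$, giving $\rho=\tfrac12$. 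If $\alpha>1$, I partition $[\delta,\pi]$ by the zeros of $\sigma''$: on small neighborhoods of each zero the order-three estimate gives $\lesssim(t\,|\sigma'''|)^{-1/3}\lesssim t^{-1/3}$, while on the complement $|\sigma''|\ge c$ gives $\lesssim t^{-1/2}$; summing the finitely many pieces gives $\rho=\tfrac13$.

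The main obstacle is precisely the finite-type input underlying the last partition. For $\alpha\le1$ I must show $\sigma''$ has no interior zero on $(0,\pi]$, so that order two applies globally and the rate is genuinely $\tfrac12$; for $\alpha>1$ I must show every zero of $\sigma''$ is simple, i.e.\ that $\Re\,\mathrm{Li}_{\alpha-1}(e^{i\xi})$ and $\Im\,\mathrm{Li}_{\alpha-2}(e^{i\xi})$ never vanish simultaneously, so that $|\sigma''|+|\sigma'''|\ge c>0$ on $[\delta,\pi]$ and order three suffices. The endpoint values and the $\xi\to0$ asymptotics fix the signs near the ends, but controlling $\sigma''$ and its possible coincidence with $\sigma'''$ across the whole interior requires the monotonicity and positivity properties of the analytically continued cosine and sine series $\sum k^{1-\alpha}\cos k\xi$ and $\sum k^{2-\alpha}\sin k\xi$ for $\alpha<2$---this is where the non-analyticity of $\mathrm{Li}_s$ at $z=1$ does its real work and where the bulk of the estimates lie.
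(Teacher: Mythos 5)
Your framework coincides with the paper's own: reduce to a kernel bound, apply the Van der Corput lemma to \eqref{kt} with the observation that the phase derivatives $-t\sigma''$, $-t\sigma'''$ are uniform in $n$, compute $\sigma''=\tfrac14\bigl(\mathrm{Li}_{\alpha-1}(e^{i\xi})+\mathrm{Li}_{\alpha-1}(e^{-i\xi})\bigr)$ and $\sigma'''=\tfrac{i}{4}\bigl(\mathrm{Li}_{\alpha-2}(e^{i\xi})-\mathrm{Li}_{\alpha-2}(e^{-i\xi})\bigr)$ from \eqref{derivative}, and run the dichotomy (order-two Van der Corput globally for $\alpha\le1$, order-three near interior zeros of $\sigma''$ for $\alpha>1$). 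Your treatment of the origin via the singular expansion of $\mathrm{Li}_s(e^{i\xi})$ plus dyadic summation is a legitimate, arguably cleaner, alternative to the paper's explicit computation culminating in \eqref{nogada}, and your sign bookkeeping at $\xi\to0^+$ and $\xi=\pi$ correctly explains why the rate degrades exactly at $\alpha=1$.

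However, there is a genuine gap, and you flag it yourself: the lower bounds away from the origin --- that $|\sigma''|\gtrsim1$ on $[\delta,\pi]$ when $\alpha\in(0,1)$ (no interior zeros), and that $|\sigma''|+|\sigma'''|\gtrsim_\alpha1$ when $\alpha>1$ (nondegeneracy of the zeros of $\sigma''$) --- are deferred as ``the main obstacle'' and never established. These are not a routine finite-type afterthought; they are the core of the paper's proof. The paper obtains them from the Bose--Einstein representation \eqref{bose_einstein}: it writes the relevant derivative (or, when the integral fails to converge --- $\sigma'''$ for $\alpha\in(1,2)$, $\sigma''$ for $\alpha\in(0,1)$ --- the $\xi$-derivative of a convergent polylogarithm integral via \eqref{derivative}) as an integral of the kernel $\frac{2(e^t\cos\xi-1)}{e^{2t}-2e^t\cos\xi+1}$ or of its $\xi$-derivative $\frac{-2e^t\sin\xi\,(e^{2t}-1)}{(e^{2t}-2e^t\cos\xi+1)^2}$; the fixed sign of the latter in $t$ for each $\xi\in(0,\pi)$, together with region-by-region pointwise bounds on $|\xi|\le\xi_0$, $\xi_0<|\xi|\le\pi/2$, $\pi/2<|\xi|\le\pi$, is what rules out cancellation and yields quantitative lower bounds, with $\alpha\in\{1,2\}$ handled separately by the explicit formulas $\mathrm{Li}_0(z)=\frac{z}{1-z}$ and $\mathrm{Li}_1(z)=-\log(1-z)$. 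Your proposed substitute --- ``monotonicity and positivity of the analytically continued series $\sum k^{1-\alpha}\cos k\xi$ and $\sum k^{2-\alpha}\sin k\xi$'' --- is precisely the statement that needs proof, and the classical positivity theorems for trigonometric series (convex coefficients decreasing to zero) do not apply here because the exponents $1-\alpha$ and $2-\alpha$ are positive, so the coefficients grow. As written, the argument is complete near $\xi=0$ but incomplete exactly where the analytic difficulty of \Cref{p5} resides.
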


\begin{remark}
Our proof, based on the Van der Corput Lemma, which was also used in \cite{stefanov2005asymptotic}, splits into three cases: i) $\alpha \in (0,1)$; ii) $\alpha = 1$; iii) $\alpha \in (1,\infty)$. The case $\alpha = 1$ reduces to the well-known dilogarithm. The other cases are based on detailed estimates using the Bose-Einstein integral. For the sake of the overall flow of presentation, the proof is given in \Cref{appendix}.    
\end{remark}

\begin{figure}[h]
\vspace*{-5.0cm}
\begin{center}
\includegraphics[scale=0.6]{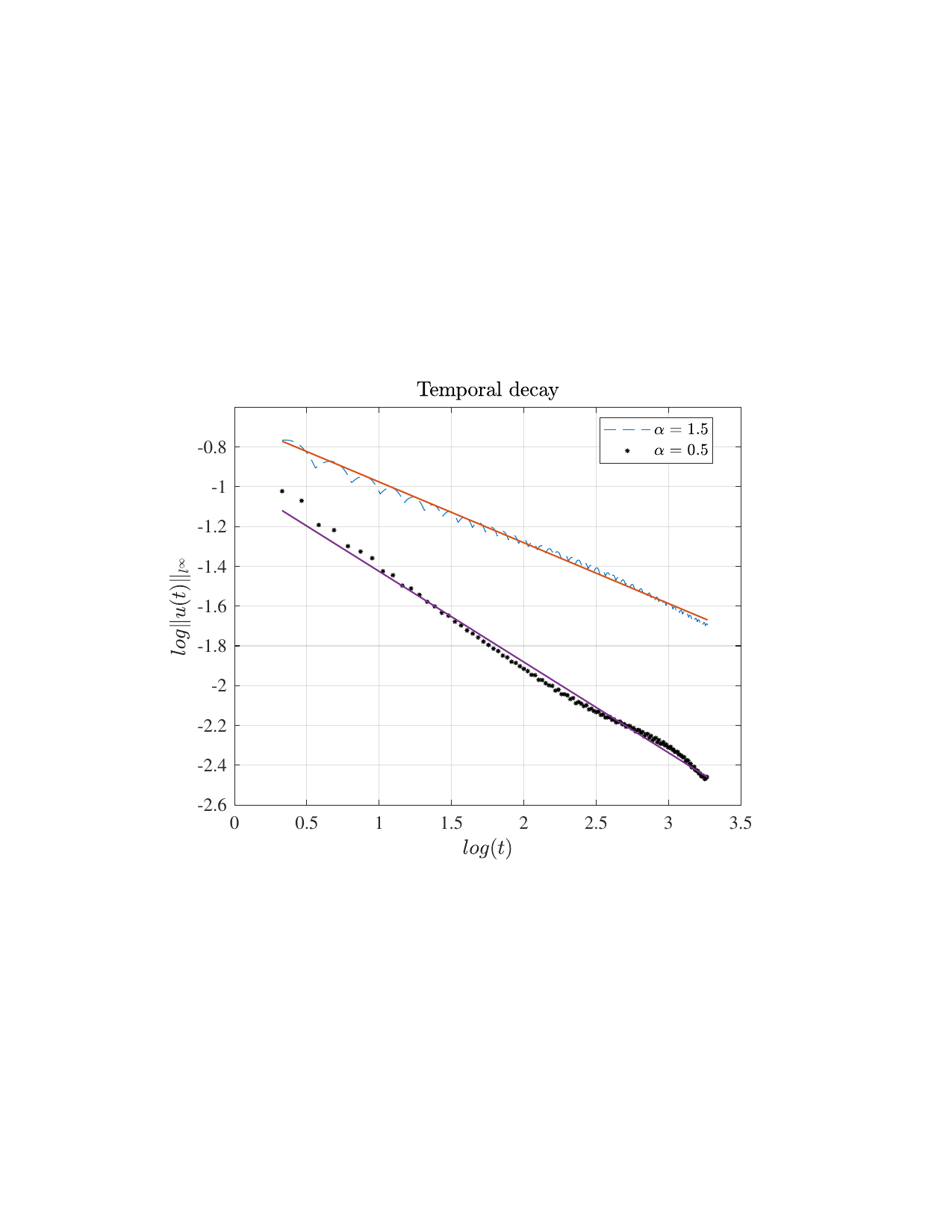}    
\end{center}
\vspace*{-5.5cm}
\caption{Peak intensity versus time with $(\kappa,p) = (1,3)$. For $\alpha = 1.5$, the linear fit is $y = -0.30555 x - 0.67053$, and for $\alpha = 0.5$, $y = -0.45583 x - 0.96974$ computed in Matlab. The vanishing Dirichlet boundary condition, $u_{n} = 0,\ |n| > 100$ is used. As predicted in \Cref{p5}, the rates of dispersive decay $\rho_{\pm}$ is illustrated by the slopes of the log-log plot. This numerically verifies \Cref{nonlinear_decay,scattering2} where small data disperse over the lattice without self-collapse.}
\label{decay_time}
\end{figure}

Let $2 \leq q,r \leq \infty$. Say $(q,r)$ is $\rho_{+}$-admissible if $\frac{3}{q} + \frac{1}{r} \leq \frac{1}{2}$ and $\rho_{-}$-admissible if $\frac{2}{q} + \frac{1}{r} \leq \frac{1}{2}$. By applying \cite[Theorem 1.2]{keel1998endpoint}, we obtain
\begin{corollary}[Strichartz estimates]
For any $\rho_{+}$-admissible pairs (or $\rho_{-}$-admissible pairs) $(q,r),(\tilde{q},\tilde{r})$, we have
\begin{equation}\label{strichartz}
\begin{split}
\| U(t) \phi \|_{L^q_t l^r} &\lesssim \| \phi \|_{l^2},\\
\left|\left| \int_{0}^{t} U(t-\tau) F(\tau) d\tau \right|\right|_{L^q_t l^r} &\lesssim \| F\|_{L^{\tilde{q}^\prime}_t l^{\tilde{r}^\prime}}.
\end{split}
\end{equation}
\end{corollary}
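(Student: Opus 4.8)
The plan is to derive \eqref{strichartz} from the abstract Keel--Tao machinery, for which it suffices to verify the two structural hypotheses of \cite[Theorem 1.2]{keel1998endpoint}: a uniform $l^2$ energy bound for the propagator and an untruncated pointwise-in-time dispersive decay bound, both phrased for the group $U(t)$. First I would record that $U(t) = \mathcal{F}^{-1} e^{-i\sigma(\xi)t} \mathcal{F}$ is, by \eqref{kt} and Plancherel, a one-parameter unitary group on $l^2(\mathbb{Z})$: since $\sigma$ is real-valued the Fourier multiplier has unit modulus, so $\| U(t)\phi\|_{l^2} = \|\phi\|_{l^2}$, while $U(t)^{*} = U(-t)$ together with $U(t)U(s)^{*} = U(t-s)$. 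This supplies the energy estimate and reduces the required decay bound on $U(t)U(s)^{*}$ to a bound on $U(t-s)$.

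Next I would feed in \Cref{p5}. The dispersive estimate \eqref{dispersive_estimate} gives $\| U(\tau)\phi\|_{l^\infty} \le C\langle \tau\rangle^{-\rho}\|\phi\|_{l^1}$ with $\rho = \rho_{+} = \tfrac13$ when $\alpha \in (1,\infty)$ and $\rho = \rho_{-} = \tfrac12$ when $\alpha \in (0,1]$. Since $\langle \tau\rangle^{-\rho} = (1+\tau^2)^{-\rho/2} \le |\tau|^{-\rho}$, the group property yields the untruncated decay estimate $\| U(t)U(s)^{*} g\|_{l^\infty} \le C|t-s|^{-\rho}\|g\|_{l^1}$ for all $t \neq s$, which is exactly the hypothesis of \cite[Theorem 1.2]{keel1998endpoint} with decay exponent $\sigma = \rho$. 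Invoking that theorem produces both the homogeneous estimate and the retarded inhomogeneous estimate for every pair satisfying the sharp Keel--Tao admissibility $\tfrac1q + \tfrac{\rho}{r} = \tfrac{\rho}{2}$; scaling by $1/\rho$ this reads $\tfrac3q + \tfrac1r = \tfrac12$ when $\rho = \tfrac13$ and $\tfrac2q + \tfrac1r = \tfrac12$ when $\rho = \tfrac12$, i.e. the boundary of the stated $\rho_{\pm}$-admissible region. Because $\rho \in \{\tfrac13,\tfrac12\}$ is never equal to $1$, the degenerate Keel--Tao endpoint $(q,r,\sigma)=(2,\infty,1)$ never arises, and indeed every admissible pair has $q \ge 2/\rho > 2$, so the full line including $r=\infty$ is available.

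Finally I would pass from the sharp line to the stated inequality region using the nesting of sequence spaces $l^{r_0}(\mathbb{Z}) \hookrightarrow l^{r}(\mathbb{Z})$ for $2 \le r_0 \le r$, with $\| f\|_{l^r} \le \| f\|_{l^{r_0}}$: given $(q,r)$ with $\tfrac3q+\tfrac1r \le \tfrac12$ (resp. $\tfrac2q+\tfrac1r\le\tfrac12$), the admissibility forces $q \ge 6$ (resp. $q \ge 4$), so the sharp partner $(q,r_0)$ defined by $\tfrac1{r_0}=\tfrac12-\tfrac3q$ (resp. $\tfrac12-\tfrac2q$) obeys $2 \le r_0 \le r$, and the spatial embedding upgrades the sharp estimate at $(q,r_0)$ to the claimed one at $(q,r)$; the inhomogeneous estimate is treated identically on the $(q,r)$ and the dual $(\tilde q,\tilde r)$ sides. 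I expect the only genuine subtleties to be conventional rather than analytic: reconciling the Keel--Tao normalization of $\sigma$-admissibility with the paper's scaling of $(q,r)$, confirming that the $\langle \tau\rangle^{-\rho}$ bound yields the pure-power untruncated decay demanded by the abstract theorem, and checking that the $r$-nesting step closes precisely on the stated region. All the hard oscillatory-integral analysis is already contained in \Cref{p5}, so no further stationary-phase work is needed here.
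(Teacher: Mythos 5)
Your proposal is correct and follows exactly the route the paper intends: the paper offers no argument beyond ``apply \cite[Theorem 1.2]{keel1998endpoint}'' to the dispersive bound of \Cref{p5}, and your write-up is precisely that application, with the hypotheses (unitarity of $U(t)$, untruncated $|t-s|^{-\rho}$ decay via $\langle \tau\rangle^{-\rho}\le|\tau|^{-\rho}$), the rescaling of Keel--Tao admissibility to the paper's $\rho_\pm$-admissible regions, and the $l^{r_0}\hookrightarrow l^r$ nesting step for non-sharp pairs all checked correctly.
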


Assuming small data, the linear dispersive estimates extend to the nonlinear regime under certain technical hypotheses. The conditions below are most likely not sharp.
\begin{proposition}[Nonlinear dispersive decay]\label{nonlinear_decay}
Suppose that
\begin{equation*}
p > 
\begin{cases}
            3
            & ,\ \alpha \in (0,1]\\
            \max(4,1+4\alpha^\prime) & ,\ \alpha \in (1,\infty)
\end{cases}
;\qquad q \in
\begin{cases}
            (\frac{2(p+1)}{p-1},p+1]
            & ,\ \alpha \in (0,1]\\
            (\frac{2(p+1)}{p-2},p+1] & ,\ \alpha \in (1,\infty).
    \end{cases}    
\end{equation*}
Then,
\begin{equation}\label{nonlinear_estimate}
    \| u(t) \|_{l^q} \leq C \langle t \rangle^{-(\rho_{\pm})\frac{(q-2)}{q}} \| \phi \|_{l^{q^\prime}},
\end{equation}
whenever $\| \phi \|_{l^{q^\prime}} \ll_{\alpha,p,q} 1$ where $C(\alpha,p,q)>0$ depends on the constants of the Strichartz estimates \eqref{strichartz}.
\end{proposition}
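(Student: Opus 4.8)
The plan is to run a small-data contraction/bootstrap on the Duhamel representation of \eqref{main_model},
\[
u(t) = U(t)\phi + i\int_0^t U(t-\tau)\,|u(\tau)|^{p-1}u(\tau)\,d\tau,\qquad U(t)=e^{-iLt},
\]
with $U(t)$ the linear propagator from \eqref{kt}, combining the pointwise-in-time decay of \Cref{p5} with the space-time integrability of the Strichartz estimates \eqref{strichartz}. First I would upgrade \Cref{p5} by interpolating the endpoint bound $\|U(t)\|_{l^1\to l^\infty}\lesssim\langle t\rangle^{-\rho_\pm}$ against the $l^2$-unitarity of $U(t)$; Riesz--Thorin with parameter $1-\tfrac2q$ gives $\|U(t)g\|_{l^q}\lesssim \langle t\rangle^{-\rho_\pm\frac{q-2}{q}}\|g\|_{l^{q'}}$ for $2\le q\le\infty$, which is exactly the weight appearing in \eqref{nonlinear_estimate}. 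I would then work in the combined norm built from a fixed $\rho_\pm$-admissible Strichartz norm $\|u\|_{L^{q_0}_t l^{r_0}}$, the conserved mass $\|u\|_{L^\infty_t l^2}$, and the weighted decay norm $\|u\|_Y:=\sup_t\langle t\rangle^{\rho_\pm\frac{q-2}{q}}\|u(t)\|_{l^q}$ (controlled, for the bootstrap to have the fast factor available, up to the endpoint $q=\infty$).

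The Strichartz piece furnishes global existence and the uniform bound $\|u\|_{L^{q_0}_t l^{r_0}}+\|u\|_{L^\infty_t l^2}\lesssim\|\phi\|_{l^{q'}}$ for small data, after estimating the dual norm of $|u|^{p-1}u$ by Hölder in space-time; this is where mass-supercriticality enters, forcing $p>1+4\alpha'$ (with $d=1$ here, cf.\ \Cref{p1}) so that an admissible pair of the right scaling exists, and it accounts for that term in $\max(4,1+4\alpha')$. For the decay piece I would insert the interpolated dispersive bound into Duhamel and split the time integral at $t/2$. On $[0,t/2]$ one factors out $\langle t\rangle^{-\rho_\pm\frac{q-2}{q}}$ and is reduced to the global bound $\||u|^{p-1}u\|_{L^1_\tau l^{q'}}=\|u\|_{L^p_\tau l^{pq'}}^p\lesssim\|\phi\|^p$, obtained by interpolating $l^{pq'}$ between the decaying $l^q$-norm and $l^{r_0}$ and applying Hölder in time against the Strichartz norm; on $[t/2,t]$ one uses $\langle\tau\rangle\simeq\langle t\rangle$ together with the fast $l^\infty$-decay $\|u(\tau)\|_{l^\infty}\lesssim\langle\tau\rangle^{-\rho_\pm}$ coming from the $q=\infty$ endpoint, so that the near-diagonal kernel $\langle t-\tau\rangle^{-\rho_\pm\frac{q-2}{q}}$ integrates against $\langle\tau\rangle^{-(p-1)\rho_\pm}$. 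Both contributions close to size $\langle t\rangle^{-\rho_\pm\frac{q-2}{q}}$ precisely when the slowly varying factor $\|u\|_{l^\infty}^{p-1}$ is time-integrable, i.e.\ $(p-1)\rho_\pm>1$, which is $p>3$ for $\rho_\pm=\tfrac12$ and $p>4$ for $\rho_\pm=\tfrac13$, reproducing the dichotomy in the hypothesis on $p$. The admissibility constraints matching the remaining factors to $\|u\|_{L^{q_0}_t l^{r_0}}$, together with the embedding $l^{pq'}\hookrightarrow l^q$ (valid iff $q\le p+1$) used to reconstitute $\|u\|_{l^q}$ from the nonlinearity, pin down the window $q\in\big(\tfrac{2(p+1)}{p-1},p+1\big]$, after which \eqref{nonlinear_estimate} follows by absorbing the nonlinear term for $\|\phi\|_{l^{q'}}\ll1$.

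The hard part will be reconciling the pointwise-in-time decay norm with the space-time Strichartz norms inside the nonlinear estimate: one must distribute the $p$ copies of $u$ in $|u|^{p-1}u$ so that exactly enough carry the fast decay $\langle\tau\rangle^{-\rho_\pm}$ to render the time-convolution kernel integrable, while the low-exponent factor (essentially an $l^{q'}$-mass with $q'<2$, which is neither conserved nor improved by dispersion) is controlled only through the Strichartz norm via Hölder in time rather than pointwise. Because $\rho_\pm$ jumps at $\alpha=1$, the admissible pairs and the entire exponent bookkeeping differ in the two regimes and must be carried out separately; the borderline time-integrability $(p-1)\rho_\pm=1$ and the lower endpoint $q=\tfrac{2(p+1)}{p-1}$ are the delicate points that dictate the strict inequalities, and verifying that the multi-parameter Hölder/interpolation choices can be made consistently across the stated window of $(p,q)$ is the principal technical burden.
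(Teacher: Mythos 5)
Your skeleton --- the Duhamel map \eqref{duhamel}, Riesz--Thorin interpolation of \eqref{dispersive_estimate} against $l^2$-unitarity to get $\|U(t)g\|_{l^q}\lesssim\langle t\rangle^{-\rho_\pm\frac{q-2}{q}}\|g\|_{l^{q^\prime}}$, and a small-data fixed point in a decay-weighted space --- is exactly the paper's. The divergence is in how you estimate the nonlinearity, and that is where your argument has genuine gaps which the paper's simpler bookkeeping avoids. The paper never leaves the single norm $\sup_t\langle t\rangle^{\rho_\pm\frac{q-2}{q}}\|u(t)\|_{l^q}$: it uses the exact identity $\||u|^{p-1}u(t^\prime)\|_{l^{q^\prime}}=\|u(t^\prime)\|_{l^{pq^\prime}}^p$ together with the embedding $\|u(t^\prime)\|_{l^{pq^\prime}}\le\|u(t^\prime)\|_{l^{q}}$ (valid precisely because $q\le p+1$, i.e., $pq^\prime\ge q$), so that \emph{all} $p$ factors carry the bootstrap decay $\langle t^\prime\rangle^{-\rho_\pm\frac{q-2}{q}}$, and then closes with the convolution bound $\int_0^t\langle t-t^\prime\rangle^{-\rho_\pm\frac{q-2}{q}}\langle t^\prime\rangle^{-p\rho_\pm\frac{q-2}{q}}dt^\prime\lesssim\langle t\rangle^{-\rho_\pm\frac{q-2}{q}}$, the lower bound on $q$ being what the paper assigns to that step. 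No Strichartz norms, no conserved-mass norm, no $t/2$ splitting into separate regimes, and no $l^\infty$-endpoint decay appear in the paper's proof of this proposition (the estimates \eqref{strichartz} are only used for \Cref{scattering2}); note also that $pq^\prime\ge q>2$, so the solution is never measured in a space below $l^2$.

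The two gaps in your version are the following. First, the ``fast'' decay $\|u(\tau)\|_{l^\infty}\lesssim\langle\tau\rangle^{-\rho_\pm}$ is the $(\infty,1)$ endpoint of the interpolated dispersive estimate, so bootstrapping it requires $\phi\in l^1$ and control of $\||u|^{p-1}u\|_{l^1}$; the hypothesis provides only $\|\phi\|_{l^{q^\prime}}\ll 1$, and the only $l^\infty$ decay your norms actually yield is $\|u(\tau)\|_{l^\infty}\le\|u(\tau)\|_{l^q}\lesssim\langle\tau\rangle^{-\rho_\pm\frac{q-2}{q}}$, strictly slower than $\langle\tau\rangle^{-\rho_\pm}$. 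Hence the condition $(p-1)\rho_\pm>1$ that you isolate is not one your scheme can exploit. Second, and more fundamentally, peeling off $\|u\|_{l^\infty}^{p-1}$ leaves the factor $\|u(\tau)\|_{l^{q^\prime}}$ with $q^\prime<2$, and nothing in your mixed norm controls it: Strichartz and mass norms all live in $l^r$ with $r\ge 2$, the embeddings between sequence spaces go the wrong way, and sub-$l^2$ norms are not propagated by the flow. H\"older in time cannot manufacture spatial summability, so what you flag as the ``principal technical burden'' is an actual obstruction rather than a technicality. Both problems disappear if you keep the nonlinearity intact as $\|u\|_{l^{pq^\prime}}^p$ and use $\|u\|_{l^{pq^\prime}}\le \|u\|_{l^q}$, i.e., the paper's route; with that bookkeeping your $[0,t/2]$ piece also closes inside the single decay norm, and the Strichartz component of your space becomes unnecessary.
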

\begin{proof}
The proof is in the spirit of \cite[Theorem 7]{stefanov2005asymptotic} where an a priori bound is obtained to run a fixed point argument. We only give a sketch of proof for $\alpha \in (0,1]$ to illustrate the source of our technical hypotheses.

Let $X = \{(u_n(t)): \| u(t) \|_{l^q} \lesssim \langle t \rangle^{-(\rho_{-})\frac{(q-2)}{q}} \| \phi \|_{l^{q^\prime}}\}$ and 
\begin{equation}\label{duhamel}
\Gamma u := U(t) \phi + i \int_0^t U(t-t^\prime) \left(|u(t^\prime)|^{p-1}u(t^\prime)\right)dt^\prime.    
\end{equation}
To show the existence of a fixed point in $X$, and therefore $u = \Gamma u$, observe that the interpolation of $U(t)$ for $l^1 \rightarrow l^\infty$ \eqref{dispersive_estimate} and $l^2 \rightarrow l^2$ (unitary action) yields \eqref{nonlinear_estimate} with $u(t)$ replaced by $U(t)\phi$ for all $q \in [2,\infty]$. The condition $q \leq p+1$ ensures $\| u(t^\prime) \|_{l^{pq^\prime}} \leq \| u(t^\prime) \|_{l^{q}}$, and the lower bound on $q$ yields the estimate $\int_0^t \langle t-t^\prime\rangle^{-\frac{q-2}{2q}} \langle t^\prime \rangle^{-\frac{p(q-2)}{2q}} \lesssim \langle t \rangle^{-\frac{q-2}{2q}}$. Taking $\| \phi \|_{l^{q^\prime}} \ll_{\alpha,p,q} 1$, an a priori estimate
\begin{equation*}
\begin{split}
\| \Gamma u \|_{l^q} &\lesssim \langle t \rangle^{-(\rho_{-})\frac{(q-2)}{q}} \| \phi \|_{l^{q^\prime}} + \int_0^t \langle t-t^\prime \rangle^{-(\rho_{-})\frac{(q-2)}{q}} \| u(t^\prime)\|_{l^{pq^\prime}}^p dt^\prime\\
&\lesssim \langle t \rangle^{-(\rho_{-})\frac{(q-2)}{q}} ( \| \phi \|_{l^{q^\prime}} + \| \phi \|_{l^{q^\prime}}^p) \lesssim \langle t \rangle^{-(\rho_{-})\frac{(q-2)}{q}} \| \phi \|_{l^{q^\prime}},
\end{split}
\end{equation*}
holds and the standard contraction argument yields the desired nonlinear decay.
\end{proof}

A stronger assumption on the power of nonlinearity yields a well-defined wave operator and global scattering for small data.
\begin{proposition}[Scattering]\label{scattering2}
Suppose $p \geq 5$ if $\alpha \in (0,1]$ and $p \geq 7$ if $\alpha \in (1,\infty)$. There exists $\delta > 0$ such that whenever $\| \phi \|_{l^2} < \delta$, there exists a unique asymptotic state $u_{\pm} \in l^2(\mathbb{Z})$ with 
$\| u_{\pm} \|_{l^2} = \| \phi \|_{l^2}$ such that
\begin{equation}\label{scattering}
 u(t) - U(t)u_{\pm} \xrightarrow[t\rightarrow \pm\infty]{} 0 \text{ in } l^2(\mathbb{Z}). 
\end{equation}
Furthermore the map $\phi \mapsto u_{\pm}$ defines homeomorphism on the $\delta$-neighborhood of the origin. 
\end{proposition}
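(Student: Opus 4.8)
The plan is to run a small-data fixed point argument in a single \emph{diagonal} Strichartz space and then read off scattering from the resulting global spacetime bound. The engine is the Strichartz estimate \eqref{strichartz} (itself a consequence of the linear dispersive decay \Cref{p5}). The key observation is that the pair $(q,r)=(p+1,p+1)$ is exactly $\rho_{\pm}$-admissible precisely at the stated thresholds: since $(p+1)'=\frac{p+1}{p}$, the diagonal H\"older exponents obey $p\,(p+1)'=p+1$ in both the spatial and temporal variables, so the nonlinearity closes on itself with no loss. Admissibility of $(p+1,p+1)$ reads $\frac{3}{p+1}\le\frac12$ for $\rho_-$ (i.e. $p\ge 5$, $\alpha\in(0,1]$) and $\frac{4}{p+1}\le\frac12$ for $\rho_+$ (i.e. $p\ge 7$, $\alpha\in(1,\infty)$), which is exactly the hypothesis; note $p+1\ge 6$ keeps us away from the forbidden Keel--Tao endpoint.

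First I would set up the solution map $\Gamma$ from \eqref{duhamel} on the complete metric space $X=\{u:\ \|u\|_{L^{p+1}_t l^{p+1}(\mathbb{R}\times\mathbb{Z})}\le 2C\delta\}$ with the metric induced by the $L^{p+1}_t l^{p+1}$ norm, where $C$ is the Strichartz constant in \eqref{strichartz}. Applying the homogeneous and inhomogeneous estimates with the single admissible pair $(p+1,p+1)$, together with the lattice identity $\||u|^{p-1}u\|_{l^{(p+1)'}}=\|u\|_{l^{p+1}}^{p}$ and $p\,(p+1)'=p+1$ in time, gives
\[
\|\Gamma u\|_{L^{p+1}_t l^{p+1}}\le C\|\phi\|_{l^2}+C\,\|u\|_{L^{p+1}_t l^{p+1}}^{p},
\]
and the analogous bound in $L^\infty_t l^2$. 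The elementary pointwise bound $\big||a|^{p-1}a-|b|^{p-1}b\big|\lesssim(|a|^{p-1}+|b|^{p-1})|a-b|$ with H\"older yields the matching difference estimate, so for $\delta$ small $\Gamma$ is a contraction on $X$. Its fixed point is the global solution, which coincides with the (globally well-posed, since $L$ is bounded on $l^2$ and $M$ is conserved) $l^2$-flow by uniqueness, and satisfies $\|u\|_{L^{p+1}_t l^{p+1}(\mathbb{R}\times\mathbb{Z})}\lesssim\delta$.

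Scattering then follows from a Cauchy argument. For $t_1<t_2$, the inhomogeneous estimate bounds $\|U(-t_2)u(t_2)-U(-t_1)u(t_1)\|_{l^2}$ by $\|\,|u|^{p-1}u\,\|_{L^{(p+1)'}_t([t_1,t_2])\,l^{(p+1)'}}=\|u\|_{L^{p+1}_t([t_1,t_2])\,l^{p+1}}^{p}$, which tends to $0$ as $t_1,t_2\to\pm\infty$ because $u\in L^{p+1}_t l^{p+1}$ globally. Hence $U(-t)u(t)$ converges in $l^2$ to a limit $u_\pm$, equivalently $\|u(t)-U(t)u_\pm\|_{l^2}\to 0$, and the norm equality $\|u_\pm\|_{l^2}=\|\phi\|_{l^2}$ is immediate from unitarity of $U(-t)$ and conservation of $M$. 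For the homeomorphism I would construct the inverse wave operator by solving the final-state problem $u=U(t)u_\pm-i\int_t^{\pm\infty}U(t-s)(|u|^{p-1}u)\,ds$ by the identical contraction, recovering $\phi=u(0)$ from small $u_\pm$; Lipschitz dependence on the data in both directions, again via \eqref{strichartz} and the difference estimate, gives continuity of $\phi\mapsto u_\pm$ and of its inverse.

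The argument is largely mechanical once the diagonal pair is identified, so the genuine point to verify carefully is global-in-time control: the fixed point space is taken over all of $\mathbb{R}$, and this is legitimate only because the nonlinear estimate above carries \emph{no} positive power of the time-interval length. This self-closing diagonal structure---made possible exactly by the admissibility of $(p+1,p+1)$ at the thresholds $p=5,7$---is the crux; it is what promotes the local contraction into a global bound and hence delivers scattering, and it is where the precise form of the hypotheses on $p$ (stronger than those in \Cref{nonlinear_decay}) is consumed.
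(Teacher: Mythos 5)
Your proof is correct, and it follows the same architecture as the paper's: a small-data contraction for the Duhamel map \eqref{duhamel} in a global Strichartz space, the Cauchy criterion on $U(-t)u(t)$ via the dual inhomogeneous estimate, a final-state contraction for surjectivity, and continuity from contractivity. The genuine difference is the working space, i.e.\ where the hypothesis on $p$ gets consumed. The paper (written out only for $\alpha\in(0,1]$) contracts in $X = C(\mathbb{R};l^2)\cap L_t^{4p/3}l^{2p/(p-3)}$, estimates the nonlinearity in the dual space $L_t^{4/3}l^1$ of the admissible pair $(4,\infty)$ via $\||u|^{p-1}u\|_{l^1}=\|u\|_{l^p}^p$, and then uses the lattice embedding $l^{2p/(p-3)}\hookrightarrow l^p$, which is valid iff $2p/(p-3)\le p$, i.e.\ iff $p\ge 5$; that embedding is where $p\ge5$ enters. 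You instead work in the diagonal space $L^{p+1}_t l^{p+1}$, where the nonlinearity self-closes exactly, $\||u|^{p-1}u\|_{L^{(p+1)'}_t l^{(p+1)'}}=\|u\|^p_{L^{p+1}_t l^{p+1}}$, and the hypotheses $p\ge5$ (resp.\ $p\ge7$) are consumed as the admissibility conditions $\tfrac{3}{p+1}\le\tfrac12$ (resp.\ $\tfrac{4}{p+1}\le\tfrac12$) of the diagonal pair itself. Your route is marginally cleaner: a single norm, no embedding step, and it treats $\alpha\in(0,1]$ and $\alpha\in(1,\infty)$ uniformly; moreover, your observation that smallness of $\|u_\pm\|_{l^2}$ lets the final-state contraction run on all of $[0,\infty)$ avoids the paper's two-step construction (contract on $[T,\infty)$ for $T\gg1$, then flow $u(T)$ backwards to recover $\phi$). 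Two cosmetic points: your aside about the ``forbidden Keel--Tao endpoint'' is vacuous here, since for decay rates $\rho_\pm<1$ the endpoint $q=2$ is not admissible at all, so no exclusion is needed; and to make sense of \eqref{scattering} pointwise in time you do need the $C_tl^2$ control you mention only in passing, which comes from running Strichartz with the pair $(\infty,2)$ after the $L^{p+1}_tl^{p+1}$ bound is closed.
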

\begin{proof}
The proof is standard in the literature, having developed the linear dispersive estimates. However we illustrate the main ideas for $\alpha \in (0,1]$ for the sake of completeness. Assume $t \in [0,\infty)$ without loss of generality.

A fixed point argument using $\Gamma: X \rightarrow X$ where $\Gamma$ is by \eqref{duhamel} and $X = C(\mathbb{R};l^2) \cap L_t^{\frac{4p}{3}}l^{\frac{2p}{p-3}}$ yields a global solution $u \in X$ with $\| \phi \|_{l^2} \ll 1$ where the implicit constant depends on $\alpha,p$ and the Strichartz constants. Since
\begin{equation*}
    \left|\left| \int_{t_1}^{t_2} U(-t^\prime) \left(|u(t^\prime)|^{p-1}u(t^\prime)\right)dt^\prime\right|\right|_{l^2} \lesssim \| u\|_{L_{t \in [t_1,t_2]}^{\frac{4p}{3}}l^p}^p \leq \| u\|_{L_{t \in [t_1,t_2]}^{\frac{4p}{3}}l^{\frac{2p}{p-3}}}^p \xrightarrow[t_1,t_2 \rightarrow \infty]{} 0,
\end{equation*}
there exists $u_+ = \lim\limits_{t \rightarrow \infty} U(-t)u(t)\in l^2$ such that \eqref{scattering} holds by reasoning as above. Injectivity follows from the uniqueness of limit.

To show surjectivity, let $u_+ \in l^2(\mathbb{Z})$ and define
\begin{equation*}
\tilde{\Gamma} u(t) := U(t) u_{+} - i \int_t^{\infty} U(t-t^\prime)\left(|u(t^\prime)|^{p-1}u(t^\prime)\right)dt^\prime,
\end{equation*}
and let $T \gg 1$ such that $\| U(t) u_+\|_{L_{t \in [T,\infty)}^{\frac{4p}{3}}l^{\frac{2p}{p-3}}} \ll 1$. Another fixed point argument using $\tilde{\Gamma}$ yields $u \in C([T,\infty);l^2) \cap L_t^{\frac{4p}{3}}l^{\frac{2p}{p-3}}$ such that its asymptotic state is $u_+$. Then flow $u(T)$ backwards in time to obtain $\phi = u(0)$. Continuity follows from the contractivity of $\Gamma,\tilde{\Gamma}$ and the Strichartz estimates.
\end{proof}

\section{Conclusion}

We analyzed excitation thresholds in multi-dimensional DNLS lattices with nonlocal interactions, deriving explicit conditions on how dimension $d$, nonlinearity $p$, and the nonlocal exponent $\alpha$ determine breather formation. Our results reveal that long-range interactions fundamentally alter excitation thresholds and dispersive decay, with a sharp transition at $\alpha = 1$. For $\alpha = 0$, the standard framework breaks down due to the lack of summability, requiring a novel logarithmic correction. This result highlights the robustness of excitation thresholds even in the extreme long-range regime. These findings have implications for nonlinear wave systems such as optical waveguides, Bose-Einstein condensates, and photonic lattices, where tuning $\alpha$ could enable precise control over localization and energy transport. Future work includes analyzing the spectral and dynamical stability of breather solutions, extending results to anisotropic and disordered lattices, and refining numerical methods for higher-dimensional settings.

\section*{\fontsize{12}{20}\selectfont Acknowledgement}
This work was partially supported by the United States Military Academy via the Faculty Research Funds. The author thanks A Aceves and anonymous reviewers for constructive comments. 
\appendix
\section{Appendix}\label{appendix}
\begin{proof}[Proof of \Cref{p5}]
Our proof is based on the Van der Corput Lemma (VCL) applied to $K_t$ in \eqref{kt}, and hence the higher-order derivatives of $\sigma$ are estimated from below. By unitarity, we have $\| U(t) \phi \|_{l^\infty} \leq \| \phi \|_{l^1}$, and therefore assume $|t| \geq 1$. By \eqref{derivative},
\begin{equation*}
    \sigma^{\prime\prime}(\xi) = \frac{1}{4}\left(\mathrm{Li}_{\alpha-1}(e^{i\xi}) + \mathrm{Li}_{\alpha-1}(e^{-i\xi})\right);\  \sigma^{\prime\prime\prime}(\xi) = \frac{i}{4}\left(\mathrm{Li}_{\alpha-2}(e^{i\xi}) - \mathrm{Li}_{\alpha-2}(e^{-i\xi})\right).    
\end{equation*}
For special cases, $\alpha \in \{1,2\}$, we show by direct computation. Let $\alpha = 1$. Since $\mathrm{Li}_{0}(z) = \frac{z}{1-z}$ for $z \neq 1$, it follows that $|\sigma^{\prime\prime}(\xi)| = \frac{1}{4}$ for $\xi \neq 0$. By VCL,
\begin{equation*}
    2\pi |(K_t)_n| = \left|\int_{-\pi}^{\pi} e^{it (\frac{n}{t}\xi - \sigma(\xi)t)}d\xi\right| \leq 2\delta + \left|\int_{[-\pi,\pi]\setminus (-\delta,\delta)} e^{it (\frac{n}{t}\xi - \sigma(\xi)t)}d\xi\right| \leq 2\delta + C |t|^{-\frac{1}{2}} \leq 2C |t|^{-\frac{1}{2}},
\end{equation*}
where $\delta \leq \frac{C|t|^{-1/2}}{2}$. For $\alpha = 2$ and $z \in \mathbb{C} \setminus [1,\infty)$, it can be shown that $\mathrm{Li}_1(z) = - \log(1-z)$ and
\begin{equation*}
\sigma^{\prime\prime}(\xi) = -\frac{1}{4} \left(\log(1-e^{i\xi})+\log(1-e^{-i\xi})\right),\ \sigma^{\prime\prime\prime}(\xi) = -\frac{1}{4} \cot(\frac{\xi}{2}).    
\end{equation*}
Near the origin, $|\sigma^{\prime\prime\prime}(\xi)| \gtrsim 1$, and near $\xi = \pm \pi$, we have $\sigma^{\prime\prime}(\xi) = -\frac{\log 2}{2} + \frac{|k \mp \pi|^2}{16} + O(|k \mp \pi|^3)$ by the Taylor expansion, and hence the decay of $O(|t|^{-\frac{1}{3}})$ by VCL.

For $\alpha > 2$, the integral representation \eqref{bose_einstein} is applied. If $|\xi| \in (\frac{\pi}{2},\pi]$, then
\begin{equation*}
|\mathrm{Li}_{\alpha-1}(e^{i\xi})+\mathrm{Li}_{\alpha-1}(e^{-i\xi})| = \frac{2}{\Gamma(\alpha-1)}\int_0^\infty \frac{t^{\alpha-2}(1-e^t \cos \xi)}{e^{2t}-2 e^t \cos \xi + 1}dt\geq \frac{1}{2\Gamma(\alpha-1)}\int_0^\infty t^{\alpha-2}e^{-2t} dt = 2^{-\alpha},
\end{equation*}
since $e^{2t}-2 e^t \cos \xi + 1 \leq 4 e^{2t}$ and $1-e^t \cos \xi \geq 1$.

Let $\xi_0 \ll 1$. For $|\xi| \in (0,\xi_0]$, let $t_0 = -\log \cos \xi > 0$. For $t \in [0,t_0]$, the first-order Taylor expansion yields
\begin{equation*}
\frac{1-e^t \cos \xi}{e^{2t}-2 e^t \cos \xi + 1} \leq \frac{(-\cos\xi) t + 1-\cos\xi}{2 - 2\cos\xi}
\end{equation*}
to obtain
\begin{align}\label{nogada}
&\frac{\Gamma(\alpha-1)}{2}|\mathrm{Li}_{\alpha-1}(e^{i\xi})+\mathrm{Li}_{\alpha-1}(e^{-i\xi})| \geq \int_{t_0}^\infty \frac{t^{\alpha-2}(e^t \cos \xi - 1)}{e^{2t}-2 e^t \cos \xi + 1}dt - \int_0^{t_0} \frac{t^{\alpha-2}( 1-e^t \cos \xi)}{e^{2t}-2 e^t \cos \xi + 1}dt\nonumber\\
&\geq \int_{t_0}^\infty \frac{t^{\alpha-2}(e^t \cos \xi - 1)}{2e^{2t}}dt - \int_{0}^{t_0} \frac{t^{\alpha-2}((-\cos\xi) t + 1-\cos\xi)}{2 - 2\cos\xi}dt\nonumber\\
&=\frac{1}{2}(-\log \cos \xi)^{\alpha-1}\left(\frac{-\cos \xi \cdot\log \cos \xi}{\alpha (1-\cos\xi)} - \frac{1}{\alpha-1} + \int_1^\infty t^{\alpha-2}\left((\cos \xi)^{t+1} - (\cos\xi)^{2t}\right)dt\right),
\end{align}
where \eqref{nogada} from its previous step could be assisted with softwares such as Maple or Mathematica. For $t < \epsilon_0 \xi^{-2}$ for some $0 < \epsilon_0 \ll 1$, we have
\begin{equation*}
    \int_1^\infty t^{\alpha-2}\left((\cos \xi)^{t+1} - (\cos\xi)^{2t}\right)dt \geq \frac{\xi^2}{4} \int_1^{\epsilon_0 \xi^{-2}} t^{\alpha-2} (t-1)dt \gtrsim \frac{\xi^{2-2\alpha}}{\alpha (\alpha-1)},
\end{equation*}
and therefore
\begin{equation*}
    \eqref{nogada} \gtrsim (-\log \cos \xi)^{\alpha-1} \xi^{2-2\alpha} \gtrsim 1,
\end{equation*}
where it can be shown by direct computation that $\frac{-\cos \xi \cdot\log \cos \xi}{\alpha (1-\cos\xi)} - \frac{1}{\alpha-1} \geq 0$ for any $0 \leq \xi < \frac{\pi}{2}$.

For $|\xi| \in (\xi_0,\frac{\pi}{2}]$,
\begin{equation*}
    |\mathrm{Li}_{\alpha-2}(e^{i\xi})-\mathrm{Li}_{\alpha-2}(e^{-i\xi})| = \frac{2 |\sin \xi|}{\Gamma(\alpha-2)}\int_0^\infty \frac{t^{\alpha-3} e^t}{e^{2t}-2 e^t \cos \xi + 1}dt \geq |\sin \xi_0| > 0,
\end{equation*}
by $e^{2t}-2 e^t \cos \xi + 1 \leq 2e^{2t}$. Hence $\max\limits_{\xi \neq 0}(|\sigma^{\prime\prime}(\xi)|,|\sigma^{\prime\prime\prime}(\xi)|) \gtrsim_{\alpha} 1$.

For $\alpha \in (1,2)$, although \eqref{bose_einstein} does not yield a convergent integral for $\sigma^{\prime\prime\prime}(\xi)$, the derivative relation \eqref{derivative} gives
\begin{equation*}
    4\sigma^{\prime\prime\prime}(\xi) = \frac{d}{d\xi}(\mathrm{Li}_{\alpha-1}(e^{i\xi}) + \mathrm{Li}_{\alpha-1}(e^{-i\xi})) = \frac{1}{\Gamma(\alpha-1)} \int_0^\infty t^{\alpha-2} \frac{d}{d\xi} \frac{2(e^t \cos\xi - 1)}{e^{2t}-2 e^t \cos \xi + 1}dt,
\end{equation*}
as long as $\xi \neq 0$. Similarly for $\alpha \in (0,1)$, $|\sigma^{\prime\prime}(\xi)|$ can be estimated from below by $\frac{d}{d\xi}(\mathrm{Li}_{\alpha}(e^{i\xi}) - \mathrm{Li}_{\alpha}(e^{-i\xi}))$. Then estimating the integrals as above ($\alpha > 2$), the proposition follows.
\end{proof}

\bibliographystyle{ieeetr}
\small
\bibliography{ref}

\end{document}